\newtheorem{thm}{Theorem}
\newtheorem{prop}[thm]{Proposition}
\newtheorem{lem}[thm]{Lemma}
\newtheorem{cor}[thm]{Corollary}
\theoremstyle{definition}
\newtheorem{rem}[thm]{Remark}
\numberwithin{equation}{section}
\numberwithin{thm}{section}
\newcommand{\NN}{\mathbb{N}}
\newcommand{\ZZ}{\mathbb{Z}}
\newcommand{\CC}{\mathbb{C}}
\newcommand{\RR}{\mathbb{R}}
\newcommand{\FF}{\mathbb{F}}
\renewcommand{\leq}{\leqslant}
\renewcommand{\geq}{\geqslant}
\newcommand{\ii}{\mathrm{i}}
\newcommand{\disc}{\mathscr{D}}
\newcommand{\app}[1]{\tilde{#1}}
\newcommand{\appdisc}{\app{\disc}}
\newcommand{\error}{\varepsilon}
\newcommand{\divides}{\mathrel{|}}
\DeclareMathOperator{\Mcost}{\mathsf{M}}
\DeclareMathOperator{\Mcostinc}{\overline{\mathsf{M}}}
\DeclareMathOperator{\Tcost}{\mathsf{T}}
\newcommand{\step}[1]{\medskip\textit{#1}}
\begin{document}

%    author one information
% \author[short version for running head]{name for top of paper}
\author{\sc David Harvey}
\address{School of Mathematics and Statistics, University of New South Wales, Sydney NSW 2052, Australia}
\curraddr{}
\email{d.harvey@unsw.edu.au}
\thanks{David Harvey was supported by ARC Future Fellowship grant FT160100219.}

%    author two information
\author{\sc Joris van der Hoeven}
\address{CNRS, Laboratoire d'informatique, \'Ecole polytechnique, 91128 Palaiseau, France}
\curraddr{}
\email{vdhoeven@lix.polytechnique.fr}
\thanks{Joris van der Hoeven was supported by an ERC-2023-ADG grant for the
ODELIX project (number 101142171).}

\subjclass[2020]{Primary 68Q17, Secondary 68W30}

\date{}
\dedicatory{}

\title[Multiplication is at least as hard as transposition]%
   {Integer multiplication is at least as hard as matrix transposition}

\begin{abstract}
   Working in the multitape Turing model,
   we show how to reduce the problem of matrix transposition
   to the problem of integer multiplication.
   If transposing an $n \times n$ binary matrix
   requires $\Omega(n^2 \log n)$ steps on a Turing machine,
   then our reduction implies that multiplying $n$-bit integers
   requires $\Omega(n \log n)$ steps.
   In other words, if matrix transposition is as hard as expected,
   then integer multiplication is also as hard as expected.
\end{abstract}

\maketitle

\section{Introduction}

We work throughout in the \emph{multitape Turing model}.
In this model, an ``algorithm'' is a Turing machine
with a fixed, finite number of one-dimensional tapes,
and the time complexity of an algorithm refers to the number of steps
executed by the machine.
For detailed definitions see for instance
\cite[\S1.6]{AHU-algorithms} or \cite[Ch.~2]{Pap-complexity}.

For any integer $n \geq 1$ we define
\[
   \lg n \coloneqq \max(\lceil \log_2 n \rceil, 1).
\]
For an integer $\ell \geq 0$ we write $\lg^{\circ \ell}$ for the
$\ell$-fold composition of $\lg$.
Since $1 \leq \lg n < n$ for all $n \geq 2$,
it makes sense to define the \emph{iterated logarithm} function,
\[
   \lg^* n \coloneqq \min \{ \ell \geq 0: \lg^{\circ \ell} n = 1\},
      \qquad n \geq 1.
\]

For functions $f(x)$ and $g(x)$ defined on some domain $D \subseteq \ZZ^n$,
the statement $f(x) = O(g(x))$ (respectively $f(x) = \Omega(g(x))$)
means that there exists a constant $C > 0$
such that $f(x) \leq C g(x)$ (respectively $f(x) \geq C g(x)$)
for all $x \in D$.
For functions $f(n)$ and $g(n)$ defined on $\ZZ^+$,
we write $f(n) = \omega(g(n))$ to mean that
$\lim_{n \to \infty} f(n)/g(n) = \infty$.

\subsection{Multiplication and transposition}
\label{sec:mult-and-tp}

In this paper we study the relationship between the complexity
of two fundamental problems in the Turing model:
\begin{itemize}
   \item
   \emph{Integer multiplication}.
   The input consists of an integer $m \geq 1$
   and two non-negative $m$-bit integers $x$ and $y$.
   The output is the product $xy$.
   All integers are assumed to be stored in the
   standard binary representation.

   A machine $M$ computing this function will be called a
   \emph{multiplication machine}.
   We write $\Mcost_M(m)$ for the worst-case running time of $M$
   on $m$-bit inputs.
   The subscript $M$ may be omitted if it is clear from context.
   
   \item
   \emph{Matrix transposition}.
   The input consists of integers $n_1, n_2, b, m \geq 1$
   such that $n_1 n_2 b \leq m$,
   and an $n_1 \times n_2$ matrix with $b$-bit entries.
   The matrix is encoded on the tape in row-major order,
   i.e., as an array of $b$-bit strings of length $n_1 n_2$,
   with the matrix entry $A_{i_1,i_2}$
   stored at index $n_2 i_1 + i_2$ in the array,
   for $0 \leq i_1 < n_1$ and $0 \leq i_2 < n_2$.
   The output consists of the transpose of this matrix,
   i.e., the matrix entries must be rearranged
   into an $n_2 \times n_1$ array,
   with $A_{i_1,i_2} = (A^T)_{i_2,i_1}$ now appearing
   at position $n_1 i_2 + i_1$.
   Alternatively, one may think of this operation as switching from
   row-major to column-major representation.
   
   (The parameter $m$ is included for technical reasons:
   our matrix transposition algorithms will often need to ``know''
   what size integer multiplication problem is being targeted.
   One may think of $m$ as an upper bound for the total bit size
   of the matrix, which is $n_1 n_2 b$.)
   
   A machine $T$ computing this function will be called a
   \emph{transposition machine}.
   We write $\Tcost_T(m; n_1, n_2, b)$ for the worst-case running time
   of $T$ for the given parameters.
\end{itemize}

The case of \emph{square binary matrices},
where $n_1 = n_2$ and $b = 1$,
is of particular interest.
A machine handling this case will be called a
\emph{binary transposition machine}
(for brevity we omit ``square'').
It is convenient to describe the input size by the single parameter $m$,
so we assume that such a machine takes as input an integer $m \geq 1$
and an $n \times n$ binary matrix,
where $n \coloneqq \lfloor m^{1/2} \rfloor$.
We denote its worst-case running time by $\Tcost_T(m)$.

The following upper bounds are known for the complexity of these problems.
For integer multiplication,
the present authors
%\todo{reword for anonymous submission}
recently described a (somewhat complicated)
algorithm $M^*$ \cite{HvdH-nlogn} that achieves
\begin{equation}
   \label{eq:Mstar-bound}
   \Mcost_{M^*}(m) = O(m \lg m).
\end{equation}
For matrix transposition,
there is a simple folklore algorithm $T^*$ achieving
\begin{equation}
   \label{eq:Tstar-bound}
   \Tcost_{T^*}(m; n_1, n_2, b)
      = O(n_1 n_2 b \lg \min(n_1, n_2))
      = O(m \lg \min(n_1, n_2)).
\end{equation}
In particular, for the binary case this algorithm achieves
\[
   \Tcost(m) = O(m \lg m).
\]
Briefly, the algorithm operates as follows.
If $n_1 \leq n_2$,
it first recursively transposes the ``top'' and ``bottom''
halves of the matrix,
and then interleaves the results together appropriately.
The $n_1 > n_2$ case may be handled by running the same
algorithm in reverse.
(We do not know the precise history of this algorithm.
The basic idea goes back at least
to~\cite{Stone-shuffle,Floyd-permuting,Paul-transpose},
although these papers work in somewhat different complexity models.
For a more modern presentation, see for example \cite[Lem.~18]{BGS-recurrences}.)

What is believed to be the true complexity of integer multiplication
and matrix transposition?
For both problems,
non-trivial lower bounds are known
only in certain very restricted models of computation
(see Section \ref{sec:known-lower-bounds}).
If we allow the full power of the multitape Turing machine,
then unfortunately no lower bounds are known beyond the trivial linear bounds
$\Mcost(m) = \Omega(m)$ and $\Tcost(m; n_1, n_2, b) = \Omega(n_1 n_2 b)$.
(These lower bounds must hold because the machine needs enough time
to write its output.)
It seems reasonable to guess that for both problems,
the upper bounds mentioned above are sharp,
i.e., we suspect that for any multiplication machine $M$,
\begin{equation}
   \label{eq:mul-lower-bound}
   \Mcost_M(m) = \Omega(m \lg m),
\end{equation}
and for any transposition machine $T$,
\[
   \Tcost_T(m; n_1, n_2, b) = \Omega(n_1 n_2 b \lg \min(n_1, n_2)).
\]
In particular, for any binary transposition machine $T$, we expect that
\[
   \Tcost_T(m) = \Omega(m \lg m).
\]
The lower bound \eqref{eq:mul-lower-bound}
was already suggested by Sch\"onhage and Strassen
in their famous paper \cite{SS-multiply}.
We are not aware of any officially conjectured lower bound for transposition
in the literature, although the absence of non-trivial lower bounds
in the Turing model for transposition and other basic operations
has attracted some attention \cite{Reg-superlinear}.

\begin{rem}
   In a statement of the form \eqref{eq:mul-lower-bound},
   the implied big-$\Omega$ constant depends on the machine $M$.
   This is unavoidable, because the \emph{linear speedup theorem}
   \cite[Thm.~2.2]{Pap-complexity}
   states that for any Turing machine $A$ running in time $f(m)$
   on input of size $m$,
   and for any constant $C > 1$,
   we can construct another Turing machine $A'$ that solves the same problem
   as $A$ in time at most $f(m)/C + O(m)$.
   (The new machine simulates the original machine,
   executing several steps of the original machine in each step,
   by using a packed encoding of the original alphabet.)
\end{rem}

\subsection{Summary of main results}
\label{sec:summary}

The aim of this paper is to present various reductions from
matrix transposition to integer multiplication.
We state here some sample consequences of these reductions,
concentrating on the binary case for simplicity.
The bottom line is that
\emph{lower bounds for matrix transposition imply
lower bounds for integer multiplication}.
The proofs of all results in this section
(and some more general results)
will be given in Section \ref{sec:proofs}.

First, we have the following striking relationship between
the lower bound conjectures for multiplication and transposition
mentioned earlier.
\begin{thm}
   \label{thm:main-implication}
   If the conjectured lower bound
   \[
      \Tcost_T(m) = \Omega(m \lg m)
   \]
   holds for every binary transposition machine $T$,
   then the conjectured lower bound
   \[
      \Mcost_M(m) = \Omega(m \lg m)
   \]
   holds for every multiplication machine $M$.
\end{thm}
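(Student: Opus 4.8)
The plan is to argue by contraposition: given a multiplication machine $M$ violating \eqref{eq:mul-lower-bound}, I will exhibit a binary transposition machine $T$ violating $\Tcost_T(m) = \Omega(m\lg m)$. The vehicle is a reduction producing, for every multiplication machine $M$, a binary transposition machine $T = T_M$ and a constant $C = C_M$ with
\[
   \Tcost_T(m) \leq C \cdot \Mcost_M(Cm) + C \cdot m \cdot \lambda(m), \qquad \lambda(m) = o(\lg m),
\]
where the multiplication machine is invoked (a bounded number of times) only on inputs of \emph{linear} size $O(m)$. Granting this, the theorem follows quickly: if $\Tcost_T(m) \geq c\,m\lg m$ for some $c > 0$, then $C\Mcost_M(Cm) \geq c\,m\lg m - Cm\lambda(m)$, which is $\Omega(m\lg m)$ since $m\lambda(m) = o(m\lg m)$; writing $k = Cm$ this reads $\Mcost_M(k) = \Omega(k\lg k)$ along a sequence of sizes with bounded ratios, and a routine zero-padding argument (the reason, presumably, for the paper's $\Mcostinc$ notation) upgrades this to all $k$. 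The linearity of the multiplication size is essential: a reduction that lost even a single logarithmic factor, calling $M$ on inputs of size $\Theta(m\lg m)$, would prove only the trivial $\Mcost_M(k) = \Omega(k)$.

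To build the reduction I would decompose an $n\times n$ binary transposition ($n = \lfloor m^{1/2}\rfloor$) by blocking: writing $n = ab$ with $b$ of size roughly $\lg m$ and $a = n/b$, the matrix is an $a\times a$ array of $b\times b$ binary blocks, and its transpose is obtained by transposing each block internally and then performing a \emph{coarse} transposition of the resulting $a\times a$ array of $b^2$-bit blocks. The internal transpositions, and the shuffles needed to convert between the row-major layout and the block layout, are applications of the folklore routine $T^*$ from \eqref{eq:Tstar-bound} to sub-problems whose relevant dimensions are of size $\mathrm{poly}(\lg m)$; these cost $O(m\cdot\mathrm{poly}(\lg\lg m)) = o(m\lg m)$ altogether. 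Everything therefore reduces to the coarse transposition of an $a\times a$ array of wide blocks, which must be realized with a bounded number of multiplications of size $O(m)$ plus $o(m\lg m)$ further work. The parameters are chosen so that each block, already padded to width $\Theta((\lg m)^2)$, can comfortably absorb $O(\lg a) = O(\lg m)$ extra padding bits --- enough to hold the $O(a)$-fold integer sums that multiplication necessarily produces --- without the total bit-size, hence the size of the multiplication, exceeding $O(m)$.

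The coarse transposition is then performed by feeding a padded, cheaply-permuted encoding of the block array to the multiplication machine, using the product to evaluate a suitable convolution (or sliding-window sum) of the block sequence, and reassembling the transposed layout from the result using only operations that cost $o(m\lg m)$ here: reversals, cyclic shifts, $O(1)$-block permutations, transpositions of $\mathrm{poly}(\lg m)$-dimensional matrices via $T^*$, and integer additions. The padding guarantees that none of these sums overflows a block boundary, so the individual blocks can be recovered from the product.

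The main obstacle --- and I expect it is the technical core of the paper --- is the coarse-transposition gadget itself. Integer multiplication computes a convolution, which is translation-invariant, whereas transposition is not, so no single multiplication, not even one conjugated by arbitrary permutations of its inputs and outputs, can transpose; the naive attempt (multiplying the padded array by a sparse geometric-progression constant) deposits spurious superpositions at precisely the output positions one wants to read. Arranging the encoding and the few multiplications so that the unavoidable collisions of the convolution either fall in positions that are never read or cancel against an auxiliary product --- \emph{while keeping the total multiplication size linear in $m$ and all remaining work $o(m\lg m)$} --- is the delicate part. Once that is in place, the blocked decomposition, the accounting of the auxiliary rearrangements, and the asymptotic transfer of the lower bound are all routine.
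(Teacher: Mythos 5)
Your outer scaffolding matches the paper's quite closely: argue contrapositively, block the $n\times n$ matrix so that the ``inner'' transpositions have polylogarithmic dimensions and can be handled by the folklore algorithm in time $o(m\lg m)$, and realise the remaining ``coarse'' transpositions of wide entries by calls to $M$ on inputs of size $O(m)$ (the paper does exactly this in Section 6, with inner size $s=2^{\lg\lg\max(n_1,n_2)}$, and it disposes of the $\Mcost_M(Cm)$-versus-$\Mcost_M(m)$ issue not by zero-padding but by chunking each size-$Cm$ call into $O(1)$ calls of size exactly $m$, which sidesteps any monotonicity assumption on $\Mcost_M$; the $\Mcostinc$ notation is used only for the later theorems). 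But the proposal has a genuine gap, and you name it yourself: the coarse-transposition gadget --- the actual mechanism by which an integer multiplication rearranges data into transposed order --- is left as an acknowledged obstacle, and without it there is no proof. This gadget is the technical core of the paper (Sections 2--4), and it is not of the form you are searching for (a single convolution conjugated by cheap permutations, with collisions cancelled or ignored); indeed your observation that such a scheme is blocked by translation-invariance is essentially why the paper does something structurally different.

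The missing idea is a DFT round-trip computed by two \emph{different} algorithms. The $b$-bit entries are read as fixed-point complex numbers; the forward DFT of length $n_1n_2$ is computed in natural order via Bluestein's trick, whose convolution is evaluated by Kronecker substitution as one integer product of linear size; the inverse DFT is then computed not the same way but via the Cooley--Tukey splitting $n=n_1n_2$: length-$n_2$ transforms along slices (done for all slices simultaneously inside a single Kronecker-packed multiplication, so no reordering is needed), twiddle-factor multiplications, and length-$n_1$ transforms along slices. The index algebra of that factorization --- the same phenomenon as bit-reversed output in decimation FFTs --- makes the recovered coefficients appear with $j_1$ and $j_2$ swapped, i.e., in transposed order, and a fixed-point error analysis (precision $p=b+O(\lg n)$) lets one round back to the exact bits. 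This also explains why the coarse step must operate on entries of width $\gtrsim\lg(n_1n_2)$, matching your blocking, and it is the source of the paper's secondary term: the per-entry twiddle and Bluestein scalar multiplications cost about $(m/s)\Mcost(s)$, which is tamed by performing them with the known $O(s\lg s)$ multiplication algorithm, yielding the $O(m\lg\lg m)$ slack that your $\lambda(m)=o(\lg m)$ anticipates. Without supplying this (or some substitute) mechanism, the reduction --- and hence the theorem --- is not established.
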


In fact, we can prove something slightly stronger than this:
provided that $\Tcost(m) = \omega(m \lg \lg m)$,
then roughly speaking $\Mcost(m) = \Omega(\Tcost(m))$,
i.e., multiplying integers is at least as hard as transposing
matrices of the same bit size.
For a precise statement, see Corollary \ref{cor:main}.

If even the lower bound $\Tcost(m) = \omega(m \lg \lg m)$ is out of reach,
then we can still say something,
but our results are somewhat weaker.
For example, if $\Tcost(m) = \omega(m \lg \lg \lg m)$,
then we would like to be able to prove that
$\Mcost(m) = \omega(m \lg \lg \lg m)$,
but we cannot quite manage this.
Instead we have the following weaker statement.
\begin{thm}
   \label{thm:main-ell-implication}
   Fix $\ell \geq 2$.
   If no binary transposition machine achieves
   $\Tcost(m) = O(m \lg^{\circ \ell} m)$,
   then no multiplication machine achieves
   $\Mcost(m) = O(m \lg^{\circ \ell} m)$.
\end{thm}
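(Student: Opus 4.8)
The plan is to argue by contraposition. Suppose some multiplication machine $M$ satisfies $\Mcost_M(m) = O(m\lg^{\circ\ell} m)$; I will build a binary transposition machine $T$ with $\Tcost_T(m) = O(m\lg^{\circ\ell} m)$, which contradicts the hypothesis.

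The engine is a reduction --- of the sort developed in Section~\ref{sec:proofs} --- converting a fast multiplication machine into a fast transposition machine. The property I want is, roughly: from $M$ one can construct a machine that, in order to transpose a matrix of total bit size at most $m$ whose smaller dimension is $s$, performs a single multiplication of $O(m)$-bit integers together with $O(m)$ additional bookkeeping, and is thereby reduced to transposing (general, not necessarily binary or square) matrices of total bit size $O(m)$ whose smaller dimension is only $O(\lg s)$. Morally, one integer multiplication absorbs in a single stroke a whole block of the $\Theta(\lg s)$ ``interleaving rounds'' that the folklore algorithm $T^*$ carries out one at a time; this is why bringing in multiplication lowers the \emph{iterated-logarithm order} of the overhead, not just its magnitude. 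Iterating this reduction $\ell-1$ times, starting from an $n\times n$ binary matrix of bit size $m$ (so initially $s = n = \lfloor m^{1/2}\rfloor$), and then finishing with one call to the folklore algorithm $T^*$, produces a machine $T$ which, on an input of size $m$, performs $\ell-1$ multiplications of $O(m)$-bit integers, does $O(\ell m)$ bookkeeping, and makes one final $T^*$-call on matrices of total bit size $O(m)$ and smaller dimension $O(\lg^{\circ(\ell-1)} m)$.

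To bound $\Tcost_T(m)$ I use that $\ell$ is a fixed constant. By~\eqref{eq:Tstar-bound}, the concluding $T^*$-call costs $O\!\bigl(m\lg(\lg^{\circ(\ell-1)} m)\bigr) = O(m\lg^{\circ\ell} m)$. Since $\Mcost_M(m') = O(m'\lg^{\circ\ell} m')$ and $\lg^{\circ\ell}(cm) = O(\lg^{\circ\ell} m)$ for each fixed $c$ (a routine induction on $\ell$), each of the $\ell-1$ multiplications costs $O(m\lg^{\circ\ell} m)$; the constant factor $\ell-1$ and the $O(\ell m)$ bookkeeping are absorbed. Hence $\Tcost_T(m) = O(m\lg^{\circ\ell} m)$, contradicting the assumption that no binary transposition machine attains this bound, and the theorem follows. (One routine normalization is used tacitly: replace $M$ at the outset by an equivalent machine whose running time is nondecreasing in $m$ and satisfies $\Mcost_M(2m) = O(\Mcost_M(m))$, so that ``$\Mcost_M(O(m))$'' can be treated as ``$O(\Mcost_M(m))$''.)

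All the difficulty is in the reduction of the second paragraph, and I expect its central claim --- that one $O(m)$-bit multiplication drives the smaller matrix dimension from $s$ down to $O(\lg s)$ with all data movement kept linear in $m$ --- to be the main obstacle and the content of most of Section~\ref{sec:proofs}. It is also what forces the shape of the theorem: the $j$-th application of the reduction costs about $\Mcost_M(O(m))$, so the number of applications must not grow with $m$; this is why one gets a clean conclusion for each fixed $\ell$ but not, for instance, the implication $\Tcost(m)=\omega(m\lg^{\circ 3} m)\Rightarrow\Mcost(m)=\omega(m\lg^{\circ 3} m)$.
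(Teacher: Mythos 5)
Your outer skeleton is the same as the paper's: argue by contraposition, chain a constant number ($\ell-1$) of applications of a one-step dimension reduction that uses integer multiplication to drop the matrix dimension from $s$ to roughly $\lg s$, finish the innermost subproblems with the folklore machine $T^*$ at cost $O(m \lg^{\circ \ell} m)$, and add up. This is exactly how Proposition \ref{prop:main-ell-dyadic} is proved by iterating Proposition \ref{prop:one-step} and terminating with $T^*$. The gap is that everything you call ``the engine'' is asserted rather than proved, and asserted in a form that the paper's reduction does not deliver. First, the claim that one level costs ``a single multiplication of $O(m)$-bit integers together with $O(m)$ additional bookkeeping'' is not what the construction gives: each level costs $O\bigl(\Mcost_M(m^{(j)}) + \tfrac{m^{(j)}}{s^{(j)}} \Mcost_M(s^{(j)})\bigr)$ per subproblem, where the second term comes from the pointwise Bluestein and twiddle-factor multiplications at precision about $s^{(j)}$; it cannot be reduced to linear bookkeeping, and absorbing these small products into the single big multiplication is precisely the open question raised in the paper's final section. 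Second, for $b=1$ the hypothesis $\lg(n_1 n_2) < C b$ of Theorem \ref{thm:main-medium} fails badly, so making the reduction apply to a binary matrix at all requires the specific device of Lemma \ref{lem:decomposition} and Proposition \ref{prop:one-step}: cut into blocks of side $s = 2^{\lg\lg n}$ and reinterpret a length-$s$ run of $1$-bit entries as a single $s$-bit coefficient. Nothing in your sketch explains why an integer product can ``absorb a whole block of interleaving rounds''; that is the Bluestein--Kronecker/Cooley--Tukey content of Sections \ref{sec:bluestein-kronecker}--\ref{sec:main}, which is the real substance of the theorem.

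There is also a quantitative subtlety that your accounting hides and that the paper's proof of this statement must confront. At level $j \geq 2$ the reduction is applied not once but to $\tfrac{n_1 n_2}{n_1^{(j)} n_2^{(j)}}$ independent subproblems of bit size $m^{(j)} \ll m$, so the black-box machine $M$ is invoked many times on much smaller inputs; for an arbitrary $M$ one cannot convert a sum of $\Mcost_M$ values at small sizes into $O(\Mcost_M(m))$ without a monotonicity assumption, which is exactly why the paper introduces $\Mcostinc_M$ (together with Lemma \ref{lem:chunks}) and states Proposition \ref{prop:main-ell-dyadic} in terms of $\Mcostinc_M$. Your ``routine normalization'' (an equivalent machine with nondecreasing cost and $\Mcost_M(2m) = O(\Mcost_M(m))$) addresses calls at size $O(m)$ but not the many calls at sizes $m^{(j)}$ and $s^{(j)}$, and it is not clear such a machine can be constructed from a black-box $M$; fortunately it is also unnecessary, since in the contrapositive one can simply apply the hypothesis $\Mcost_M(k) = O(k \lg^{\circ \ell} k)$ at those smaller sizes, which is in effect what \eqref{eq:Mcostinc-bound} does. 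With that repair your outer argument does yield the theorem --- but only once the one-step reduction, with its correct cost statement, is actually supplied.
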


Finally, if we cannot even rule out the possibility that
$\Tcost(m) = O(m \lg^{\circ \ell} m)$ is feasible for every fixed $\ell$,
then we make our last stand with the following result.
\begin{thm}
   \label{thm:main-recursive-implication}
   Let $f(m)$ be a non-decreasing function.
   If no binary transposition machine achieves
   $\Tcost(m) = O(m f(m) \lg^* m)$,
   then no multiplication machine achieves $\Mcost(m) = O(m f(m))$.
\end{thm}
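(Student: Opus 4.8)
The plan is to prove the contrapositive: from a multiplication machine $M$ with $\Mcost_M(m) = O(m f(m))$, construct a binary transposition machine $T$ with $\Tcost_T(m) = O(m f(m) \lg^* m)$. The engine of the construction is a single ``dimension-collapsing'' reduction, which I would isolate as a lemma. Informally: using $M$ as a subroutine, one can transpose an arbitrary $n_1 \times n_2$ matrix with $b$-bit entries and $n_1 n_2 b \le m$ by making $O(1)$ calls to $M$ on integers of at most $m$ bits, spending $O(m)$ further Turing steps on copying and purely local rearrangement, and then recursively transposing a collection of auxiliary matrices whose total bit-size is $O(m)$ (with the implied constant independent of the level of recursion) but whose dimensions and entry-sizes are all $O(\lg m)$, so that each auxiliary matrix has bit-size only $(\lg m)^{O(1)}$. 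In effect, one multiplication performs in a single shot the work that the $\Theta(\lg \min(n_1,n_2))$ ``halving'' levels behind the folklore bound \eqref{eq:Tstar-bound} perform one level at a time, leaving behind only a residual transposition problem of logarithmic dimension. The role of the parameter $m$ attached to the transposition problem is exactly to let us insist that every multiplication call stays within $m$ bits.

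Granting this reduction, the theorem follows by iterating it all the way down. Start with the input $n \times n$ binary matrix, $n = \lfloor m^{1/2}\rfloor$, and budget $m$; one application produces auxiliary matrices of dimension $O(\lg m)$ and total bit-size $O(m)$. Transpose each of these by applying the reduction again, but --- crucially --- handling an auxiliary matrix of bit-size $m'$ with budget $m'$ rather than $m$, so that its dimension genuinely drops to $O(\lg m')$ at the next level. Thus the dimensions occurring at successive levels shrink like $m^{1/2}, \lg m, \lg\lg m, \dots$, reaching $O(1)$ after $O(\lg^* m)$ levels, at which point each remaining matrix is transposed in time linear in its bit-size by elementary use of the extra tapes. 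For the cost, note first that if the hypothesis $\Mcost_M(m) = O(m f(m))$ is satisfiable at all then $f(m) = \Omega(1)$, since $\Mcost_M(m) = \Omega(m)$ forces $f(m) \ge c > 0$ for all $m$. At every level the subproblems have total bit-size $O(m)$, each of size $m_k \le m$, so, since $f$ is non-decreasing, $\sum_k \Mcost_M(O(m_k)) \le \sum_k O(m_k f(m_k)) \le O(f(m)) \sum_k m_k = O(m f(m))$, while the local bookkeeping contributes $O(m)$ at that level. Summing over the $O(\lg^* m)$ levels gives $\Tcost_T(m) = O(m f(m)\lg^* m + m \lg^* m) = O(m f(m)\lg^* m)$, the second term being absorbed because $f = \Omega(1)$. (This accounting also explains the per-subproblem budgeting: without it, either the compounding of constants over $\lg^* m$ levels, or a call $\Mcost_M(cm)$ with $c > 1$ and a fast-growing $f$, would ruin the bound.)

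The main obstacle is the reduction lemma itself --- realising a data permutation (a transposition) through an integer multiplication. My plan for it: pad each matrix entry with enough zero bits that no carry can cross an entry boundary during the multiplications to come, and then multiply the resulting integer by a sparse constant of the form $\sum_j 2^{c_j}$, choosing the exponents $c_j$ so that the superimposed shifted copies of the (padded) input are aligned in such a way that, for each target column of the transpose, exactly one copy lays that column's entries down in a contiguous block of the output while every unwanted overlapping contribution falls into a position that is afterwards discarded. The carry padding is what forces the $O(\lg m)$ blow-up in entry size, and hence what prevents this step from producing the final answer outright and instead leaves the residual dimension-$O(\lg m)$ transpositions. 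Making this precise --- pinning down the $c_j$ and the padding exactly, proving the unwanted terms really are separated from the wanted ones, and identifying the residual problem as a disjoint union of dimension-$O(\lg m)$ transpositions --- together with the bookkeeping that keeps the total bit-size at each level to $O(m)$ and the recursion depth to $O(\lg^* m)$, is where essentially all the work lies; the hypothesis that $f$ is non-decreasing enters only through the elementary per-level summation above.
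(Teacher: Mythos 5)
Your top-level architecture is the same as the paper's: pass to the contrapositive, use one "big" reduction step to collapse the dimensions from $n$ to $O(\lg n)$, recurse with per-subproblem budgets so the dimensions shrink like $m^{1/2}, \lg m, \lg\lg m, \ldots$ over $O(\lg^* m)$ levels, and sum the per-level costs using the monotonicity of $f$ (the paper packages this last point as $\Mcostinc_M$). But the entire content of the theorem lives in the reduction lemma, and the mechanism you propose for it does not work. Multiplying the (padded) matrix, viewed as one integer, by a sparse constant $\sum_j 2^{c_j}$ produces at each output slot the \emph{sum} of all input entries that any of the shifts $c_j$ maps there. For an $n\times n$ matrix the shift needed by entry $(i_1,i_2)$ is $(i_2-i_1)(n-1)$ entry-widths, so the multiplier must contain $\Theta(n)$ shifts, and then a typical target slot receives the wanted entry \emph{plus} $\Theta(n)$ unwanted entries landing at exactly the same bit positions --- they do not "fall into positions that are afterwards discarded", because the wanted target positions are densely packed. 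Intra-entry zero-padding only prevents carries; it cannot separate these superposed terms. To separate them you would need either $\Omega(n)$ mask-and-multiply passes (as in the constant-size word tricks this idea comes from), costing $\Omega(n\,\Mcost(m))$, or a product spread over $\Omega(nm)$ bits; either way the "$O(1)$ calls to $M$ on $\leq m$-bit integers" budget is destroyed.

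There is a second budget problem specific to the binary case the theorem is actually about: you concede that carry-padding inflates each entry by $\Theta(\lg m)$ bits, but for $b=1$ and $m=n^2$ this makes the integer being multiplied have $\Theta(m\lg m)$ bits, again violating the requirement that every multiplication stay within the budget $m$ (and making the "total bit-size $O(m)$ per level" claim fail at the very first level). The paper gets around both obstacles with genuinely different machinery: the single large multiplication implements a \emph{convolution} of complex fixed-point data (Bluestein plus Kronecker substitution), the transposed order emerges from computing the inverse DFT by a Cooley--Tukey index splitting rather than from aligning shifted copies, the numerical error is controlled by an explicit fixed-point analysis, and --- crucially --- the core reduction only works when $\lg(n_1n_2)=O(b)$, which is why small-$b$ (in particular binary) inputs must first be regrouped into $s$-bit super-entries (reinterpreting a block of $b$-bit entries as an array of $s$-bit entries with $s\approx\lg\max(n_1,n_2)$) before the reduction is applied. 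As written, your proposal assumes precisely the statement that is hard, and the sketch offered for it is not repairable without these (or comparably substantial) additional ideas.
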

Theorem \ref{thm:main-recursive-implication} implies for example
that if no transposition machine achieves $\Tcost(m) = O(m \lg^* m)$,
then multiplication cannot be performed in linear time.
We would prefer to be able to prove the stronger statement that
if transposition cannot be carried out in linear time,
then neither can multiplication,
but our methods do not appear to be strong enough to prove this.
(See however Theorem \ref{thm:log-implication},
which gives a result of similar strength for matrices with
larger coefficients.)

Theorems \ref{thm:main-implication}--\ref{thm:main-recursive-implication}
suggest that instead of trying to prove lower bounds
for multiplication directly,
it might be preferable to attack the transposition problem.
After all, transposition merely moves data around;
it does not seem to involve any actual \emph{computation}.
Unfortunately, as mentioned earlier,
we still have no idea how to prove non-trivial lower bounds
for transposition in the Turing model.
We hope that the results of this paper will inspire
further research on this question.

\subsection{Overview of methods}
\label{sec:overview}

The core of our reduction runs as follows.
Suppose that we wish to transpose an
$n_1 \times n_2$ matrix with $b$-bit entries.
We interpret the matrix as a vector in $\CC^{n_1 n_2}$,
regarding its entries as $b$-bit fixed-point approximations
to complex numbers.
The idea is to compute the DFT (discrete Fourier transform)
of this vector using one algorithm,
and then compute the inverse DFT of the result using a different algorithm,
in such a way that the original data comes back in transposed order.

For the DFT in the forward direction,
we use Bluestein's method \cite{Blu-dft} to reduce the DFT
to a convolution problem,
and then Kronecker substitution \cite[Cor.~8.27]{vzGG-compalg3}
to reduce the convolution to a large integer multiplication problem.
The details of the Bluestein--Kronecker combination
are worked out in Section \ref{sec:bluestein-kronecker}.

For the inverse DFT,
we first use the Cooley--Tukey method \cite{CT-fft}
to decompose the transform into DFTs along the rows and columns of the matrix,
plus a collection of multiplications by ``twiddle factors''.
We then handle the row and column transforms
using the same Bluestein--Kronecker combination mentioned above.
This strategy is explained in Section \ref{sec:main}.
The reason that it succeeds in transposing the data
ultimately comes down to algebraic properties
of the Cooley--Tukey decomposition.
(This is not unrelated to the familiar fact that textbook
``decimation-in-time'' and ``decimation-in-frequency'' FFT algorithms
naturally produce their output in bit-reversed order.)

In all of the above steps,
it is crucial to carefully manage the ordering of data on the tape.
For instance, we rely heavily on the fact that Bluestein's trick
computes the DFT in the natural ordering;
it would not work to substitute Rader's algorithm \cite{Rad-prime},
which involves complicated permutations of the data.
Another example is that in the Cooley--Tukey decomposition,
we cannot afford to transpose the data so that the column transforms
operate on contiguous data;
instead we must show how to carry out the Kronecker substitution for
all columns in parallel, without reordering the data.

Throughout the computation, we must ensure that all numerical computations
are performed with sufficient accuracy to recover the correct (integer)
results at the end of the transposition.
This entails some straightforward but tedious numerical analysis.
To facilitate this,
in Section \ref{sec:fixed-point} we introduce a framework
for analysing numerical error in fixed-point computations,
along the lines of \cite[\S2]{HvdH-nlogn}.

The approach described so far has a serious limitation:
it only works when the coefficient size $b$ is sufficiently large
compared to the matrix dimensions $n_1$ and $n_2$.
This occurs for several reasons,
including coefficient growth in the Kronecker substitution step,
and the need to be able to represent complex $(n_1 n_2)$-th roots of unity
with sufficient accuracy.
As a partial workaround for this problem,
in Section~\ref{sec:decompose} we explain
how to decompose a given transposition problem
into a collection of transposition problems
of exponentially smaller dimension.
In Section \ref{sec:proofs} we show how to use this strategy
to handle transposition problems involving smaller coefficient sizes,
all the way down to single-bit coefficients,
and thereby finish the proofs of the results
stated in Section \ref{sec:summary}.

\begin{rem}
   %\todo{hide for anonymous submission?}
   We take this opportunity to correct a missing attribution
   in one of our earlier papers.
   The Bluestein--Kronecker combination was announced in \cite{HvdHL-mul},
   but in fact, as pointed out to us by Dan Bernstein
   (personal communication, 2022),
   it goes back at least to \cite[\S3]{Sch-numerical}.
\end{rem}

\subsection{Known lower bounds}
\label{sec:known-lower-bounds}

In the standard Turing model, virtually no non-trivial (i.e.~superlinear)
lower bounds are known for basic computational tasks such as
transposition, integer multiplication, FFTs, finding duplicate
elements in a list, sorting, and so on~\cite{Reg-superlinear}.
Interesting lower bounds have been proved for the Turing model
under additional restrictions and for various other complexity models.

Perhaps the most impressive lower bound result for multiplication
is the $\Omega(n \log n)$ proved for ``online'' (or relaxed)
integer multiplication~\cite{PFM-overlap,CA-minimum},
where we restrict the bits of the input and output
to be given and computed ``one by one''.
An almost matching upper bound $O(n \log n \, e^{O(\sqrt{\log \log n})})$
is also known for this problem~\cite{vdH:fastrelax}.
In the Boolean circuit model, a $\Omega(n \log n)$ lower bound
was recently obtained as well, conditional on an open conjecture
on network coding~\cite{afshani2019}.
For the related problem of FFT computations,
Morgenstern proved an $\Omega (n \log n)$ lower bound
for evaluating a complex DFT of length $n$
in a suitably restricted algebraic complexity model~\cite{Morgenstern73}.

Concerning the binary $n \times n$ matrix transposition problem,
a natural restriction is to consider Turing machines that
can only move data, but not perform any actual computations.
In such a model, Sto{\ss} proved the expected $\Omega (n^2 \log n)$
lower bound~\cite{Stoss-rangierkomplexitat}. Non-trivial
lower bounds are also known for a few less natural complexity models,
such as a two-level fast-slow memory model~\cite{Floyd-permuting}
or Turing machines with a single
tape~\cite{Kirchherr-transposition,DM93-transposition}.
Finally, conditional on the same network coding conjecture as above,
an $\Omega (n^2 \log n)$ lower bound has been proved
for Turing machines with two tapes~\cite{AHJKL06}.

\section{Fixed-point arithmetic}
\label{sec:fixed-point}

Let $\disc \coloneqq \{u \in \CC : |u| \leq 1 \}$ be the complex unit disc.
Let $p \geq 1$ be a precision parameter, and define
\[
   \appdisc \coloneqq (2^{-p} \, \ZZ[\ii]) \cap \disc
   = \{2^{-p} (x + \ii y) : x, y \in \ZZ \text{ and } x^2 + y^2 \leq 2^{2p}\}.
\]
We will regard elements of $\appdisc$ as finite-precision approximations
for elements of $\disc$.
An element of $\appdisc$ will be represented on the tape by
the pair of integers $(x,y)$,
and occupies $O(p)$ space.

If $z \in \disc$, we systematically write $\app z \in \appdisc$
for a fixed-point approximation for $z$
that has been computed by some algorithm.
We write
\[
   \varepsilon(\app z) \coloneqq 2^p \, |\app z - z|
\]
for the associated error, measured as a multiple of $2^{-p}$
(the ``unit in the last place'').

We define a round-towards-zero function $\rho \colon \CC \to \CC$ as follows.
First, define $\rho_0 \colon \RR \to \ZZ$ by
$\rho(x) \coloneqq \lfloor x \rfloor$ for $x \geq 0$ and
$\rho(x) \coloneqq \lceil x \rceil$ for $x < 0$.
Then define $\rho_0 \colon \CC \to \ZZ[\ii]$ by setting
$\rho_0(x + \ii y) \coloneqq \rho_0(x) + \ii \rho_0(y)$.
Finally, set $\rho(z) \coloneqq 2^{-p} \rho_0(2^p z)$ for $z \in \CC$.
In other words, $\rho(z)$ rounds the real and imaginary parts of $z$
to the nearest multiple of $2^{-p}$ in the direction of the origin.
Clearly $|\rho(z)| \leq |z|$ for all $z \in \CC$,
so $\rho$ maps $\disc$ to $\appdisc$, and also
\begin{equation}
   \label{eq:rho-bound}
   |\rho(z) - z| \leq \sqrt 2 \cdot 2^{-p}, \qquad z \in \CC.
\end{equation}

The following result is almost identical to \cite[Cor.~2.10]{HvdH-nlogn}.
\begin{lem}[Fixed point multiplication]
   \label{lem:mul}
   Given any multiplication machine $M$,
   there exists a Turing machine with the following properties.
   Its input consists of the precision parameter $p$
   and approximations $\app u, \app v \in \appdisc$ for $u, v \in \disc$.
   Its output is an approximation $\app w \in \appdisc$ for
   $w \coloneq uv \in \disc$ such that
   $\varepsilon(\app w) < \varepsilon(\app u) + \varepsilon(\app v) + 2$.
   Its running time is $O(\Mcost_M(p))$.
\end{lem}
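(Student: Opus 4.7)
The plan is to implement fixed-point complex multiplication by means of four integer products carried out by $M$, followed by truncation to precision $p$. First I would unpack the tape representations $\app u = 2^{-p}(a+\ii b)$ and $\app v = 2^{-p}(c+\ii d)$ with $a,b,c,d \in \ZZ$ and $|a|,|b|,|c|,|d| \leq 2^p$. Separating off the signs leaves non-negative integers of at most $p+1$ bits. After padding to a common bit length of $p + O(1)$, four invocations of $M$ produce the absolute-value products $|a|\,|c|$, $|b|\,|d|$, $|a|\,|d|$, $|b|\,|c|$, each in time $O(\Mcost_M(p))$. Reattaching signs and performing one subtraction and one addition yields integers $X, Y$ satisfying $X + \ii Y = 2^{2p}\, \app u \app v$, at additional cost $O(p)$.

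Next I would form $\app w \coloneqq \rho(\app u \app v)$ by truncating the low $p$ bits of each of $X$ and $Y$ towards zero and writing the resulting pair to the tape. Because $\rho$ contracts the real and imaginary parts separately and $|\app u|,|\app v|\leq 1$, we have $|\app w| \leq |\app u \app v| \leq 1$, so $\app w \in \appdisc$ as required. The total running time is $O(\Mcost_M(p)) + O(p) = O(\Mcost_M(p))$, since $\Mcost_M(p) \geq p$ (the multiplication machine must at least read its input).

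For the error bound I would apply the triangle inequality in the form
\[
   |\app w - uv| \leq |\app w - \app u \app v| + |\app u|\,|\app v - v| + |\app u - u|\,|v|,
\]
and use $|\app u|, |v| \leq 1$ together with \eqref{eq:rho-bound} to conclude $|\app w - uv| \leq \sqrt 2 \cdot 2^{-p} + 2^{-p}(\varepsilon(\app u) + \varepsilon(\app v))$. Multiplying through by $2^p$ yields $\varepsilon(\app w) \leq \varepsilon(\app u) + \varepsilon(\app v) + \sqrt 2 < \varepsilon(\app u) + \varepsilon(\app v) + 2$, matching the claimed estimate. There is no genuine obstacle here; the only minor points are ensuring that sign handling, additions, and $p$-bit shifts all fit inside $O(\Mcost_M(p))$ (automatic from $\Mcost_M(p) \geq p$), and using truncation rather than round-to-nearest for $\rho$ so that $|\app w| \leq 1$ is preserved without boundary case analysis.
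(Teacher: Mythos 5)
Your proposal is correct and follows essentially the same route as the paper: form $\app u \app v$ via four invocations of $M$ on the real and imaginary parts, set $\app w \coloneqq \rho(\app u \app v)$, and bound the error by splitting $|\app w - uv|$ through $\app u \app v$ and then using the triangle inequality together with \eqref{eq:rho-bound} and the fact that $\disc$-elements have modulus at most~$1$. Your intermediate split $\app u(\app v - v) + (\app u - u)v$ is cosmetically different from the paper's $(\app u - u)\app v + u(\app v - v)$, but both yield the identical bound, and the extra low-level detail you supply about sign handling and padding is harmless elaboration of what the paper leaves implicit.
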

\begin{proof}
   We first compute $\app u \app v \in \disc$ by multiplying
   out the real and imaginary parts of $\app u$ and $\app v$
   and summing appropriately.
   We then define $\app w \coloneqq \rho(\app u \app v) \in \appdisc$.
   Applying~$\rho$ amounts to a simple rounding operation,
   so the complexity of computing $\app w$ is $4 \Mcost_M(p) + O(p) = O(\Mcost_M(p))$.
   As for the error bound, observe that
   \[
      \error(\app w) = 2^p \, | \app w - w |
         \leq 2^p \, |\rho(\app u \app v) - \app u \app v|
            + 2^p \, | \app u \app v - u v|.
   \]
   We have $2^p \, |\rho(\app u \app v) - \app u \app v| \leq \sqrt 2$ by
   \eqref{eq:rho-bound}.
   For the second term,
   \begin{align*}
      2^p \, |\app u \app v - uv|
         \leq 2^p \, |\app u \app v - u \app v|
            + 2^p \, |u \app v - u v|
         & = 2^p \, |\app u - u| \cdot |\app v|
            + |u| \cdot 2^p \, |\app v - v| \\
         & \leq \error(u) \cdot 1 + 1 \cdot \error(v).
   \end{align*}
   Thus $\error(\app w) \leq \sqrt 2 + \error(u) + \error(v)
      < \error(u) + \error(v) + 2$.
\end{proof}

We next consider the computation of roots of unity.
For $n \geq 1$, define
\[
   \zeta_n \coloneqq e^{2 \pi \ii / n} \in \disc.
\]
\begin{lem}[Roots of unity]
   \label{lem:zeta}
   For any constant $C > 0$, there exists a Turing machine
   with the following properties.
   Its input consists of the precision parameter~$p$
   and a positive integer $n < 2^{C p}$.
   Its output is an approximation $\app \zeta_n \in \appdisc$ for $\zeta_n$
   with $\error(\app\zeta_n) < 2$.
   Its running time is $O(p \lg^2 p)$.
\end{lem}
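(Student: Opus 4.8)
The plan is to compute $\zeta_n = e^{2\pi \ii/n}$ via the exponential of a purely imaginary argument, splitting the work into three stages: (i) compute a sufficiently accurate approximation to the real constant $\pi$, hence to $\theta := 2\pi/n \in (0, 2\pi]$; (ii) reduce the argument modulo $2\pi$ (or rather, exploit that $\theta$ is already small when $n$ is large, and handle small $n$ separately) so that we may assume we are exponentiating an argument of absolute value $O(1)$; and (iii) evaluate $\cos\theta$ and $\sin\theta$ to precision $p$ by summing enough terms of their Taylor series. Throughout, all intermediate quantities lie in a fixed disc of radius $O(1)$, so Lemma~\ref{lem:mul} applies to each multiplication at cost $O(\Mcost_M(p))$; but since we want the stated bound $O(p \lg^2 p)$ independent of $M$, I would instead use a naive quadratic-time multiplication for the $O(p)$-bit integers appearing here, or better, invoke a known $O(p \lg p \lg\lg p)$-type bound — wait, that still is not $O(p\lg^2 p)$. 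The cleanest route is: each arithmetic operation on $O(p)$-bit fixed-point numbers is done in $O(p \lg p)$ or even $O(p^2)$ time, but we must be economical about \emph{how many} operations we perform.

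Here is the key point that yields $O(p\lg^2 p)$. First, $\pi$ can be computed to $p$ bits in time $O(p \lg^2 p)$ — for instance by a quadratically convergent AGM-type or Brent–Salamin iteration, or by binary splitting applied to a rapidly converging series, each of which performs $O(\lg p)$ arithmetic operations on $O(p)$-bit numbers, with multiplication done in $O(p \lg p)$ time via Schönhage–Strassen; this gives $O(\lg p) \cdot O(p \lg p) = O(p \lg^2 p)$. (Alternatively one cites a standard reference for computing $\pi$ to precision $p$ in time $O(p \lg^2 p)$ on a multitape Turing machine.) Once $\theta = 2\pi/n$ is known to precision $p$ (the division by the integer $n < 2^{Cp}$, which has $O(p)$ bits, costs $O(p \lg p)$ or $O(p^2)$ — in any case $O(p \lg^2 p)$), evaluating $\exp(\ii\theta)$ to precision $p$ again takes $O(p\lg^2 p)$: one uses binary splitting on the power series $\sum \theta^k/k!$, which converges after $O(p/\lg p)$ terms since $|\theta| = O(1)$, and binary splitting evaluates such a truncated series in $O(\mathsf{M}(p) \lg p) = O(p \lg^2 p)$ time, where $\mathsf{M}(p) = O(p \lg p)$ denotes the cost of $p$-bit integer multiplication. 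Finally, round the real and imaginary parts towards zero using $\rho$ at precision $p$ (together with a small guard-bit margin in the internal computations) so that the total error is below $2 \cdot 2^{-p}$, i.e.\ $\error(\app\zeta_n) < 2$; this rounding is $O(p)$ and does not affect the complexity.

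The main obstacle — or rather the main thing requiring care rather than genuine difficulty — is twofold. First, one must verify that a constant number of guard bits suffices: each of the $O(\lg p)$ stages (computing $\pi$, dividing by $n$, summing the series) introduces only $O(1)$ further units of error in the last place, so working at precision $p + O(\lg p)$ throughout and truncating at the end keeps the final error below $2$; this is a routine but slightly fiddly propagation-of-error bookkeeping of the kind the paper's fixed-point framework is designed to streamline. Second, the condition $n < 2^{Cp}$ must be used to bound the bit-size of $n$ by $O(p)$, ensuring the division $2\pi/n$ is a $p$-bit-precision operation on $O(p)$-bit inputs (and to ensure $\theta = 2\pi/n$ can be represented at all); note that for $n = 1, 2$ one can just hard-code $\zeta_1 = 1$ and $\zeta_2 = -1$, and for small $n$ there is no argument-reduction issue since $0 < 2\pi/n \le 2\pi$ already lies in a bounded range, so no modular reduction of a large argument is ever needed — which is why the series for $\exp$ converges in $O(p/\lg p)$ terms. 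Modulo these checks, the complexity is dominated by the $O(\lg p)$ multiplications of $O(p)$-bit integers incurred in computing $\pi$ and in the binary-splitting evaluation of the exponential series, each multiplication costing $O(p \lg p)$, for a total of $O(p \lg^2 p)$ as claimed.
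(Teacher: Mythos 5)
Your overall route is the same as the paper's: compute $\pi$ to precision $p$ plus guard bits, divide by $n$ (using $n < 2^{Cp}$ to bound its size), evaluate $\cos(2\pi/n)$ and $\sin(2\pi/n)$, and round towards the origin with $\rho$ so that the total error stays below $2$ units in the last place. The gap is in the complexity accounting, in two places. First, Sch\"onhage--Strassen multiplication costs $O(p \lg p \lg\lg p)$, not $O(p\lg p)$, so your budget of $O(\lg p)$ big multiplications would come out to $O(p\lg^2 p\lg\lg p)$; and naive multiplication is of course far too slow. The intended fix, and what the paper does, is to use the authors' own machine $M^*$ with $\Mcost_{M^*}(p) = O(p\lg p)$ from \eqref{eq:Mstar-bound} for all internal arithmetic --- see Remark \ref{rem:precomputations}, which addresses exactly your worry about not wanting the bound to depend on a black-box $M$.

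Second, and more substantively, the claim that binary splitting evaluates $\sum_k \theta^k/k!$ at $\theta = 2\pi/n$ in $O(\Mcost(p)\lg p)$ is not justified: classical binary splitting attains quasi-linear time only when the argument is a rational with small (say $O(\lg p)$-bit) numerator and denominator, whereas here $\theta$ is a full $p$-bit quantity (its ``numerator'' is a $p$-bit approximation of $2\pi$). Direct summation of the $O(p/\lg p)$ terms costs $\Theta\bigl((p/\lg p)\,\Mcost(p)\bigr) = \Theta(p^2)$, and the bit-burst variant of binary splitting costs $O(\Mcost(p)\lg^2 p) = O(p\lg^3 p)$ with $M^*$, both exceeding the target. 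The bound $O(\Mcost(p)\lg p) = O(p\lg^2 p)$ for $\cos$ and $\sin$ at a $p$-bit argument is achieved by the AGM/Landen-based elementary-function algorithms (Brent; see the references \cite{BB-pi-agm,BZ-mca} cited in the paper's proof), and it is this standard result, rather than Taylor summation with binary splitting, that you need to invoke to reach the stated running time. Your treatment of guard bits, the small-$n$ cases, and the final rounding is fine.
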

(In the above lemma, and in similar situations throughout the paper,
the big-$O$ constant depends on $C$.
The symbol $C$ will be reused many times;
it does not refer to the same constant each time.)
\begin{proof}
   Let $q \coloneqq p + C'$ for a suitable constant $C' > 0$.
   Using standard methods
   (see for example \cite[Chs.~6--7]{BB-pi-agm} or \cite[Ch.~4]{BZ-mca})
   together with the multiplication machine $M^*$
   (whose complexity is given by \eqref{eq:Mstar-bound}),
   we may compute a $q$-bit approximation to $\pi$ in time $O(q \lg^2 q)$,
   a $q$-bit approximation to $1/n$ in time $O(q \lg q)$,
   and then $q$-bit approximations to $\cos(2 \pi / n)$ and $\sin(2 \pi / n)$
   in time $O(q \lg^2 q)$.
   Rounding towards the origin at precision $p$,
   we obtain the desired error bound.
   For further details, see \cite[\S2.8]{HvdH-nlogn}.
\end{proof}
\begin{rem}
   \label{rem:precomputations}
   The reader may wonder why we used $M^*$ in Lemma \ref{lem:zeta},
   i.e., why could we not write the complexity bound as
   $O(\Mcost_M(p) \log p)$ for a given multiplication machine $M$,
   along similar lines to Lemma \ref{lem:mul}?
   The reason is that the standard textbook results
   rely on properties of $\Mcost_M(m)$
   that may not be satisfied for an arbitrary
   black-box multiplication machine $M$.
   Consider for example the reciprocal step in the above proof,
   i.e., computing $1/n$.
   In the typical Newton iteration scheme for the reciprocal,
   the precision doubles at each step, so the complexity is really
   $O(\Mcost(q) + \Mcost(q/2) + \Mcost(q/4) + \cdots)$.
   This can only be simplified to $O(\Mcost(q))$ under additional assumptions,
   such as $\Mcost(q)/q$ being non-decreasing.
   This holds for $M^*$ but we cannot guarantee that it holds for $M$.
\end{rem}

\section{DFTs and the Bluestein--Kronecker trick}
\label{sec:bluestein-kronecker}

\subsection{Conventions for vectors}

If $U \in \CC^n$, we write $U_0, \ldots, U_{n-1}$ for the components of $U$.
We write $\disc^n$ to denote the subset of $\CC^n$ whose entries lie in~$\disc$,
and similarly for $\appdisc^n$.
Elements of $\appdisc^n$ are always stored on the tape
in the standard ordering $U_0, \ldots, U_{n-1}$.
If $U = (U_0, \ldots, U_{n-1}) \in \disc^n$ is a vector,
and $\app U = (\app U_0, \ldots, \app U_{n-1}) \in \appdisc^n$
is an approximation, we write $\error(\app U)$ for $\max_i \error(\app U_i)$.

\subsection{Discrete Fourier transforms}

Let us recall the definition of the DFT of length $n \geq 1$ over $\CC$.
Given a vector $X \in \CC^n$, its DFT is the vector $Y \in \CC^n$ defined by
\begin{equation}
   \label{eq:dft}
   Y_t \coloneqq \frac1n \sum_{s=0}^{n-1} \zeta_n^{-st} X_s,
      \qquad 0 \leq t < n,
\end{equation}
where we recall that $\zeta_n \coloneqq e^{2 \pi \ii / n} \in \disc$.
Note the slightly nonstandard scaling factor $1/n$,
which we have included to ensure that the DFT maps $\disc^n$ into $\disc^n$.

It is straightforward to check that the inverse transform is given by
\begin{equation}
   \label{eq:inverse-dft}
   X_s = \sum_{t=0}^{n-1} \zeta_n^{st} Y_t,
      \qquad 0 \leq s < n.
\end{equation}
This formula is the same as the formula for the forward DFT,
except for a sign change and the absence of the scaling factor.

\subsection{Bluestein's trick}
\label{sec:bluestein}

We now recall Bluestein's reduction from DFTs to convolutions \cite{Blu-dft}.
Let $n \geq 1$ and put $\zeta \coloneqq \zeta_n$.
Let $X \in \CC^n$, and let $Y \in \CC^n$ be the DFT of $X$
according to \eqref{eq:dft}.
For any $k \in \ZZ$, define
\begin{equation}
   \label{eq:omega}
   \omega_k \coloneqq \zeta^{\binom{k}{2}} = \zeta^{k(k-1)/2}.
\end{equation}
Then the identity $st = \binom{t}{2} + \binom{-s}{2} - \binom{t-s}{2}$
implies that
\[
   Y_t = \frac1n \sum_{s=0}^{n-1} \zeta^{-st} X_s
      = \frac1n \, \bar\omega_t \sum_{s=0}^{n-1}
      (\bar\omega_{-s} X_s) \cdot \omega_{t-s},
      \qquad 0 \leq t < n.
\]
(The bar $\bar \cdot$ denotes complex conjugation.)
The latter sum may be recognised as a portion of an acyclic convolution
of a sequence of length $n$ with a sequence of length $2n-1$.
More explicitly,
if we define vectors $X', Y' \in \CC^n$ and $B \in \CC^{2n-1}$ by
\begin{alignat}{2}
   \label{eq:Xprime}
   X'_s & \coloneqq \bar\omega_{-s} X_s, & \qquad & 0 \leq s < n, \\
   \label{eq:Yprime}
   Y'_t & \coloneqq \omega_t Y_t,        &        & 0 \leq t < n, \\
   \label{eq:B}
   B_r  & \coloneqq \omega_{r-n+1},      &        & 0 \leq r < 2n-1,
\end{alignat}
then the relation becomes
\begin{equation}
   \label{eq:bluestein}
   Y'_t = \frac1n \sum_{s=0}^{n-1} X'_s B_{(t+n-1)-s}, \qquad 0 \leq t < n.
\end{equation}
Therefore,
the desired DFT may be computed by first using \eqref{eq:Xprime}
to compute the vector $X'$ from $X$,
then computing the convolution in \eqref{eq:bluestein} to obtain $Y'$,
and finally deducing $Y$ from $Y'$ via \eqref{eq:Yprime}.

\subsection{Bluestein plus Kronecker substitution}

The following result shows how to use integer multiplication to evaluate
a sum of the type \eqref{eq:bluestein}.
The idea is to perform a Kronecker substitution
\cite[Cor.~8.27]{vzGG-compalg3},
i.e., pack the coefficients of the sequences into large integers,
multiply the resulting integers,
and then extract the convolution from the resulting integer product.
Recall that $p$ denotes the fixed-point precision
as in Section \ref{sec:fixed-point}.

\begin{prop}[One-dimensional convolution]
   \label{prop:convolution}
   Let $C, C' > 0$ be constants,
   and let $M, M'$ be multiplication machines.
   Then there exists a Turing machine with the following properties.
   Its input consists of the precision parameter $p$,
   positive integers $n$ and $m$ such that
   \[
      \lg n < Cp, \qquad \lg p < C'n, \qquad np \leq m,
   \]
   and approximations $\app F \in \appdisc^n$ and $\app G \in \appdisc^{2n-1}$
   for vectors $F \in \disc^n$ and $G \in \disc^{2n-1}$.
   Define $H \in \disc^n$ by the formula
   \[
      H_t \coloneqq \frac1n \sum_{s=0}^{n-1} F_s G_{(t+n-1)-s},
         \qquad 0 \leq t < n.
   \]
   The output of the machine is an approximation $\app H \in \appdisc^n$
   such that $\error(\app H) < \error(\app F) + \error(\app G) + 2$.
   Its running time is
   \[
      O(\Mcost_M(m) + n \Mcost_{M'}(p)).
   \]
\end{prop}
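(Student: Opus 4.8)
The plan is to carry out the Kronecker-substitution strategy indicated before the statement, while keeping careful track of the data ordering on the tape and of the numerical error. Write each $\app F_s \in \appdisc$ as $2^{-p}(x_s + \ii y_s)$ with $x_s, y_s \in \ZZ$ of absolute value at most $2^p$, and similarly $\app G_r = 2^{-p}(c_r + \ii d_r)$. Let
\[
   P_t \coloneqq \sum_{s=0}^{n-1} \app F_s \, \app G_{(t+n-1)-s}, \qquad 0 \le t < n,
\]
be the \emph{exact} convolution of the approximations; then $|P_t| \le n$, so $\tfrac1n P_t \in \disc$, and the output will be $\app H_t \coloneqq \rho(\tfrac1n P_t) \in \appdisc$. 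The point is that $2^{2p} P_t$ is a Gaussian integer whose real and imaginary parts are $\pm 1$-combinations of the four ordinary integer convolutions $x*c$, $y*d$, $x*d$, $y*c$ (three suffice, via a Karatsuba-type identity), restricted to the middle range of $n$ indices. So the first job is to compute $O(1)$ integer convolutions of a length-$n$ integer sequence with a length-$(2n-1)$ integer sequence, all entries bounded by $2^p$, without disturbing the natural ordering.

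Each such convolution I would compute by standard (signed) Kronecker substitution: pack each sequence into a single integer using a block width $e \coloneqq 2p + \lg n + O(1)$, chosen so that no coefficient of the product sequence overflows a signed block; multiply the two packed integers with the machine $M$; and read the convolution coefficients back off the product. The sign bookkeeping is routine (e.g.\ first shift every input coefficient by $2^p$ and afterwards subtract the resulting affine sliding-window correction, or extract signed digits by a single carry-propagation sweep), and costs $O(ne)$ steps. Since the hypothesis $\lg n < Cp$ gives $n\lg n < C(np) \le Cm$, we have $ne = O(np) = O(m)$, so packing, unpacking and correction cost $O(m)$ and the packed integers have $O(m)$ bits. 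Multiplying two $O(m)$-bit integers with $M$ costs $O(\Mcost_M(m))$: split each factor into a constant number of $m$-bit chunks, form the $O(1)$ chunk-by-chunk products with $M$, and assemble, at an extra cost of $O(m)$.

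The delicate point is the computation of $\app H_t = \rho(\tfrac1n P_t)$, which amounts to dividing the real and imaginary parts of the Gaussian integer $2^{2p}P_t$ — each of which has $O(p)$ bits, again using $\lg n < Cp$ — by the integer $n2^p$, truncating towards zero. Performing $n$ independent fast divisions is awkward for a black-box multiplication machine (cf.\ Remark \ref{rem:precomputations}), so instead I would precompute once an integer $\nu$ with $|\nu - 2^K/n| \le 1$ for a suitable $K = O(p)$, using the machine $M^*$ and the usual reciprocal algorithm; by \eqref{eq:Mstar-bound} this costs $O(p\lg p)$, and the hypothesis $\lg p < C'n$ gives $p\lg p < C'(np) \le C'm$, which is absorbed into $O(\Mcost_M(m))$. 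Each of the $2n$ required divisions then reduces to a shift, one multiplication of two $O(p)$-bit integers by $\nu$ (cost $O(\Mcost_{M'}(p))$, again via chunking), and an $O(1)$-step fixup correcting the at-most-unit error introduced by $\nu$; this contributes $O(n\Mcost_{M'}(p))$. Summing the three contributions gives the claimed running time.

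The error bound then follows exactly as in Lemma \ref{lem:mul}: by \eqref{eq:rho-bound} and the triangle inequality, $\error(\app H_t) \le \sqrt 2 + \tfrac{2^p}{n}\big|\sum_{s=0}^{n-1}(\app F_s\app G_{(t+n-1)-s} - F_s G_{(t+n-1)-s})\big|$, and the identity $\app F_s\app G_r - F_sG_r = (\app F_s - F_s)\app G_r + F_s(\app G_r - G_r)$ together with $|\app G_r| \le 1$ and $|F_s| \le 1$ bounds each summand by $2^{-p}(\error(\app F_s) + \error(\app G_{(t+n-1)-s}))$; hence $\error(\app H_t) \le \sqrt 2 + \error(\app F) + \error(\app G) < \error(\app F) + \error(\app G) + 2$. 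I expect the division-by-$n$ step to be the main obstacle: checking that turning an integer product back into the correctly rounded scaled convolution fits inside the budget $O(\Mcost_M(m) + n\Mcost_{M'}(p))$ for arbitrary $M$ and $M'$ is exactly where the two shape hypotheses are used — $\lg n < Cp$ to keep the packed integers and the $P_t$ of size $O(m)$ and $O(p)$ respectively, and $\lg p < C'n$ to make the single reciprocal precomputation cost only $O(m)$.
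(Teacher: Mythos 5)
Your proposal is correct and follows essentially the same route as the paper's proof: Kronecker substitution with block width $2p + \lg n + O(1)$ (splitting into real/imaginary integer convolutions rather than packing Gaussian-integer coefficients directly, which is the same computation), a single chunked multiplication with $M$ costing $O(\Mcost_M(m))$, a one-time reciprocal of $n$ precomputed with $M^*$ whose $O(p\lg p)$ cost is absorbed via $\lg p < C'n$, per-coefficient scaling with $M'$, and the identical $\sqrt 2 + \error(\app F) + \error(\app G)$ error analysis.
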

Before giving the proof,
we mention a simple fact that will be used frequently:
\begin{lem}
   \label{lem:chunks}
   Let $C > 1$ be a constant and let $M$ be a multiplication machine.
   Then there exists a Turing machine that takes as input positive
   integers $m$ and $m' \leq Cm$,
   and two $m'$-bit integers $x$ and $y$,
   and returns the product $xy$ in time $O(\Mcost_M(m))$.
\end{lem}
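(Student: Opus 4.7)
The plan is to reduce the multiplication of two $m'$-bit integers to a constant number of multiplications of $m$-bit integers, using the standard ``schoolbook chunking'' idea. Since $m' \leq Cm$ with $C$ constant, the number of $m$-bit chunks needed is $k \coloneqq \lceil m'/m \rceil \leq \lceil C \rceil$, which is $O(1)$.

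First, I would read the inputs and split them as $x = \sum_{i=0}^{k-1} x_i 2^{im}$ and $y = \sum_{j=0}^{k-1} y_j 2^{jm}$, where each $x_i$ and $y_j$ is an $m$-bit integer (padding with zeros if $m' < km$). Extracting these chunks and laying them out on a work tape in a suitable ordering is a linear scan, costing $O(m) = O(\Mcost_M(m))$, since $\Mcost_M(m) = \Omega(m)$.

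Second, I would invoke the machine $M$ (with the parameter $m$) on each of the $k^2$ pairs $(x_i, y_j)$ to produce the products $z_{ij} \coloneqq x_i y_j$, each of bit-length at most $2m$. Since $k^2 = O(1)$, the total cost of this step is $k^2 \cdot \Mcost_M(m) = O(\Mcost_M(m))$. Each invocation is a self-contained call to $M$ on $m$-bit inputs, so no machinery beyond the black-box use of $M$ is needed.

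Finally, I would reconstruct $xy = \sum_{i,j} z_{ij} \, 2^{(i+j)m}$ by performing the appropriate shifts and additions. The result has $O(m)$ bits, and we are summing $O(1)$ numbers of bit-size $O(m)$ at fixed offsets that are multiples of $m$, so this costs $O(m)$ time with careful tape layout. Absorbing this into $O(\Mcost_M(m))$ again uses only the trivial lower bound $\Mcost_M(m) = \Omega(m)$. The only ``obstacle'' here is purely bookkeeping: one must ensure that the chunks are laid out on the tapes so that each call to $M$ sees its arguments in the format $M$ expects, and that the accumulation at the end can be performed in-place without repeatedly scanning long stretches of tape. These are routine Turing-machine manipulations and do not require any new idea.
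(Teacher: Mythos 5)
Your proof is correct and follows exactly the same ``schoolbook chunking'' approach the paper uses: split each $m'$-bit operand into $\lceil m'/m\rceil = O(1)$ chunks of size $m$, multiply the $O(1)$ pairs of chunks with $M$, and recombine with shifts and additions. Your version simply spells out the bookkeeping the paper leaves implicit.
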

\begin{proof}
   Simply cut up each multiplicand of size $m'$ into
   $\lceil m'/m \rceil = O(1)$ chunks of size $m$,
   multiply out the chunks, and sum appropriately.
\end{proof}
Loosely speaking, Lemma \ref{lem:chunks} says that
if $m' = O(m)$ then $\Mcost(m') = O(\Mcost(m))$.
However, the latter statement is not quite correct,
because if the multiplication machine $M$ is given inputs of size $m'$,
it does not ``know'' that it should reduce to problems of size $m$.
The point of Lemma \ref{lem:chunks} is to embed $M$ in a machine
that explicitly requires both $m$ and $m'$ as input.

\begin{proof}[Proof of Proposition \ref{prop:convolution}]
   Recall that $2^p \app F_s, 2^p \app G_j \in \ZZ[\ii]$.
   Consider the polynomials
   \[
      f(x) \coloneqq \sum_{s=0}^{n-1} (2^p \app F_s) x^s
      \in \ZZ[\ii][x], \qquad
      g(x) \coloneqq \sum_{j=0}^{2n-2} (2^p \app G_j) x^j
      \in \ZZ[\ii][x],
   \]
   and let $w \coloneqq fg \in \ZZ[\ii][x]$ be their product,
   say $w(x) = \sum_{k=0}^{3n-3} W_k x^k$.
   Since $|2^p \app F_s|, |2^p \app G_j| \leq 2^p$,
   we have the bound $|W_k| \leq n (2^p)^2$ for all $k$.
   Let
   \[
      \beta \coloneqq 2p + \lg n + 2,
   \]
   so that $|W_k| \leq 2^{\beta-2}$.
   Note that the hypothesis $\lg n < Cp$ implies that $\beta = O(p)$.
   
   We first compute $f(2^\beta), g(2^\beta) \in \ZZ[\ii]$,
   by concatenating the input coefficients with appropriate zero-padding
   (or one-padding in the case of negative coefficients),
   handling the real and imaginary parts separately.
   This requires time $O(n\beta) = O(np)$,
   including the cost of various carry/borrow handling,
   the details of which are omitted.
   
   We next compute the product $w(2^\beta) = f(2^\beta) g(2^\beta)$
   by using $M$ to multiply out the real and imaginary parts
   of $f(2^\beta)$ and $g(2^\beta)$, working in chunks of size $m$,
   and then adding/subtracting appropriately.
   Each multiplicand has $n\beta = O(np) = O(m)$ bits,
   so by Lemma \ref{lem:chunks} the cost is $O(\Mcost_M(m))$.
   
   Since the real and imaginary parts of $W_k$ are bounded
   in absolute value by $2^{\beta-2}$,
   we may extract all the $W_k$ unambiguously from $w(2^\beta)$
   in time $O(np)$,
   i.e., we first examine $w(2^\beta) \bmod 2^\beta$ to recover $W_0$,
   then we remove $W_0$ from the sum and continue onto $W_1$, and so on.
   
   At this stage we have recovered the coefficients of interest,
   namely
   \[
      W_{t+n-1} = \sum_{s=0}^{n-1}
         (2^p \app F_s) (2^p \app G_{(t+n-1)-s}) \in \ZZ[\ii],
            \qquad 0 \leq t < n,
   \]
   and we define the output approximations for $H_t$ to be
   \[
      \app H_t \coloneqq \rho(W_{t+n-1}/2^{2p} n) \in \appdisc,
         \qquad 0 \leq t < n.
   \]
   This is well defined, as we showed earlier that
   $|W_{t+n-1}/2^{2p} n| \leq 1$.
   Computing each $\app H_t$ from $W_{t+n-1}$ requires a division by $n$
   followed by appropriate rounding.
   The cost of the rounding is $O(p)$.
   For the division by $n$,
   we first use the multiplication machine $M^*$
   to precompute a $p$-bit approximation for $1/n$ in time $O(p \lg p)$
   (see Remark \ref{rem:precomputations}).
   This bound simplifies to $O(np)$ thanks to the hypothesis
   $\lg p < C'n$.
   The division by $n$ then reduces to a $p$-bit multiplication
   (plus additional linear-time work);
   using $M'$ to compute this product, the cost is $O(\Mcost_{M'}(p))$
   for each $t$, and hence $O(n \Mcost_{M'}(p))$ altogether.
   
   Finally, observe that
   \begin{multline*}
      \left| \frac{W_{t+n-1}}{2^p n} - 2^p H_t \right|
         = \frac{2^p}{n} \left| \sum_{s=0}^{n-1} \big(
            \app F_s \app G_{(t+n-1)-s} - F_s G_{(t+n-1)-s} \big) \right| \\
         \leq \frac{2^p}{n} \sum_{s=0}^{n-1}
            \left( | \app F_s - F_s | | \app G_{(t+n-1)-s} |
               + | F_s | | \app G_{(t+n-1)-s} - G_{(t+n-1)-s} | \right) \\
         \leq \frac1n \sum_{s=0}^{n-1} \left(
            \error(\app F_s) \cdot 1 + 1 \cdot \error(\app G_{(t+n-1)-s})
            \right)
            \leq \error(\app F) + \error(\app G),
   \end{multline*}
   and hence, by \eqref{eq:rho-bound},
   \begin{align*}
      \error(\app H_t) & = 2^p |\app H_t - H_t| \\
      & \leq 2^p \left| \rho\left(\frac{W_{t+n-1}}{2^{2p} n}\right)
         - \frac{W_{t+n-1}}{2^{2p} n} \right|
         + 2^p \left| \frac{W_{t+n-1}}{2^{2p} n} - H_t \right| \\
      & \leq \sqrt 2 + \error(\app F) + \error(\app G)
   \end{align*}
   for all $t$.
\end{proof}

\begin{prop}[One-dimensional DFT]
   \label{prop:dft}
   Let $C, C' > 0$ be constants,
   and let $M, M'$ be multiplication machines.
   Then there exists a Turing machine with the following properties.
   Its input consists of the precision parameter $p$,
   positive integers $n$ and $m$ such that
   \[
      \lg n < Cp, \qquad \lg^2 p < C' n, \qquad np \leq m,
   \]
   and an approximation $\app X \in \appdisc^n$ for a vector $X \in \disc^n$.
   Let $Y \in \disc^n$ be the DFT of $X$ according to \eqref{eq:dft}.
   The output of the machine is an approximation $\app Y \in \appdisc^n$
   such that $\error(\app Y) < \error(\app X) + 12 n^2$.
   Its running time is
   \[
      O(\Mcost_M(m) + n \Mcost_{M'}(p)).
   \]
\end{prop}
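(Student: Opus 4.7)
The plan is to realise the DFT via Bluestein's reduction from Section~\ref{sec:bluestein}, feeding the resulting convolution into Proposition~\ref{prop:convolution}. Concretely, I would first invoke Lemma~\ref{lem:zeta} once to obtain an approximation $\app\zeta \in \appdisc$ for $\zeta \coloneqq \zeta_n$ with $\error(\app\zeta) < 2$; the cost $O(p \lg^2 p)$ is absorbed by $O(\Mcost_M(m))$ since $\lg^2 p < C' n$ forces $p \lg^2 p = O(np) = O(m)$. Next, I would iteratively produce approximations $\app\zeta^k$ and $\app\omega_k$ for $k = 0, 1, \ldots, n-1$ (and separately for $k = 0, -1, \ldots, -(n-1)$) using the relations $\zeta^{k+1} = \zeta^k \cdot \zeta$ and $\omega_{k+1} = \omega_k \cdot \zeta^k$, each step being a single fixed-point multiplication via Lemma~\ref{lem:mul}. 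That gives $O(n)$ calls to Lemma~\ref{lem:mul}, contributing $O(n \Mcost_{M'}(p))$.

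The Bluestein trick then proceeds in three phases: compute $\app X'_s \coloneqq \bar{\app\omega}_{-s} \app X_s$ via Lemma~\ref{lem:mul}; set $\app B_r \coloneqq \app\omega_{r-n+1}$ directly from the precomputed table; invoke Proposition~\ref{prop:convolution} with $\app F = \app X'$ and $\app G = \app B$ to obtain $\app Y'_t$ approximating the right-hand side of \eqref{eq:bluestein}; and finally form $\app Y_t \coloneqq \bar{\app\omega}_t \app Y'_t$ via Lemma~\ref{lem:mul}. The hypotheses of Proposition~\ref{prop:convolution} ($\lg n < Cp$, $\lg p < C'n$, $np \leq m$) are inherited from ours (using $\lg p \leq \lg^2 p$). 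Summing the costs gives $O(\Mcost_M(m) + n \Mcost_{M'}(p))$, matching the claimed bound.

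The main work is the error accounting, which must end at $< \error(\app X) + 12 n^2$. Writing $e_k \coloneqq \error(\app\zeta^k)$ and $f_k \coloneqq \error(\app\omega_k)$, Lemma~\ref{lem:mul} yields the recurrences $e_{k+1} < e_k + e_1 + 2 < e_k + 4$ (starting from $e_0 = 0$, $e_1 < 2$) and $f_{k+1} < f_k + e_k + 2$, from which an easy induction gives $e_k < 4k$ and $f_k < 2k^2$, hence $f_k < 2n^2$ throughout the range $|k| \leq n-1$. Applying Lemma~\ref{lem:mul} to form $\app X'_s$ bounds $\error(\app X') < \error(\app X) + 2n^2 + 2$, and the convolution output satisfies $\error(\app Y') < \error(\app X') + \error(\app B) + 2 < \error(\app X) + 4n^2 + 4$ by Proposition~\ref{prop:convolution}. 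One last multiplication by $\bar{\app\omega}_t$ costs another $2n^2 + 2$ error units, for a total $\error(\app Y) < \error(\app X) + 6n^2 + 6 < \error(\app X) + 12 n^2$ (the bound is trivial for $n = 1$).

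The most delicate point will be the bookkeeping in the iterative root-of-unity phase: one has to be sure that \emph{every} $\omega_k$ with $|k| \leq n-1$ is produced by a chain of at most $O(n)$ multiplications from $\app\zeta$, so that $f_k$ stays quadratic rather than cubic in~$n$. Everything else — verifying the hypotheses of Proposition~\ref{prop:convolution}, confirming that the cost of Lemma~\ref{lem:zeta} is dominated, and assembling the final error bound — is routine once the iteration is set up correctly.
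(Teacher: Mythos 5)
Your proposal is correct and follows essentially the same route as the paper: Bluestein's reduction, iterative computation of $\zeta^k$ and $\omega_k$ via Lemma~\ref{lem:mul}, delegation of the convolution to Proposition~\ref{prop:convolution}, and a linear error recurrence bounded by $O(n^2)$. The only noteworthy deviation is that you compute $\omega_k$ for negative $k$ by a separate iteration, whereas the paper exploits the identity $\omega_{-s} = \omega_{s+1}$ (from $\binom{-s}{2} = \binom{s+1}{2}$) to read them off the positive chain; your tighter bookkeeping ($f_k < 2k^2$ using the exact $e_k$ rather than a uniform bound on $\error(\app\zeta^j)$) gives a slightly smaller constant, but both comfortably fit under $12n^2$.
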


\begin{rem}
   Let us comment briefly on the significance of the hypotheses
   in Proposition \ref{prop:dft}.
   The condition $\lg n < Cp$ was already alluded to
   in Section \ref{sec:overview}, and seems to be unavoidable;
   the reduction simply does not work if $p$ is too small relative to $n$.
   By contrast, the hypothesis $\lg^2 p < C' n$ is merely a technical annoyance.
   It is included to control the cost of computing roots of unity,
   but we will see in Section~\ref{sec:proofs}
   that it does not cause any serious problems,
   because it is always possible to cut up large coefficients
   into smaller chunks without any detrimental effect.
\end{rem}

\begin{proof}[Proof of Proposition \ref{prop:dft}]
   The case $n = 1$ is trivial, so we assume that $n \geq 2$.
   We will use the strategy described in Section \ref{sec:bluestein}.
   
   \step{Step 1: compute $\zeta$.}
   Invoking Lemma \ref{lem:zeta},
   we compute an approximation $\app\zeta \in \appdisc$ for
   $\zeta \coloneqq \zeta_n = e^{2 \pi \ii / n}$
   with $\error(\app\zeta) < 2$
   in time $O(p \lg^2 p) = O(\Mcost_{M'}(p) \lg^2 p)$.
   Using the hypothesis $\lg^2 p < C'n$,
   this simplifies to $O(n \Mcost_{M'}(p))$.
   
   \step{Step 2: compute powers of $\zeta$.}
   Define $\gamma_j \coloneqq \zeta^j \in \disc$.
   We compute approximations $\app\gamma_j \in \appdisc$
   for $j = 0, 1, \ldots, n-1$ by taking $\app\gamma_0 \coloneqq 1$,
   $\app\gamma_1 \coloneqq \app\zeta$,
   and then repeatedly using Lemma \ref{lem:mul}
   together with the identity $\gamma_j = \gamma_{j-1} \cdot \zeta$.
   This requires time $O(n \Mcost_{M'}(p))$ (using the machine $M'$),
   and by induction we obtain $\error(\app\gamma_j) < 4j-2$
   for all $j = 1, \ldots, n-1$.
   We conclude that $\error(\app\gamma_j) < 4(n-1)-2 = 4n-6$
   for all $j = 0, \ldots, n-1$.

   \step{Step 3: compute $\omega_k$.}
   Similarly, we compute approximations $\app\omega_k \in \appdisc$
   for $\omega_k \in \disc$ (defined in \eqref{eq:omega}),
   for $k = 0, \ldots, n$,
   by taking $\app\omega_0 \coloneqq 1$
   and then repeatedly applying the identity
   $\omega_k = \omega_{k-1} \cdot \gamma_{k-1}$.
   Again this requires time $O(n \Mcost_{M'}(p))$,
   and we obtain by induction $\error(\app\omega_k) < 4k(n-1)$
   for $k = 1, \ldots, n$.
   Therefore $\error(\app\omega_k) < 4n(n-1)$
   for all $k = 0, \ldots, n$.

   \step{Step 4: compute $B_r$.}
   We compute an approximation $\app B \in \appdisc^{2n-1}$
   for the vector $B \in \disc^{2n-1}$ defined in \eqref{eq:B}.
   The entries $B_r = \omega_{r-n+1}$ are obtained directly
   from the output of Step 3,
   using the fact that $\omega_{-s} = \omega_{s+1}$
   (since $\binom{-s}{2} = \binom{s+1}{2}$).
   This requires time $O(np) = O(n \Mcost_{M'}(p))$,
   and we obtain $\error(\app B) < 4n(n-1)$.

   \step{Step 5: compute $X'_s$.}
   We use Lemma \ref{lem:mul}
   to compute an approximation $\app X' \in \appdisc^n$
   for $X' \in \disc^n$ defined by $X'_s = \bar\omega_{-s} X_s$
   (see \eqref{eq:Xprime}).
   This requires time $O(n \Mcost_{M'}(p))$,
   and we obtain $\error(\app X') < \error(\app X) + 4n(n-1) + 2$.
   
   \step{Step 6: perform convolution.}
   We apply Proposition \ref{prop:convolution}
   with $F \coloneqq X'$, $G \coloneqq B$, $H \coloneqq Y'$,
   to obtain an approximation $\app Y' \in \appdisc^n$
   for $Y' \in \disc^n$,
   where the entries $Y'_t$ are given by \eqref{eq:bluestein}.
   This requires time $O(\Mcost_M(m) + n \Mcost_{M'}(p))$,
   and we obtain $\error(\app Y') < \error(\app X') + \error(\app B) + 2
   < \error(\app X) + 8n(n-1) + 4$.
   
   \step{Step 7: compute $Y_t$.}
   We finally approximate $Y \in \disc^n$ via $Y_t = \bar\omega_t Y'_t$
   (see \eqref{eq:Yprime}).
   Using Lemma \ref{lem:mul} this requires time $O(n \Mcost_{M'}(p))$,
   and we obtain $\error(\app Y) < \error(\app Y') + 4n(n-1) + 2
   < \error(\app X) + 12n(n-1) + 6 < \error(\app X) + 12 n^2$.
\end{proof}

\begin{rem}
   The same algorithm works if we want to evaluate the transform
   $\frac1n \sum_{s=0}^{n-1} \zeta_n^{st} X_s$,
   i.e., replacing $\zeta_n^{-1}$ by $\zeta_n$ in \eqref{eq:dft}.
   We use this fact several times below without further comment.
\end{rem}

\begin{rem}
   In the above proof,
   instead of computing the $\omega_k$ via multiplication,
   one could extract them from the list of $\zeta^j$
   by means of a sorting algorithm.
   (As a side-effect, the numerical error would be reduced from
   $O(n^2)$ to $O(n)$.)
   However, the best complexity bound we know for this approach is
   $O(p n \lg n)$ via a merge sort \cite{Knu-TAOCP3},
   introducing an unwanted $\lg n$ factor into the final result.
\end{rem}

In the main reduction in Section \ref{sec:main},
we will need to apply the Bluestein--Kronecker method
to compute transforms along various one-dimensional ``slices'' of
higher-dimensional arrays.
In the remainder of this section we explain how to generalise
Propositions \ref{prop:convolution} and~\ref{prop:dft}
to handle this case.

\subsection{Conventions for arrays}

We will write $\CC^{n_1,\ldots,n_d}$ for the tensor product
$\CC^{n_1} \otimes \cdots \otimes \CC^{n_d}$,
i.e., the set of $d$-dimensional arrays of size
$n_1 \times \cdots \times n_d$ with entries in~$\CC$.
We write similarly $\disc^{n_1,\ldots,n_d}$ and $\appdisc^{n_1,\ldots,n_d}$
for the subsets of $\CC^{n_1,\ldots,n_d}$
with coefficients lying in $\disc$ or $\appdisc$.
For an array $U \in \CC^{n_1,\ldots,n_d}$,
we write $U_{s_1,\ldots,s_d}$ for the entry at position $(s_1, \ldots, s_d)$,
where $0 \leq s_i < n_i$ for $i = 1, \ldots, d$.
The entries of an array $U \in \appdisc^{n_1,\ldots,n_d}$ are
always stored on the tape in row-major order,
i.e., as a list of length $n_1 \cdots n_d$,
with $U_{s_1,\ldots,s_d}$ stored at index
$s_1 (n_2 \cdots n_d) + s_2 (n_3 \cdots n_d) + \cdots + s_{d-1} n_d + s_d$.

\subsection{Transforms along slices}

Let $l_1, l_2, n \geq 1$.
Suppose that we are given an $l_1 \times n \times l_2$ array
$X \in \disc^{l_1,n,l_2}$,
and we wish to compute the transform $Y \in \disc^{l_1,n,l_2}$ given by
\begin{equation}
   \label{eq:dft-slices}
   Y_{i_1,t,i_2} \coloneqq \frac1n
   \sum_{s=0}^{n-1} \zeta_n^{-st} X_{i_1,s,i_2},
   \qquad 0 \leq i_1 < l_1, \;\; 0 \leq t < n, \;\; 0 \leq i_2 < l_2.
\end{equation}
In other words, for each $(i_1,i_2)$,
we want to compute a transform of length $n$ along the
slice indexed by $(i_1,*,i_2)$.
We cannot simply apply Proposition \ref{prop:dft} to each slice separately,
because that would require first transposing the data,
which is precisely what we are trying to avoid.
Instead, we will show how to encode all of the DFTs
into a single large integer multiplication problem.

We begin by generalising Proposition \ref{prop:convolution}
to handle this situation.
\begin{prop}[Convolutions along slices]
   \label{prop:convolution-slices}
   Let $C, C' > 0$ be constants, and let $M, M'$ be multiplication machines.
   Then there exists a Turing machine with the following properties.
   Its input consists of the precision parameter $p$,
   positive integers $l_1, l_2, n, m$ such that
   \[
      \lg n < Cp, \qquad \lg p < C' l_1 l_2 n, \qquad l_1 l_2 np \leq m,
   \]
   and approximations $\app F \in \appdisc^{l_1,n,l_2}$
   and $\app G \in \appdisc^{2n-1}$
   for an array $F \in \disc^{l_1,n,l_2}$ and a vector $G \in \disc^{2n-1}$.
   Define $H \in \disc^{l_1,n,l_2}$ by the formula
   \[
      H_{i_1,t,i_2} \coloneqq \frac1n
      \sum_{s=0}^{n-1} F_{i_1,s,i_2} G_{(t+n-1)-s},
      \qquad 0 \leq i_1 < l_1, \;\; 0 \leq t < n, \;\; 0 \leq i_2 < l_2.
   \]
   The output of the machine is an approximation
   $\app H \in \appdisc^{l_1,n,l_2}$ such that
   $\error(\app H) < \error(\app F) + \error(\app G) + 2$.
   Its running time is
   \[
      O(\Mcost_M(m) + l_1 l_2 n \Mcost_{M'}(p)).
   \]
\end{prop}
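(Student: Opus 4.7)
The plan is to combine the $l_1 l_2$ independent length-$n$ convolutions into a single large integer multiplication, following the Kronecker substitution strategy of Proposition~\ref{prop:convolution}. The key observations are that the kernel $G$ is the same for every $(i_1,i_2)$-slice, and that for fixed $i_1$ the row-major storage of $F$ already agrees with a natural Kronecker layout in which $i_2$ is the ``inner'' and $s$ the ``outer'' index; in particular the $s$-axis never has to be reordered, which would itself require a transposition and defeat the purpose.

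Concretely, I would choose $\beta := 2p + \lg n + 2$ so that each integer entry of a length-$n$ convolution product fits in $\beta - 2$ bits, together with an outer block size $L$ slightly larger than $(3n-2) l_2 \beta$ so that the product of any ``$A_{i_1}$-block'' with $B$ is confined to a window of $L$ bits. In the integer $A$, the coefficient $2^p \app F_{i_1,s,i_2}$ is placed at bit offset $L i_1 + \beta(s l_2 + i_2)$; in the integer $B$, the coefficient $2^p \app G_j$ is placed at bit offset $\beta l_2 j$. Setting up $A$ amounts to reading each $i_1$-slice of $F$ into consecutive $\beta$-bit cells (matching the row-major tape order) and then inserting one block of $L - n l_2 \beta$ zero bits between slices; setting up $B$ requires interleaving $\beta(l_2 - 1)$ zero bits between successive $G_j$ entries. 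Real and imaginary parts are handled separately, using the one-padding trick for negative coefficients exactly as in the 1D case. Both $A$ and $B$ have $O(l_1 l_2 n p) = O(m)$ bits by the hypothesis $l_1 l_2 n p \leq m$, so Lemma~\ref{lem:chunks} allows $A \cdot B$ to be computed via $M$ in time $O(\Mcost_M(m))$.

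By construction, $A \cdot B$ decomposes as the non-overlapping sum $\sum_{i_1} (A_{i_1} B) \cdot 2^{L i_1}$, and inside each block the base-$2^\beta$ digit at position $k l_2 + i_2$ is exactly the integer $\sum_{s+j=k}(2^p \app F_{i_1,s,i_2})(2^p \app G_j)$, which by the choice of $\beta$ has absolute value less than $2^{\beta - 2}$. Reading off the digits at $k = t + n - 1$ yields the integers $W_{i_1,t,i_2}$ in $O(m)$ time, and I would set $\app H_{i_1,t,i_2} := \rho(W_{i_1,t,i_2}/(2^{2p} n))$. Exactly as in Proposition~\ref{prop:convolution}, a $p$-bit reciprocal of $n$ is precomputed via $M^*$ in time $O(p \lg p)$, which is absorbed into $O(\Mcost_M(m))$ using $\lg p < C' l_1 l_2 n$ and $l_1 l_2 n p \leq m$; each subsequent division is one $p$-bit multiplication via $M'$, for a total contribution of $O(l_1 l_2 n \Mcost_{M'}(p))$. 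The error bound $\error(\app H) < \error(\app F) + \error(\app G) + 2$ follows from the same triangle-inequality calculation as in Proposition~\ref{prop:convolution}, with the indices $(i_1,i_2)$ carried along as passive labels.

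The only real subtlety is bookkeeping: I need to check that the Kronecker positions $(s + j) l_2 + i_2$ are distinct across $(s, j, i_2)$ once $k = s + j$ is fixed (which they are, since $0 \leq i_2 < l_2$), and that $L$ is large enough to prevent the $l_1$ block products from colliding. Both facts are immediate from the choices above, so no obstacle should arise beyond those already handled in Proposition~\ref{prop:convolution}.
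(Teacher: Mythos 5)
Your proposal is correct and follows essentially the same approach as the paper: a Kronecker substitution in which $\app F_{i_1,s,i_2}$ is packed at exponent $\approx 3nl_2 \, i_1 + l_2 s + i_2$ in base $2^\beta$ and $\app G_j$ at exponent $l_2 j$, so that row-major order is preserved and all $l_1 l_2$ slice convolutions are read off from a single product of size $O(m)$. Your description in terms of bit offsets $L i_1 + \beta(s l_2 + i_2)$ with $L$ just above $(3n-2)l_2\beta$, and the paper's polynomial formulation with exponent $3nl_2 i_1 + l_2 s + i_2$, are the same construction (the paper even notes the factor $3$ could be reduced), and the subsequent rounding, $1/n$ precomputation, and error analysis are handled identically.
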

\begin{proof}
   The proof is similar to the proof of Proposition \ref{prop:convolution},
   working instead with the polynomials
   \begin{align*}
      f(x) & \coloneqq
         \sum_{i_1=0}^{l_1-1} \sum_{s=0}^{n-1} \sum_{i_2=0}^{l_2-1}
         (2^p \app F_{i_1,s,i_2}) x^{3nl_2 i_1 + l_2 s + i_2}
      \in \ZZ[\ii][x], \\
      g(x) & \coloneqq \sum_{j=0}^{2n-2} (2^p \app G_j) x^{l_2 j}
      \in \ZZ[\ii][x].
   \end{align*}
   We claim that the coefficients of the product $f(x) g(x)$ include
   the desired output coefficients, in the correct order.
   Indeed, we may rewrite $f(x) g(x)$ as
   \[
      f(x) g(x) = \sum_{i_1=0}^{l_1-1} \sum_{i_2=0}^{l_2-1}
         \left( \sum_{s=0}^{n-1} \sum_{j=0}^{2n-2}
            (2^p \app F_{i_1,s,i_2}) (2^p \app G_j) x^{l_2(s+j)} \right)
            x^{3nl_2 i_1 + i_2}.
   \]
   The expression within the large parentheses is a polynomial
   of degree at most $3n-3$ in $x^{l_2}$.
   For each $(i_1,i_2)$, it corresponds to the convolution of
   $F_{i_1,*,i_2}$ with~$G_*$.
   The $x^{3nl_2 i_1 + i_2}$ term at the end
   ensures that these polynomials do not overlap in $f(x) g(x)$.
   (Actually, the algorithm would still work if $3$ were replaced by $2$,
   since we are only interested in the ``middle third'' of each block.)
   
   Briefly, the algorithm runs as follows.
   We first evaluate at $x = 2^\beta$, taking
   $\beta \coloneqq 2p + \lg n + 2 = O(p)$.
   Evaluating $f(2^\beta)$ and $g(2^\beta)$ requires time $O(l_1 l_2 n p)$;
   the key point is that during evaluation of $f(2^\beta)$,
   the coefficients in the input array $\app F$ are already in the correct
   order for the desired concatenation.
   Computing the product $f(2^\beta) g(2^\beta)$ reduces to
   multiplying integers of bit size $O(l_1 l_2 n p) = O(m)$.
   Finally, during the decoding phase,
   the desired output approximations $\app H_{i_1,t,i_2}$
   are already in the correct order.
   We omit the details of the rounding and error analysis,
   which are essentially identical to Proposition \ref{prop:convolution}.
\end{proof}

\begin{prop}[DFTs along slices]
   \label{prop:dft-slices}
   Let $C, C' > 0$ be constants,
   and let $M, M'$ be multiplication machines.
   Then there exists a Turing machine with the following properties.
   Its input consists of the precision parameter $p$,
   positive integers $l_1, l_2, n, m$ such that
   \[
      \lg n < Cp, \qquad \lg^2 p < C' l_1 l_2 n, \qquad l_1 l_2 n p \leq m,
   \]
   and an approximation $\app X \in \appdisc^{l_1,n,l_2}$ for an array
   $X \in \disc^{l_1,n,l_2}$.
   Let $Y \in \disc^{l_1,n,l_2}$ be the DFT of $X$ along slices
   given by \eqref{eq:dft-slices}.
   The output of the machine is an approximation
   $\app Y \in \appdisc^{l_1,n,l_2}$
   such that $\error(\app Y) < \error(\app X) + 12 n^2$.
   Its running time is
   \[
      O(\Mcost_M(m) + l_1 l_2 n \Mcost_{M'}(p)).
   \]
\end{prop}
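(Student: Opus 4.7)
The plan is to mirror the seven-step proof of Proposition \ref{prop:dft} almost verbatim, with the one-dimensional convolution of Proposition \ref{prop:convolution} replaced by the slice-wise convolution of Proposition \ref{prop:convolution-slices}. The key observation is that the Bluestein auxiliary data, namely $\zeta = \zeta_n$, the powers $\gamma_j = \zeta^j$ for $0 \leq j < n$, the values $\omega_k = \zeta^{\binom{k}{2}}$ for $0 \leq k \leq n$, and the vector $B \in \disc^{2n-1}$ defined by $B_r = \omega_{r-n+1}$, all depend only on $n$ and are shared across every $(i_1,i_2)$ slice. So each of these precomputations is performed exactly once, not per slice.

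Concretely, I would first compute $\app\zeta$ via Lemma \ref{lem:zeta} in time $O(p \lg^2 p)$, which is absorbed into $O(l_1 l_2 n \,\Mcost_{M'}(p))$ by the hypothesis $\lg^2 p < C' l_1 l_2 n$. The computations of $\app\gamma_j$, $\app\omega_k$ and $\app B$ then proceed exactly as in Steps 2--4 of Proposition \ref{prop:dft}, each costing $O(n \,\Mcost_{M'}(p))$ and yielding $\error(\app\gamma_j), \error(\app\omega_k), \error(\app B) < 4n(n-1)$. Next, forming $\app X' \in \appdisc^{l_1,n,l_2}$ with $X'_{i_1,s,i_2} = \bar\omega_{-s} X_{i_1,s,i_2}$ is a pointwise multiplication that can be carried out entry-by-entry in the natural row-major order using Lemma \ref{lem:mul}; the total cost is $O(l_1 l_2 n \,\Mcost_{M'}(p))$ and the resulting error satisfies $\error(\app X') < \error(\app X) + 4n(n-1) + 2$. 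The heart of the reduction is then a single application of Proposition \ref{prop:convolution-slices} with $F = X'$, $G = B$, which produces $\app Y' \in \appdisc^{l_1,n,l_2}$ with entries given by \eqref{eq:bluestein} applied along each slice; this runs in time $O(\Mcost_M(m) + l_1 l_2 n \,\Mcost_{M'}(p))$ and adds at most $\error(\app B) + 2$ to the error. Finally, another pointwise pass computes $\app Y_{i_1,t,i_2} \approx \bar\omega_t Y'_{i_1,t,i_2}$ in the same row-major order, giving the claimed bound $\error(\app Y) < \error(\app X) + 12n^2$ by the same telescoping of error contributions as in Proposition \ref{prop:dft}.

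The only part that requires any care beyond transcribing the one-dimensional proof is verifying that the hypotheses of Proposition \ref{prop:convolution-slices} are actually implied by the hypotheses given here: we need $\lg n < Cp$ (which is assumed directly), $\lg p < C'' l_1 l_2 n$ for some $C''$ (which follows from $\lg^2 p < C' l_1 l_2 n$ since $\lg p \leq \lg^2 p$ for $p \geq 2$, with the $p = 1$ case handled trivially), and $l_1 l_2 n p \leq m$ (assumed). The main potential obstacle one might worry about is that the pointwise multiplications in Steps 5 and 7 would force a reordering of the array, but they do not: each entry of $\app X$ is multiplied in place by a precomputed scalar depending only on its middle coordinate $s$ (or $t$), so the row-major layout is preserved throughout, which is exactly what Proposition \ref{prop:convolution-slices} needs on input.
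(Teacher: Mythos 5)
Your proposal is correct and follows essentially the same route as the paper: run Steps 1--4 of Proposition \ref{prop:dft} once (the Bluestein auxiliaries depend only on $n$), apply Steps 5 and 7 as a single row-major pass over the array, and replace the call to Proposition \ref{prop:convolution} in Step 6 with Proposition \ref{prop:convolution-slices}. The one thing you do beyond the paper's terse proof is explicitly check that the hypotheses of Proposition \ref{prop:convolution-slices} follow from those given (in particular $\lg p \leq \lg^2 p$ since $\lg p \geq 1$ by the paper's convention), which is a worthwhile sanity check but not a departure from the argument.
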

\begin{proof}
   The algorithm is essentially the same as in the proof of
   Proposition \ref{prop:dft}.
   We first execute steps 1--4 (computing various roots of unity).
   Next, we run step 5 (computing $X'_s$ from $X_s$)
   on each slice independently;
   this can be achieved by a single pass over the array.
   For step 6, we can perform the required convolutions on all slices
   simultaneously by replacing the use of Proposition~\ref{prop:convolution}
   with Proposition~\ref{prop:convolution-slices}.
   Finally, step 7 (computing $Y_t$ from $Y'_t$) can again
   be run independently on all slices in a single pass.
   The error analysis is identical to the proof of Proposition~\ref{prop:dft}.
\end{proof}

\begin{rem}
   \label{rem:combine}
   Proposition \ref{prop:dft-slices} may also be used to handle transforms
   along slices of higher-dimensional arrays.
   For example, if we have a 4-dimensional array of size
   $q_1 \times q_2 \times n \times q_3$,
   and we wish to transform with respect to the third coordinate,
   then we may reinterpret the data as a 3-dimensional array of size
   $(q_1 q_2) \times n \times q_3$,
   and invoke Proposition \ref{prop:dft-slices} with $l_1 \coloneqq q_1 q_2$
   and $l_2 \coloneqq q_3$.
   The complexity in this case would be
   $O(\Mcost_M(m) + q_1 q_2 q_3 n \Mcost_{M'}(p))$.
\end{rem}

\section{The main reduction}
\label{sec:main}

In this section we present the main reduction from
matrix transposition to integer multiplication,
in the case that the matrix entries are neither too small nor too large
relative to the matrix dimensions.
For reasons that will become clear in Section~\ref{sec:decompose},
we actually consider the following more general transposition problem.
Let $l_1, l_2, n_1, n_2, b \geq 1$,
and consider a 4-dimensional array $A$ of size
\[
   l_1 \times n_1 \times n_2 \times l_2,
\]
with $b$-bit entries.
We wish to transpose the $n_1$ and $n_2$ components, i.e.,
we want to compute the array $A'$ of size
\[
   l_1 \times n_2 \times n_1 \times l_2
\]
whose entries are given by
\[
   A'_{i_1,j_2,j_1,i_2} = A_{i_1,j_1,j_2,i_2}
   \qquad \text{for all $i_1, i_2, j_1, j_2$}.
\]
Equivalently, one may think of the input as a list of $l_1$ arrays
of size $n_1 \times n_2$ with $l_2 b$-bit entries,
and the goal is to transpose each of these $n_1 \times n_2$ arrays.
The case $l_1 = l_2 = 1$ corresponds to an ordinary
$n_1 \times n_2$ matrix transposition with $b$-bit entries.

A machine $T$ performing this type of transposition will be called a
\emph{generalised transposition machine}.
It takes as input the parameters $l_1, l_2, n_1, n_2, b$,
a positive integer $m$ such that $l_1 l_2 n_1 n_2 b \leq m$,
and the array to transpose.
We denote its worst case running time by $\Tcost_T(m; l_1,n_1,n_2,l_2,b)$.

\begin{thm}
   \label{thm:main-medium}
   Let $C, C' > 0$ be constants,
   and let $M, M'$ be multiplication machines.
   Then there exists a generalised transposition machine $T$
   such that for any input parameters $l_1, l_2, n_1, n_2, b, m$ satisfying
   \begin{equation}
      \label{eq:main-medium-hypothesis}
      \lg(n_1 n_2) < Cb, \qquad \lg^2 b < C' l_1 l_2 n_1 n_2,
      \qquad l_1 l_2 n_1 n_2 b \leq m,
   \end{equation}
   we have
   \[
      \Tcost_T(m; l_1,n_1,n_2,l_2,b)
         = O(\Mcost_M(m) + l_1 l_2 n_1 n_2 \Mcost_{M'}(b)).
   \]
\end{thm}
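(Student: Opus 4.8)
The plan is to realise the transposition as a length-$N$ discrete Fourier transform followed by its inverse, where $N \coloneqq n_1 n_2$ and the inverse is decomposed in the Cooley--Tukey manner so that the data re-emerges in transposed order. Fix the working precision $p \coloneqq b + 3\lg(n_1 n_2) + C_1$ for a suitable absolute constant $C_1$; the first hypothesis in \eqref{eq:main-medium-hypothesis} forces $p = O(b)$, whence $\Mcost_{M'}(p) = O(\Mcost_{M'}(b))$ by Lemma~\ref{lem:chunks}, and the side conditions $\lg n < Cp$ and $\lg^2 p < C' l_1 l_2 n$ of Propositions~\ref{prop:convolution-slices}--\ref{prop:dft-slices} (for $n \in \{N, n_1, n_2\}$) all follow from \eqref{eq:main-medium-hypothesis}. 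First I rescale each $b$-bit entry $v$ to $v/2^b \in [0,1) \subseteq \disc$, which lies in $\appdisc$ exactly since $p \geq b$; reinterpreting the array --- for free, as a reshaping of its row-major tape layout --- as a $3$-dimensional array $\app X \in \appdisc^{l_1, N, l_2}$ with middle index $s = n_2 j_1 + j_2$, we thus have $\error(\app X) = 0$. Applying Proposition~\ref{prop:dft-slices} to the middle axis produces the length-$N$ DFT $\app Y \in \appdisc^{l_1, N, l_2}$ in $O(\Mcost_M(m) + l_1 l_2 N\, \Mcost_{M'}(p))$ steps, with $\error(\app Y) < 12 N^2$. (Whenever a Proposition wants the total input bit size bounded by its parameter ``$m$'', I feed it the value $l_1 l_2 n_1 n_2 p = O(m)$ instead and absorb the discrepancy via Lemma~\ref{lem:chunks}.)

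For the inverse transform I exploit the following identity. Writing the linear index $t$ of $Y$ as $t = n_1 t_2 + t_1$ with $0 \leq t_1 < n_1$ and $0 \leq t_2 < n_2$, the entry $A_{j_1, j_2}$ (originally at index $n_2 j_1 + j_2$, but now destined for the transposed index $n_1 j_2 + j_1$) satisfies, by \eqref{eq:inverse-dft} and the congruence $(n_2 j_1 + j_2)(n_1 t_2 + t_1) \equiv n_2 j_1 t_1 + n_1 j_2 t_2 + j_2 t_1 \pmod N$,
\[
   A_{j_1,j_2} \;=\; \sum_{t_1=0}^{n_1-1} \zeta_{n_1}^{\,j_1 t_1}\, \zeta_N^{\,j_2 t_1} \sum_{t_2=0}^{n_2-1} \zeta_{n_2}^{\,j_2 t_2}\, Y_{n_1 t_2 + t_1}
\]
(the passive indices $i_1 \in [0, l_1)$ and $i_2 \in [0, l_2)$ being suppressed). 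I evaluate the right-hand side in three passes, leaving the array in its current row-major position throughout. First, reshape $\app Y$ as $\appdisc^{l_1, n_2, n_1, l_2}$ --- the middle block of length $N$ splitting as $n_2 \times n_1$ with $t_2$ slow and $t_1$ fast, so that the linear middle index is $n_1 t_2 + t_1$ as required --- and apply Proposition~\ref{prop:dft-slices} (after the evident reshaping, cf.\ Remark~\ref{rem:combine}) to the $n_2$-axis: this is the inner sum over $t_2$, for which the sign-reversed variant noted after Proposition~\ref{prop:dft} is the relevant one. Second, multiply pointwise by the twiddle factors $\zeta_N^{\,j_2 t_1}$: precompute $\app\zeta_N$ with $\error < 2$ via Lemma~\ref{lem:zeta} (using $N < 2^{Cp}$), then its powers $\app\zeta_N^{\,k}$ for $k < N$, and from them a table of the values $\zeta_N^{\,s_2 t_1}$ arranged in exactly the order they will be consumed; then sweep the data array and this table together in a single simultaneous pass, applying Lemma~\ref{lem:mul} at each position. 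Third, reshape as $\appdisc^{(l_1 n_2), n_1, l_2}$ and apply Proposition~\ref{prop:dft-slices} to the $n_1$-axis: this is the outer sum over $t_1$. Each pass costs $O(\Mcost_M(m) + l_1 l_2 n_1 n_2\, \Mcost_{M'}(p))$ and inflates the error additively by $O(N^2)$, so the output array $\app Z \in \appdisc^{l_1, n_2, n_1, l_2}$ has $\error(\app Z) = O(N^2)$; and by the identity above, the entry of $\app Z$ at the row-major position of $(i_1, j_2, j_1, i_2)$ approximates $\tfrac1N \cdot \tfrac1{2^b}\, A_{i_1, j_1, j_2, i_2}$, the factor $\tfrac1N$ coming from the two built-in $\tfrac1n$ normalisations of Proposition~\ref{prop:dft-slices}. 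That position is exactly where the transposed array $A'$ stores $A'_{i_1, j_2, j_1, i_2} = A_{i_1, j_1, j_2, i_2}$.

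To finish, undo the scaling: multiply each entry of $\app Z$ by $N 2^b$ and round its real part to the nearest integer. The accumulated absolute error is $N 2^b \cdot 2^{-p}\, \error(\app Z) = O(N^3 2^{b - p})$, which is below $\tfrac12$ by the choice of $p$, so the rounding recovers the exact transposed entries. Summing the costs of all passes with the $O(m)$ needed for the two rescalings --- and using $\Mcost_M(m) = \Omega(m)$ --- yields the claimed bound $O(\Mcost_M(m) + l_1 l_2 n_1 n_2\, \Mcost_{M'}(p)) = O(\Mcost_M(m) + l_1 l_2 n_1 n_2\, \Mcost_{M'}(b))$. Checking that each invocation of Propositions~\ref{prop:convolution-slices}--\ref{prop:dft-slices} --- with the groupings $(l_1, l_2, N)$, $(l_1,\, n_1 l_2,\, n_2)$ and $(l_1 n_2,\, l_2,\, n_1)$ --- meets its hypotheses, and that all the constants $C, C', C_1$ can be chosen uniformly, is then routine.

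I expect the main obstacle to be the Cooley--Tukey bookkeeping of the second paragraph: the index decompositions of the transform input and of the \emph{transposed} target must be chosen so that at every pass the sub-transform lengths, the twiddle exponents, and the array reshapes line up with the physical row-major layout with no implicit permutation of the tape --- and in particular the twiddle multiplication has to be organised around precomputed factors presented in consumption order, in the same spirit as Propositions~\ref{prop:convolution-slices} and~\ref{prop:dft-slices}. By contrast, the numerical analysis is just a routine propagation of the additive error bounds already established.
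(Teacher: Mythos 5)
Your proposal is correct and takes essentially the same route as the paper's proof: the same precision choice $p = b + O(\lg n_1 n_2)$, the forward DFT via Proposition~\ref{prop:dft-slices}, the Cooley--Tukey inverse split into an $n_2$-direction transform, pointwise twiddle multiplication by $\zeta_N^{j_2 t_1}$, and an $n_1$-direction transform (with the groupings $(l_1,\,n_2,\,n_1 l_2)$ and $(l_1 n_2,\,n_1,\,l_2)$ exactly as in the paper), followed by scaling by $N 2^b$ and rounding, with Lemma~\ref{lem:chunks} absorbing the $p$-versus-$b$ and $m'$-versus-$m$ discrepancies. The one detail to tighten is the twiddle table: rather than extracting $\zeta_N^{j_2 t_1}$ from a precomputed list of all powers $\zeta_N^k$ (which would require nontrivial data movement on the tape), generate the factors directly in consumption order by repeated multiplications, obtaining row $j_2$ from row $j_2-1$ via pointwise products with $(\zeta_N^{t_1})_{t_1 < n_1}$, as the paper does.
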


\begin{proof}
   We are given as input an array $A$ of size
   $l_1 \times n_1 \times n_2 \times l_2$,
   whose entries are $b$-bit strings.
   Let us write $n \coloneqq n_1 n_2$ and $l \coloneqq l_1 l_2$.
   If either $n_1 = 1$ or $n_2 = 1$, then there is nothing to do,
   so we may assume that $n_1, n_2 \geq 2$.

   In the following discussion,
   we frequently deal with array entries whose
   first coordinate is $i_1 \in \{0, \ldots, l_1-1\}$
   and last coordinate is $i_2 \in \{0, \ldots, l_2-1\}$.
   For improved legibility we will place these variables
   in superscripts instead of subscripts.
   For example, we will write $A^{i_1,i_2}_{j_1,j_2}$
   to mean $A_{i_1,j_1,j_2,i_2}$.
   The reader should keep in mind that the arrangement of data on the tape
   is still of course given by the row-major ordering with respect to
   $(i_1, j_1, j_2, i_2)$.

   \step{Step 1: encode input data}.
   We interpret the entries of $A$ as integers in the interval
   $0 \leq A^{i_1,i_2}_{j_1,j_2} < 2^b$.
   Introducing the fixed-point framework of Section \ref{sec:fixed-point},
   with precision
   \[
      p \coloneqq b + \lg(80 n^3) = O(b),
   \]
   we encode these integers into a 3-dimensional array
   $R \in \appdisc^{l_1,n,l_2}$ given by
   \[
      R^{i_1,i_2}_{j_1 n_2 + j_2}
         \coloneqq 2^{-b} A^{i_1,i_2}_{j_1,j_2} \in \appdisc,
         \qquad 0 \leq j_1 < n_1, \;\; 0 \leq j_2 < n_2.
   \]
   This encoding amounts to zero-padding,
   and can be performed in time $O(lnp) = O(m)$.

   \step{Step 2: forward DFTs}.
   For each $(i_1,i_2)$, consider the scaled DFT
   \[
      S^{i_1,i_2}_k \coloneqq
         \frac1n \sum_{j=0}^{n-1} \zeta_n^{-jk} R^{i_1,i_2}_j \in \disc,
         \qquad 0 \leq k < n.
   \]
   According to Proposition \ref{prop:dft-slices},
   we may compute an approximation $\app S \in \disc^{l_1,n,l_2}$
   with $\error(\app S) < 12 n^2$ in time
   $O(\Mcost_M(m') + ln \Mcost_{M'}(p))$
   where $m' \coloneqq l_1 l_2 n_1 n_2 p$.
   Moreover, since $p = O(b)$ and $m' = O(m)$,
   we may use Lemma \ref{lem:chunks} to modify the construction
   in Proposition \ref{prop:dft-slices} slightly,
   replacing all calls to $M'$ and $M$
   with size parameters respectively $p$ and $m'$
   by several invocations with size parameters $b$ and $m$.
   The complexity of computing $\app S$ this way then becomes
   $O(\Mcost_M(m) + ln \Mcost_{M'}(b))$ as desired.
   In the rest of this proof, we will replace $p$ and $m'$
   by $b$ and $m$ in this way without further comment.

   \medskip
   The aim of the remaining steps is to compute the inverse DFTs
   of the slices of~$S$,
   i.e., to recover the coefficients of $R$.
   Recall (see \eqref{eq:inverse-dft}) that these are given by
   \begin{equation}
      \label{eq:inverse-dft-2}
      R^{i_1,i_2}_j = \sum_{k=0}^{n-1} \zeta_n^{jk} S^{i_1,i_2}_k,
      \qquad 0 \leq j < n.
   \end{equation}
   However, we will use a different algorithm to compute the inverse DFT,
   so that the coefficients of $R$ are produced in a different order,
   corresponding to the desired transpose of the original input matrix $A$.

   We begin by writing $j = j_1 n_2 + j_2$ and $k = k_2 n_1 + k_1$
   with $0 \leq j_1, k_1 < n_1$ and $0 \leq j_2, k_2 < n_2$,
   to obtain the well-known Cooley--Tukey decomposition of
   \eqref{eq:inverse-dft-2}:
   \[
      R^{i_1,i_2}_{j_1 n_2 + j_2} =
      \sum_{k_1=0}^{n_1-1}
         \zeta_{n_1}^{j_1 k_1} \cdot \zeta_n^{j_2 k_1}
      \sum_{k_2=0}^{n_2-1}
         \zeta_{n_2}^{j_2 k_2} S^{i_1,i_2}_{k_2 n_1 + k_1}.
   \]
   Let us define intermediate quantities
   $U, V, W \in \disc^{l_1,n_2,n_1,l_2}$ by
   \begin{align*}
      U^{i_1,i_2}_{j_2,k_1} & \coloneqq \frac{1}{n_2} \sum_{k_2=0}^{n_2-1}
         \zeta_{n_2}^{j_2 k_2} S^{i_1,i_2}_{k_2 n_1 + k_1} \in \disc,
         & 0 \leq j_2 < n_2, \;\; 0 \leq k_1 < n_1, \\
      V^{i_1,i_2}_{j_2,k_1} & \coloneqq
         \zeta_n^{j_2 k_1} U^{i_1,i_2}_{j_2,k_1} \in \disc,
         & 0 \leq j_2 < n_2, \;\; 0 \leq k_1 < n_1, \\
      W^{i_1,i_2}_{j_2,j_1} & \coloneqq \frac{1}{n_1} \sum_{k_1=0}^{n_1-1}
         \zeta_{n_1}^{j_1 k_1} V^{i_1,i_2}_{j_2,k_1} \in \disc,
         & 0 \leq j_2 < n_2, \;\; 0 \leq j_1 < n_1,
   \end{align*}
   so that
   \[
      R^{i_1,i_2}_{j_1 n_2 + j_2} = n W^{i_1,i_2}_{j_2,j_1}.
   \]
   In the next three steps, we compute approximations for
   $U$, $V$ and $W$ in turn.

   \step{Step 3: inverse DFTs in $n_2$ direction}.
   Reinterpreting $S$ as an array of size
   $l_1 \times n_2 \times n_1 \times l_2$,
   i.e., as an element of $\disc^{l_1,n_2,n_1,l_2}$,
   the above formula for $U^{i_1,i_2}_{j_2,k_1}$
   amounts to transforming slices of $S$ in the $n_2$ direction.
   Therefore, by Proposition \ref{prop:dft-slices} and Lemma \ref{lem:chunks}
   (see also Remark \ref{rem:combine}),
   we may compute an approximation $\app U \in \appdisc^{l_1,n_2,n_1,l_2}$
   for~$U$ with $\error(\app U) < \error(\app S) + 12 n_2^2 < 24n^2$
   in time $O(\Mcost_M(m) + ln \Mcost_{M'}(b))$.

   \step{Step 4: apply twiddle factors}.
   Let $\gamma_{j_2,k_1} \coloneqq \zeta_n^{j_2 k_1}$.
   Following the method of the proof of Proposition \ref{prop:dft},
   we first compute approximations for $\gamma_{1,k_1} = \zeta_n^{k_1}$
   for $k_1 = 0, \ldots, n_1-1$ with $\error(\app\gamma_{1,k_1}) < 4n_1 - 6$.
   We then similarly compute approximations for
   $\gamma_{j_2,k_1} = (\gamma_{1,k_1})^{j_2}$, for all $k_1$ and $j_2$,
   with $\error(\app\gamma_{j_2,k_1}) < 4(n_1-1)(n_2-1) - 2$.
   Finally we use Lemma \ref{lem:mul} and (Lemma \ref{lem:chunks})
   to compute approximations for
   $V^{i_1,i_2}_{j_2,k_1} = \gamma_{j_2,k_1} U^{i_1,i_2}_{j_2,k_1}$,
   with $\error(\app V) < 4(n_1-1)(n_2-1) - 2 + \error(\app U) + 2 < 28n^2$.
   Altogether this requires time $O(ln \Mcost_{M'}(b))$.
   
   \step{Step 5: inverse DFTs in $n_1$ direction}.
   The definition of $W^{i_1,i_2}_{j_2,j_1}$ amounts to transforming
   the $V$ array in the $n_1$ direction.
   Again by Proposition \ref{prop:dft-slices} we may compute an approximation
   $\app W \in \appdisc^{l_1,n_2,n_1,l_2}$ with
   $\error(\app W) < \error(\app V) + 12n_1^2 < 40n^2$
   in time $O(\Mcost_M(m) + ln \Mcost_{M'}(b))$.

   \step{Step 6: scale and round}.
   The above estimates show that
   \begin{multline*}
      |2^b n \app W^{i_1,i_2}_{j_2,j_1} - A^{i_1,i_2}_{j_1,j_2}|
      = 2^b |n \app W^{i_1,i_2}_{j_2,j_1} - R^{i_1,i_2}_{j_1 n_2 + j_2}| \\
      = 2^b n |\app W^{i_1,i_2}_{j_2,j_1} - W^{i_1,i_2}_{j_2,j_1}|
      \leq 2^{b-p} n \cdot \error(\app W)
      < 2^{b-p} \cdot 40 n^3 \leq \tfrac12.
   \end{multline*}
   In other words, the nearest integer to
   $2^b n \app W^{i_1,i_2}_{j_2,j_1}$
   is exactly $A^{i_1,i_2}_{j_1,j_2}$.
   We may therefore recover $A^{i_1,i_2}_{j_1,j_2}$ by first multiplying
   $2^p \app W^{i_1,i_2}_{j_2,j_1} \in \ZZ[\ii]$ by $n$,
   and then dividing by $2^{p-b}$ and rounding to the nearest integer
   (the imaginary part may be discarded).
   The cost is $O(\Mcost_{M'}(b))$ per coefficient,
   and we have finally succeeded in transposing the original $A$ array.
\end{proof}

\section{Decomposing transpositions into subproblems}
\label{sec:decompose}

In this section we describe a method for decomposing
an $n_1 \times n_2$ transposition problem into smaller problems,
by cutting $n_1$ and/or $n_2$ into chunks.
We use the following terminology:
a \emph{problem of type $(l_1, n_1, n_2, l_2; b)$}
is a generalised transposition problem (see Section \ref{sec:main})
of size $l_1 \times n_1 \times n_2 \times l_2$ with $b$-bit coefficients.

\begin{lem}
   \label{lem:decomposition}
   Let $n_1, n_2, n'_1, n'_2, b \geq 1$,
   and assume that $n'_1 \divides n_1$ and $n'_2 \divides n_2$.
   Then an $n_1 \times n_2$ transposition problem
   with $b$-bit entries may be decomposed into:
   \begin{enumerate}
   \item a problem of type
   $\displaystyle \left(
      \frac{n_1}{n'_1}, n'_1, \frac{n_2}{n'_2}, n'_2; b\right)$,
   followed by
   \item a problem of type
   $\displaystyle \left(
      \frac{n_1 n_2}{n'_1 n'_2}, n'_1, n'_2, 1; b\right)$,
   followed by
   \item a problem of type
   $\displaystyle \left(1, \frac{n_1}{n'_1}, n_2, n'_1; b\right)$.
   \end{enumerate}
\end{lem}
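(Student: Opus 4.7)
The plan is to exhibit the three transpositions as successive reinterpretations of the tape layout, carrying out the bookkeeping by decomposing $i_1 = a_1 n'_1 + b_1$ (with $0 \leq a_1 < n_1/n'_1$, $0 \leq b_1 < n'_1$) and similarly $i_2 = a_2 n'_2 + b_2$. The row-major position of $A_{i_1,i_2}$ on the input tape is $i_1 n_2 + i_2 = a_1(n'_1 n_2) + b_1 n_2 + a_2 n'_2 + b_2$, which coincides with the row-major position of a $4$D array of shape $(n_1/n'_1,\, n'_1,\, n_2/n'_2,\, n'_2)$ indexed by $(a_1,b_1,a_2,b_2)$. Thus the input is already arranged exactly as step~(1) requires.

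Step~(1) swaps the middle two axes to produce shape $(n_1/n'_1,\, n_2/n'_2,\, n'_1,\, n'_2)$ with index $(a_1, a_2, b_1, b_2)$. Flattening the outer pair into a single index $c = a_1 (n_2/n'_2) + a_2$ of range $n_1 n_2/(n'_1 n'_2)$ is a zero-cost reinterpretation, and yields precisely the input shape $(n_1 n_2/(n'_1 n'_2),\, n'_1,\, n'_2,\, 1)$ required by step~(2). Applying step~(2) transposes each of the inner $n'_1 \times n'_2$ blocks, producing the layout $(a_1, a_2, b_2, b_1)$. Recombining $(a_2, b_2)$ into $i_2 = a_2 n'_2 + b_2$, whose stride in the current layout is $n'_1$ and whose range is $n_2$, the array can again be reinterpreted for free as having shape $(1,\, n_1/n'_1,\, n_2,\, n'_1)$ with index $(\ast, a_1, i_2, b_1)$; this is the input shape of step~(3). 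Step~(3) swaps its middle two axes to yield shape $(1,\, n_2,\, n_1/n'_1,\, n'_1)$ with index $(\ast, i_2, a_1, b_1)$, and since $a_1 n'_1 + b_1 = i_1$, this is the row-major layout of the $n_2 \times n_1$ transpose $A^T$.

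The only real content of the argument is verifying, at each of the three junctions, that the tape layout produced by the previous step matches the declared input shape of the next generalised transposition, so that no extra data movement is needed between steps. This amounts to checking that each intervening reinterpretation is an associative regrouping of adjacent index coordinates into (or out of) a flat composite index, and this stride-alignment bookkeeping is the main—and essentially the only—obstacle in the proof.
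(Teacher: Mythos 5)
Your proof is correct and takes essentially the same approach as the paper's: both decompose $i_1$ and $i_2$ into outer/inner digits, then track how the row-major index (equivalently the stride of each digit) changes under the three generalised transpositions, verifying that each tape layout is a free reinterpretation of the next step's declared input shape. The only difference is notational (your $a_1,b_1,a_2,b_2$ are the paper's $j_1,k_1,j_2,k_2$) and that you emphasise the "stride-alignment" viewpoint explicitly, which the paper leaves implicit.
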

\begin{proof}
   The decomposition is illustrated in Figure \ref{fig:decomposition}.
   We begin with an $n_1 \times n_2$ matrix $A$,
   where the entry $A_{i_1,i_2}$ is stored at index $i_1 n_2 + i_2$.
   Let
   \begin{align*}
      i_1 & = j_1 n'_1 + k_1, \qquad
         0 \leq k_1 < n'_1, \quad 0 \leq j_1 < n_1/n'_1, \\
      i_2 & = j_2 n'_2 + k_2, \qquad
         0 \leq k_2 < n'_2, \quad 0 \leq j_2 < n_2/n'_2.
   \end{align*}
   Then the index $i_1 n_2 + i_2$ may be rewritten as
   \[
      j_1 (n'_1 n_2) + k_1 n_2 + j_2 n'_2 + k_2.
   \]

   In step (1), we exchange the $k_1$ and $j_2$ coordinates.
   After this, the $A_{i_1,i_2}$ entry is stored at index
   \[
      j_1 (n'_1 n_2) + j_2 (n'_1 n_2') + k_1 n'_2 + k_2.
   \]
   
   In step (2), we exchange the $k_1$ and $k_2$ coordinates.
   After this, the $A_{i_1,i_2}$ entry is stored at index
   \begin{align*}
      & \mathrel{\phantom{=}}
         j_1 (n'_1 n_2) + j_2 (n'_1 n_2') + k_2 n'_1 + k_1 \\
      & = j_1 (n'_1 n_2) + i_2 n'_1 + k_1.
   \end{align*}
   
   In step (3), we exchange the $j_1$ and $i_2$ coordinates.
   After this, the $A_{i_1,i_2}$ entry is stored at index
   \begin{align*}
      & \mathrel{\phantom{=}}
         i_2 n_1 + j_1 n'_1 + k_1 \\
      & = i_2 n_1 + i_1,
   \end{align*}
   and we have succeeded in transposing the array.
\end{proof}

\begin{figure}
   % thanks to Claude for helping to write this tikz code
   
   % Parameters for vertical spacing
   \newcommand{\rowheight}{1}
   \newcommand{\rowspace}{1.75}
   \newcommand{\totalheight}{8}
   \newcommand{\arrowgap}{0.65}
   
   % Parameters for block widths
   \newcommand{\nFracOneWidth}{3}
   \newcommand{\nOneWidth}{1.4}
   \newcommand{\nFracTwoWidth}{5}
   \newcommand{\nTwoWidth}{0.8}
   \newcommand{\bWidth}{0.8}
   
   \pgfmathsetmacro{\totalwidth}%
      {\nFracOneWidth + \nFracTwoWidth + \nOneWidth + \nTwoWidth + \bWidth}

   \begin{tikzpicture}[
    box/.style={rectangle, draw, minimum height=\rowheight cm},
    arrow/.style={->, thick}]

   % Define colors
   \definecolor{colourOne}{RGB}{235,235,235}
   \definecolor{colourTwo}{RGB}{215,215,215}
   
   % Calculate y-positions based on parameters
   \pgfmathsetmacro{\rowOne}{\totalheight}
   \pgfmathsetmacro{\rowTwo}{\totalheight - \rowspace}
   \pgfmathsetmacro{\rowThree}{\totalheight - 2*\rowspace}
   \pgfmathsetmacro{\rowFour}{\totalheight - 3*\rowspace}
   
   \newcommand{\currentx}{0}
   \newcommand{\newrow}{\pgfmathsetmacro{\currentx}{0}}
   
   \newcommand{\placeFracOne}[1]{%
      \node[box, fill=colourOne, minimum width=\nFracOneWidth cm]
         at (\currentx + \nFracOneWidth/2, #1) {$n_1/n'_1$};%
      \pgfmathsetmacro{\currentx}{\currentx + \nFracOneWidth}%
   }
   \newcommand{\placeOne}[1]{%
      \node[box, fill=colourOne, minimum width=\nOneWidth cm]
         at (\currentx + \nOneWidth/2, #1) {$n'_1$};%
      \pgfmathsetmacro{\currentx}{\currentx + \nOneWidth}%
   }
   \newcommand{\placeFracTwo}[1]{%
      \node[box, fill=colourTwo, minimum width=\nFracTwoWidth cm]
         at (\currentx + \nFracTwoWidth/2, #1) {$n_2/n'_2$};%
      \pgfmathsetmacro{\currentx}{\currentx + \nFracTwoWidth}%
   }
   \newcommand{\placeTwo}[1]{%
      \node[box, fill=colourTwo, minimum width=\nTwoWidth cm]
         at (\currentx + \nTwoWidth/2, #1) {$n'_2$};%
      \pgfmathsetmacro{\currentx}{\currentx + \nTwoWidth}%
   }
   \newcommand{\placeB}[1]{%
      \node[box, fill=white, minimum width=\bWidth cm]
         at (\currentx + \bWidth/2, #1) {$b$};%
      \pgfmathsetmacro{\currentx}{\currentx + \bWidth}%
   }
   
   % the actual rows
   \newrow
   \placeFracOne{\rowOne}
   \placeOne{\rowOne}
   \placeFracTwo{\rowOne}
   \placeTwo{\rowOne}
   \placeB{\rowOne}
   
   \newrow
   \placeFracOne{\rowTwo}
   \placeFracTwo{\rowTwo}
   \placeOne{\rowTwo}
   \placeTwo{\rowTwo}
   \placeB{\rowTwo}
   
   \newrow
   \placeFracOne{\rowThree}
   \placeFracTwo{\rowThree}
   \placeTwo{\rowThree}
   \placeOne{\rowThree}
   \placeB{\rowThree}
   
   \newrow
   \placeFracTwo{\rowFour}
   \placeTwo{\rowFour}
   \placeFracOne{\rowFour}
   \placeOne{\rowFour}
   \placeB{\rowFour}
   
   % Arrows between rows
   \newcommand{\arrowX}{\totalwidth/2 - 0.5}
   \draw[arrow,thin]
      (\arrowX,\rowOne-\arrowgap) -- (\arrowX,\rowTwo+\arrowgap)
      node[right=5pt,midway] {step (1)};
   \draw[arrow,thin]
      (\arrowX,\rowTwo-\arrowgap) -- (\arrowX,\rowThree+\arrowgap)
      node[right=5pt,midway] {step (2)};
   \draw[arrow,thin]
      (\arrowX,\rowThree-\arrowgap) -- (\arrowX,\rowFour+\arrowgap)
      node[right=5pt,midway] {step (3)};
   \end{tikzpicture}
   \caption{Decomposing a transposition (Lemma \ref{lem:decomposition}).}
   \label{fig:decomposition}
\end{figure}

The next result analyses a Turing machine implementation
of the decomposition of Lemma \ref{lem:decomposition}.
We focus on a special case in which $n'_1$ and $n'_2$
are taken to be exponentially smaller than $\max(n_1, n_2)$.
The basic idea will be to reduce steps (1) and (3)
(the ``outer transpositions'') to integer multiplication problems
via Theorem \ref{thm:main-medium},
and to delegate step (2) (the ``inner transposition'')
to some other transposition machine $T'$,
which will be specified later.

For technical reasons it will be convenient to work with power-of-two
parameter sizes.
A \emph{dyadic transposition machine} is a machine having the
same interface as an ordinary transposition machine,
but with the additional restriction that the parameters $n_1, n_2, b$
must lie in $2^\NN \coloneqq \{1, 2, 4, \ldots\}$.

\begin{prop}
   \label{prop:one-step}
   Let $M, M'$ be multiplication machines,
   and let $T'$ be a dyadic transposition machine.
   Then there exists a dyadic transposition machine $T$
   with the following properties.
   
   Let $n_1, n_2, b \in 2^\NN$ and $m \geq 1$,
   and assume that $n_1 n_2 b \leq m$.
   Let
   \[
      s \coloneqq 2^{\lg \lg \max(n_1, n_2)},
   \]
   and define
   \[
      n'_1 \coloneqq \min(n_1, s), \qquad n'_2 \coloneqq \min(n_2, s),
      \qquad m' \coloneqq n'_1 n'_2 b.
   \]
   Then
   \[
      \Tcost_T(m; n_1, n_2, b) <
         \frac{n_1 n_2}{n'_1 n'_2} \Tcost_{T'}(m'; n'_1, n'_2, b)
         + O\left(\Mcost_M(m) + \frac{m}{s} \Mcost_{M'}(s)\right).
   \]
\end{prop}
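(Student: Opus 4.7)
My plan would be to apply Lemma~\ref{lem:decomposition} with $n'_1, n'_2$ as specified in the statement and to handle the three resulting subproblems in different ways. The divisibility conditions $n'_1 \divides n_1$ and $n'_2 \divides n_2$ hold automatically since all four quantities are powers of $2$ with $n'_i \leq n_i$, so the decomposition is valid.

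For the middle subproblem (step~(2)), of type $(n_1 n_2/(n'_1 n'_2), n'_1, n'_2, 1; b)$, the outer parameter $l_2 = 1$ means the data decomposes into $n_1 n_2/(n'_1 n'_2)$ ordinary dyadic transpositions of $n'_1 \times n'_2$ matrices with $b$-bit entries, each occupying a contiguous block of $m' = n'_1 n'_2 b$ bits on the tape. I would simply run $T'$ on each block in turn; the aggregate cost is $(n_1 n_2/(n'_1 n'_2))\,\Tcost_{T'}(m'; n'_1, n'_2, b)$. Any head-movement bookkeeping between blocks is linear in $m$ and can be absorbed into the $\Mcost_M(m)$ term since $\Mcost_M(m) = \Omega(m)$.

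For the two outer subproblems, I would invoke Theorem~\ref{thm:main-medium} after reinterpreting the $b$-bit entries as sequences of finer $s$-bit ``micro-entries''. Consider step~(1), of type $(n_1/n'_1, n'_1, n_2/n'_2, n'_2; b)$. If $n_2/n'_2 = 1$ the step is a no-op; otherwise $n_2 > n'_2$, which forces $n'_2 = s$ (since $n'_2 = \min(n_2, s)$), and in particular $s \divides n'_2 b$. I would then view the data as a generalised transposition problem of type $(n_1/n'_1, n'_1, n_2/n'_2, n'_2 b/s; s)$ with the same physical tape layout, and invoke Theorem~\ref{thm:main-medium} with its ``$b$'' taken to be $s$. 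The first hypothesis $\lg(n'_1 \cdot n_2/n'_2) < Cs$ follows from $n'_1 \cdot n_2/n'_2 \leq n_1 n_2 \leq \max(n_1,n_2)^2$ together with $s \geq \lg \max(n_1,n_2)$ by construction; the second hypothesis is comfortable since $\lg^2 s$ is doubly logarithmic in $\max(n_1,n_2)$; and the total bit size is $n_1 n_2 b \leq m$. The resulting cost is $O(\Mcost_M(m) + (n_1 n_2 b/s)\Mcost_{M'}(s)) = O(\Mcost_M(m) + (m/s)\Mcost_{M'}(s))$. Step~(3), of type $(1, n_1/n'_1, n_2, n'_1; b)$, would be treated symmetrically: the non-trivial case $n_1/n'_1 \geq 2$ forces $n'_1 = s$, so chunking yields type $(1, n_1/n'_1, n_2, n'_1 b/s; s)$ and the same cost bound applies. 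Summing the three contributions then recovers the claimed inequality.

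The main obstacle I anticipate is precisely the micro-entry reinterpretation in the outer steps. The cost term $(m/s)\Mcost_{M'}(s)$ in the statement strongly suggests that Theorem~\ref{thm:main-medium} should be applied at coefficient size $s$ rather than $b$, and the regime $b < s$ (which can occur when $b$ is very small) looks delicate because a naive splitting of $b$-bit entries into $s$-bit pieces is not well-defined. The rescue is the observation above that whenever an outer step is non-trivial, the relevant $n'_i$ is forced to equal $s$, so that $n'_i b$ is automatically a multiple of $s$ and the reinterpretation along the innermost (``$l_2$'') dimension is legal; verifying this carefully, together with the hypotheses of Theorem~\ref{thm:main-medium} in both chunked problems, is where the bulk of the work lies.
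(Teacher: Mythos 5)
Your proposal is correct and follows essentially the same route as the paper: decompose via Lemma~\ref{lem:decomposition}, dispatch the inner transposition (2) to $T'$ block-by-block, observe that non-triviality of step (1) (resp.\ (3)) forces $n'_2 = s$ (resp.\ $n'_1 = s$), and reinterpret each $s b$-bit block as $b$ chunks of $s$ bits so that Theorem~\ref{thm:main-medium} can be invoked with entry size $s$; the hypothesis checks and cost accounting match the paper's.
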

\begin{proof}
   Clearly $n'_1 \divides n_1$, since $n'_1 \leq n_1$ and $n'_1 \in 2^\NN$.
   Similarly $n'_2 \divides n_2$.
   Let us now consider in turn each step from Lemma \ref{lem:decomposition}.

   \medskip
   \emph{Step (1)}.
   If $s \geq n_2$ then $n'_2 = n_2$,
   so this step becomes trivial.
   Assume therefore that $s < n_2$,
   so that $n'_2 = s$,
   and we are dealing with a problem of type
   \[
      \left(\frac{n_1}{n'_1}, n'_1, \frac{n_2}{s}, s; b\right).
   \]
   This is equivalent to a problem of type
   \[
      \left(\frac{n_1}{n'_1}, n'_1, \frac{n_2}{s}, b; s\right),
   \]
   i.e., by simply reinterpreting each length-$s$ array of $b$-bit entries
   as a length-$b$ array of $s$-bit entries.
   We will effect this transposition by invoking
   Theorem \ref{thm:main-medium} with the parameters
   \[
      (l_1, n_1, n_2, l_2, b) \coloneqq
         \left(\frac{n_1}{n'_1}, n'_1, \frac{n_2}{s}, b, s\right)
   \]
   (and $m \coloneqq m$, $M \coloneqq M$, $M' \coloneqq M'$).
   Let us verify the hypotheses \eqref{eq:main-medium-hypothesis}
   for these parameters.
   The first inequality states that $\lg(n'_1 (n_2/s)) < Cs$
   for a suitable constant $C > 0$.
   This holds as
   \[
      \lg\left(n'_1 \frac{n_2}{s}\right)
         \leq \lg(n_1 n_2) \leq 2 \lg \max(n_1, n_2) \leq 2s.
   \]
   For the second inequality,
   we must show that
   \[
      \lg^2 s < C' \frac{n_1}{n'_1} n'_1 \frac{n_2}{s} b = C' n_1 n_2 b / s
   \]
   for suitable $C'$.
   Writing $r \coloneqq \max(n_1, n_2)$,
   we have $s = 2^{\lg \lg r} \leq 2 \lg r$ and $\lg s = \lg \lg r$,
   so
   \[
      s \lg^2 s \leq (2 \lg r) (\lg \lg r)^2 < C' r \leq C' n_1 n_2 b
   \]
   for large enough $C'$ (in fact $C' = 6$ will do).
   The third inequality is clear.
   According to Theorem \ref{thm:main-medium},
   the cost of this transposition is
   \[
      O\left(\Mcost_M(m) +
         \frac{n_1}{n'_1} n'_1 \frac{n_2}{s} b \Mcost_{M'}(s)\right)
      = O\left(\Mcost_M(m) + \frac{m}{s} \Mcost_{M'}(s)\right).
   \]

   \medskip
   \emph{Step (2)}.
   This problem is equivalent to
   $n_1 n_2/n'_1 n'_2$ separate two-dimensional
   $n'_1 \times n'_2$ transposition problems with $b$-bit coefficients.
   Using $T'$, the cost is
   \[
      \frac{n_1 n_2}{n'_1 n'_2} \Tcost_{T'}(m'; n'_1, n'_2, b).
   \]

   \medskip
   \emph{Step (3)}.
   If $s \geq n_1$ then $n'_1 = n_1$ and the problem is trivial,
   so assume that $s < n_1$.
   Then $n'_1 = s$ and we face a problem of type
   \[
      \left(1, \frac{n_1}{s}, n_2, s; b\right).
   \]
   As in step (1), we may reinterpret this as a problem of type
   \[
      \left(1, \frac{n_1}{s}, n_2, b; s\right).
   \]
   Again we invoke Theorem \ref{thm:main-medium},
   this time with parameters
   \[
      (l_1, n_1, n_2, l_2, b) \coloneqq
         \left(1, \frac{n_1}{s}, n_2, b, s\right).
   \]
   To see that the hypotheses \eqref{eq:main-medium-hypothesis}
   are satisfied, observe that
   \[
      \lg\left(\frac{n_1}{s} n_2 \right) \leq \lg(n_1 n_2) \leq 2s,
   \]
   and by the same argument as before,
   \[
      \lg^2 s < C' n_1 n_2 b / s = C' \frac{n_1}{s} n_2 b.
   \]
   The cost of this transposition is therefore given by
   \[
      O\left(\Mcost_M(m) +
         \frac{n_1}{s} n_2 b \Mcost_{M'}(s)\right)
      = O\left(\Mcost_M(m) + \frac{m}{s} \Mcost_{M'}(s)\right).
      \qedhere
   \]
\end{proof}

\section{Proofs of main results}
\label{sec:proofs}

In this section we prove the main results stated in Section \ref{sec:summary}.

\subsection{Proof of Theorem \ref{thm:main-implication}}

We first establish the following reduction
from transposition to multiplication.

\begin{prop}
   \label{prop:main-dyadic}
   Let $M$ be a multiplication machine.
   Then there exists a dyadic transposition machine $T$ such that
   \[
      \Tcost_T(m; n_1, n_2, b) = O(\Mcost_M(m) + m \lg \lg \max(n_1, n_2)).
   \]
\end{prop}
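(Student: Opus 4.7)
The plan is to apply Proposition \ref{prop:one-step} exactly once, with a single carefully chosen triple of auxiliary machines, and then verify that the resulting bound matches what we want. Specifically, in the roles of $(M, M', T')$ in the statement of Proposition \ref{prop:one-step}, I would take $M$ itself (the given multiplication machine), the fast multiplication algorithm $M^*$ of \cite{HvdH-nlogn} satisfying $\Mcost_{M^*}(p) = O(p \lg p)$, and a dyadic restriction of the folklore transposition algorithm $T^*$ of \eqref{eq:Tstar-bound}, which satisfies $\Tcost_{T^*}(m; n_1, n_2, b) = O(m \lg \min(n_1, n_2))$. Since $T^*$ works for arbitrary input dimensions, there is no obstacle to using it as a dyadic transposition machine.

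With these choices, Proposition \ref{prop:one-step} yields a dyadic transposition machine $T$ with
\[
   \Tcost_T(m; n_1, n_2, b) < \frac{n_1 n_2}{n'_1 n'_2} \Tcost_{T^*}(m'; n'_1, n'_2, b)
      + O\!\left(\Mcost_M(m) + \frac{m}{s}\,\Mcost_{M^*}(s)\right),
\]
where $s = 2^{\lg \lg \max(n_1, n_2)}$, $n'_i = \min(n_i, s)$ and $m' = n'_1 n'_2 b$. Note that $\lg s = \lg \lg \max(n_1, n_2)$, which is precisely the overhead term we are aiming for.

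It then remains to bound each of the three contributions. For the inner transposition term, the bound $\Tcost_{T^*}(m'; n'_1, n'_2, b) = O(m' \lg \min(n'_1, n'_2)) \leq O(n'_1 n'_2 b \lg s)$ combined with the multiplicity $n_1 n_2 / n'_1 n'_2$ yields a total of $O(n_1 n_2 b \lg s) \leq O(m \lg s)$. For the small-multiplication term, plugging $\Mcost_{M^*}(s) = O(s \lg s)$ into $(m/s)\Mcost_{M^*}(s)$ gives $O(m \lg s)$. Summing everything gives exactly $O(\Mcost_M(m) + m \lg s) = O(\Mcost_M(m) + m \lg \lg \max(n_1, n_2))$.

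There is no real obstacle here: all the heavy lifting (reducing the transposition to two ``outer'' transpositions handled via Theorem \ref{thm:main-medium} plus a cleanly factored collection of inner subproblems) is already encapsulated in Proposition \ref{prop:one-step}, and the budget for the inner transpositions is deliberately large enough that one application of the folklore algorithm suffices. The only care needed is in the bookkeeping: verifying that $\lg \min(n'_1, n'_2) \leq \lg s$ and that the factor of $1/s$ in the small-multiplication overhead exactly cancels the factor $s \lg s$ coming from $\Mcost_{M^*}$, leaving the same $\lg \lg \max(n_1, n_2)$ overhead from both sources.
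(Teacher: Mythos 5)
Your proof is correct and follows essentially the same route as the paper: apply Proposition \ref{prop:one-step} once with $M' = M^*$ and $T' = T^*$, then use \eqref{eq:Mstar-bound}, \eqref{eq:Tstar-bound}, the bounds $n'_i \leq s$ and $n_1 n_2 b \leq m$, and the identity $\lg s = \lg \lg \max(n_1, n_2)$ to collapse everything to $O(\Mcost_M(m) + m \lg \lg \max(n_1, n_2))$.
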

\begin{proof}
   We apply Proposition \ref{prop:one-step}
   with $M' \coloneqq M^*$ and $T' \coloneqq T^*$.
   By \eqref{eq:Mstar-bound} and \eqref{eq:Tstar-bound},
   the resulting dyadic transposition machine $T$ satisfies
   \begin{align*}
      \Tcost_T(m; n_1, n_2, b)
         & < \frac{n_1 n_2}{n'_1 n'_2} \Tcost_{T^*}(m'; n'_1, n'_2, b)
            + O\left(\Mcost_M(m) + \frac{m}{s} \Mcost_{M^*}(s) \right) \\
         & = O\left( n_1 n_2 b \lg \min(n'_1, n'_2) + \Mcost_M(m)
               + m \lg s \right),
   \end{align*}
   where $n'_1, n'_2, s, m'$ are defined as in Proposition \ref{prop:one-step}.
   Since $n'_i \leq s$, $n_1 n_2 b \leq m$
   and $\lg s = \lg \lg \max(n_1, n_2)$,
   \[
      \Tcost_T(m; n_1, n_2, b)
         = O(\Mcost_M(m) + m \lg \lg \max(n_1, n_2)).
      \qedhere
   \]
\end{proof}

It is straightforward to remove the dyadic restriction:
\begin{cor}
   \label{cor:main-drop-dyadic}
   Let $M$ be a multiplication machine.
   Then there exists an (ordinary non-dyadic)
   transposition machine $T$ such that
   \[
      \Tcost_T(m; n_1, n_2, b) = O(\Mcost_M(m) + m \lg \lg \max(n_1, n_2)).
   \]
\end{cor}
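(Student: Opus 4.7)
The plan is to reduce the non-dyadic case to the dyadic one by rounding each of $n_1, n_2, b$ up to the nearest power of two. Let $\tilde n_i$ (for $i = 1, 2$) be the smallest power of two with $\tilde n_i \geq n_i$, and similarly let $\tilde b$ be the smallest power of two with $\tilde b \geq b$; then $\tilde n_i < 2 n_i$ and $\tilde b < 2 b$. Setting $\tilde m \coloneqq 8m$, we have $\tilde n_1 \tilde n_2 \tilde b < 8 n_1 n_2 b \leq \tilde m$.

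The first concrete step is a preprocessing pass that writes to a work tape a padded $\tilde n_1 \times \tilde n_2$ matrix with $\tilde b$-bit entries: each original $b$-bit entry is zero-extended to $\tilde b$ bits, each of the $n_1$ original rows is then followed by $\tilde n_2 - n_2$ all-zero $\tilde b$-bit entries, and finally $\tilde n_1 - n_1$ completely zero rows are appended. Since the total padded size is $\tilde n_1 \tilde n_2 \tilde b = O(m)$ bits, this pass runs in $O(m)$ time. Next, apply the dyadic transposition machine $T_{dy}$ supplied by Proposition \ref{prop:main-dyadic} to the padded input, with size parameter $\tilde m$; this produces the transpose of the padded matrix in time
\[
   O\bigl(\Mcost_M(\tilde m) + \tilde m \lg \lg \max(\tilde n_1, \tilde n_2)\bigr).
\]
A final pass reads off the top-left $n_2 \times n_1$ sub-block of the transposed padded matrix, truncates each retained entry to its low-order $b$ bits, and writes the result to the output tape in $O(m)$ time.

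The only subtle point is converting $\Mcost_M(\tilde m)$ to $\Mcost_M(m)$ in the complexity bound, since for an arbitrary multiplication machine $M$ there is no a priori monotonicity relation between $\Mcost_M$ at two input sizes that differ by a constant factor. This is handled exactly as in the proof of Theorem \ref{thm:main-medium}: Lemma \ref{lem:chunks} lets us replace each internal invocation of $M$ at size $\tilde m = O(m)$ by a constant number of invocations at size $m$, yielding an effective bound $\Mcost_M(\tilde m) = O(\Mcost_M(m))$ inside the wrapper machine $T$. Combining this with $\tilde m = O(m)$ and $\lg \lg \max(\tilde n_1, \tilde n_2) = O(\lg \lg \max(n_1, n_2))$ yields the desired bound.
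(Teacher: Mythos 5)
Your proof is correct and follows essentially the same approach as the paper: round $n_1, n_2, b$ up to powers of two, zero-pad the input into an $\tilde n_1 \times \tilde n_2$ matrix of $\tilde b$-bit entries with $\tilde m := 8m$, invoke Proposition~\ref{prop:main-dyadic}, extract the desired $n_2 \times n_1$ sub-block in a final linear pass, and absorb the constant-factor blow-up in the size parameter via Lemma~\ref{lem:chunks}. No meaningful differences.
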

\begin{proof}
   Let $n_1, n_2, b, m \geq 1$ be arbitrary input parameters
   such that $n_1 n_2 b \leq m$.
   We round up the parameters to powers of two, setting
   \[
      \tilde n_1 \coloneqq 2^{\lceil \log_2 n_1 \rceil}, \qquad
      \tilde n_2 \coloneqq 2^{\lceil \log_2 n_2 \rceil}, \qquad
      \tilde b \coloneqq 2^{\lceil \log_2 b \rceil}.
   \]
   Let $\tilde m \coloneqq 8m$ so that
   $\tilde n_1 \tilde n_2 \tilde b \leq \tilde m$.

   We first copy the $n_1 \times n_2$ input matrix with $b$-bit entries
   into an $\tilde n_1 \times \tilde n_2$ matrix with $\tilde b$-bit entries,
   by suitable zero-padding.
   Clearly this can be done in time $O(m)$.
   We then invoke the machine from Proposition \ref{prop:main-dyadic}
   with parameters
   $\tilde n_1, \tilde n_2, \tilde b, \tilde m$ to transpose this matrix.
   This requires time
   \[
      O(\Mcost_M(\tilde m) + \tilde m \lg \lg \max(\tilde n_1, \tilde n_2))
      = O(\Mcost_M(8m) + m \lg \lg \max(n_1, n_2)).
   \]
   Finally we copy everything back to the desired
   $n_2 \times n_1$ output matrix
   (with $b$-bit entries) in time $O(m)$.

   To get the desired complexity bound, we make one more small change:
   using Lemma \ref{lem:chunks}, we replace all calls to $M$ with
   size parameter $8m$ by several calls with $m$ instead.
   The resulting machine $T$ satisfies
   \[
      \Tcost_T(m; n_1, n_2, b) = O(\Mcost_M(m) + m \lg \lg \max(n_1, n_2)).
      \qedhere
   \]
\end{proof}

Specialising to the binary case,
and restating in terms of lower bounds, we obtain the following.
\begin{cor}
   \label{cor:main}
   Assume that
   \[
      \Tcost_T(m) = \omega(m \lg \lg m)
   \]
   for every binary transposition machine $T$.
   Then for any multiplication machine $M$,
   there exists a binary transposition machine $T$ such that
   \[
      \Mcost_M(m) = \Omega(\Tcost_T(m)).
   \]
\end{cor}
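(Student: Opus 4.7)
The plan is to specialise Corollary \ref{cor:main-drop-dyadic} to the binary setting and then use the hypothesis $\Tcost_T(m) = \omega(m \lg \lg m)$ to absorb the lower-order term.

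First, given the multiplication machine $M$, I would invoke Corollary \ref{cor:main-drop-dyadic} to obtain a general transposition machine $T_0$ with
\[
   \Tcost_{T_0}(m; n_1, n_2, b) = O(\Mcost_M(m) + m \lg \lg \max(n_1, n_2)).
\]
I would then build a binary transposition machine $T$ simply by fixing $b = 1$ and $n_1 = n_2 = \lfloor m^{1/2} \rfloor$ when $T$ is invoked on input size $m$. Since $\max(n_1, n_2) \leq m^{1/2}$, we have $\lg \lg \max(n_1, n_2) \leq \lg \lg m + O(1)$, and so
\[
   \Tcost_T(m) = O(\Mcost_M(m) + m \lg \lg m).
\]

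Next, I would apply the hypothesis of the corollary to this particular $T$: by assumption, $\Tcost_T(m) = \omega(m \lg \lg m)$. If we write $\Tcost_T(m) \leq C_1 \Mcost_M(m) + C_2 m \lg \lg m$ for all sufficiently large $m$, then the $\omega$ hypothesis means that $C_2 m \lg \lg m \leq \tfrac12 \Tcost_T(m)$ for all $m$ beyond some threshold. Rearranging gives $\tfrac12 \Tcost_T(m) \leq C_1 \Mcost_M(m)$, hence $\Mcost_M(m) = \Omega(\Tcost_T(m))$, as required.

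There is really no hard step here: Corollary \ref{cor:main-drop-dyadic} does all of the work, and what remains is purely bookkeeping (specialising the parameters to the square binary case and noting that the $m \lg \lg m$ term is dominated by $\Tcost_T(m)$ under the standing hypothesis). The only point that needs a moment of care is that the hypothesis must be applied to the \emph{same} machine $T$ that was constructed from $M$, rather than to some abstract lower bound; once this is recognised the estimate is immediate.
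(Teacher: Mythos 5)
Your proof is correct and follows essentially the same route as the paper: specialise Corollary \ref{cor:main-drop-dyadic} to $n_1 = n_2 = \lfloor m^{1/2} \rfloor$, $b = 1$, then use the $\omega(m \lg \lg m)$ hypothesis (applied to the constructed machine $T$ itself) to absorb the error term and obtain $\Mcost_M(m) = \Omega(\Tcost_T(m))$. The only differences from the paper's proof are cosmetic in how the constants are labelled and rearranged.
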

\begin{proof}
   Taking $n_1, n_2 \coloneqq \lfloor m^{1/2} \rfloor$
   and $b \coloneqq 1$ in Corollary \ref{cor:main-drop-dyadic},
   we get a binary transposition machine satisfying
   \[
      \Tcost_T(m) = O(\Mcost_M(m) + m \lg \lg m).
   \]
   Therefore, there is an absolute constant $C > 0$ such that
   \[
      \Tcost_T(m) < C (\Mcost_M(m) + m \lg \lg m)
   \]
   for all $m \geq 1$, and hence
   \[
      \Mcost_M(m) > \frac{\Tcost_T(m)}{C}  - m \lg \lg m
      = \left(\frac{1}{C} - \frac{m \lg \lg m}{\Tcost_T(m)}\right) \Tcost_T(m)
   \]
   for all $m \geq 1$.
   Since we assumed that $\Tcost_T(m) = \omega(m \lg \lg m)$,
   we conclude that in fact $\Mcost_M(m) = \Omega(\Tcost_T(m))$.
\end{proof}

Theorem \ref{thm:main-implication} now follows immediately
from Corollary \ref{cor:main}.

\begin{rem}
   The $m \lg \lg m$ term above arises from the use of the folklore
   transposition algorithm to handle the ``inner'' transpositions
   of size exponentially smaller than $m$.
   In the following sections we will iterate the decomposition strategy
   from Section \ref{sec:decompose},
   in an attempt to make the innermost transpositions even smaller,
   and hence further reduce the size of the $m \lg \lg m$ term.
\end{rem}

\subsection{Proof of Theorem \ref{thm:main-ell-implication}}

For any multiplication machine~$M$, define
\[
   \Mcostinc_M(m) \coloneqq
      m \cdot \max_{k \leq m} \frac{\Mcost_M(k)}{k}, \qquad m \geq 1.
\]
Observe that by construction, $\Mcostinc_M(m)/m$ is non-decreasing in $m$.

\begin{rem}
   If $\Mcost_M(m)/m$ is ``non-decreasing up to a constant'',
   i.e., there exists a constant $C \geq 1$ such that
   $\Mcost_M(m')/m' \leq C \Mcost_M(m)/m$ for all $m' \leq m$,
   which is the case for all ``reasonable'' multiplication algorithms
   known to the authors,
   then $\Mcostinc_M(m) = O(\Mcost_M(m))$.
\end{rem}

We now have the following variant of Proposition \ref{prop:main-dyadic},
which has a smaller error term,
but at the cost of replacing $\Mcost(m)$ by $\Mcostinc(m)$.
\begin{prop}
   \label{prop:main-ell-dyadic}
   Fix an integer $\ell \geq 2$,
   and let $M$ be a multiplication machine.
   Then there exists a dyadic transposition machine $T$ such that
   \[
      \Tcost_T(m; n_1, n_2, b)
         = O(\Mcostinc_M(m) + m \lg^{\circ \ell} \max(n_1, n_2)).
   \]
\end{prop}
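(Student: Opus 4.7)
The plan is to prove Proposition \ref{prop:main-ell-dyadic} by induction on $\ell$, using Proposition \ref{prop:one-step} to peel off one layer of $\lg$ at each step. The base case $\ell = 2$ is immediate from Proposition \ref{prop:main-dyadic}, since $\lg^{\circ 2} = \lg \lg$ and $\Mcost_M(m) \leq \Mcostinc_M(m)$.

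For the inductive step, suppose the claim is known for $\ell - 1$, yielding a dyadic transposition machine $T''$ whose running time is $O(\Mcostinc_M(m) + m \lg^{\circ (\ell - 1)} \max(n_1, n_2))$. I would apply Proposition \ref{prop:one-step} with $M' \coloneqq M$ and $T' \coloneqq T''$, producing a new dyadic transposition machine $T$. Writing $r \coloneqq \max(n_1, n_2)$, $s \coloneqq 2^{\lg \lg r}$, $n'_i \coloneqq \min(n_i, s)$, and $m' \coloneqq n'_1 n'_2 b$, Proposition \ref{prop:one-step} bounds $\Tcost_T(m; n_1, n_2, b)$ by
\[
   \frac{n_1 n_2}{n'_1 n'_2}\,\Tcost_{T''}(m'; n'_1, n'_2, b)
      + O\!\left(\Mcost_M(m) + \frac{m}{s} \Mcost_M(s)\right).
\]

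The key observation is that $\Mcostinc_M(m)/m$ is non-decreasing by construction. Since $s \leq m$ and $m' \leq m$, this yields both $\frac{m}{s}\Mcost_M(s) \leq m \cdot \Mcostinc_M(s)/s \leq \Mcostinc_M(m)$ and $\frac{n_1 n_2}{n'_1 n'_2}\Mcostinc_M(m') \leq n_1 n_2 b \cdot \Mcostinc_M(m)/m \leq \Mcostinc_M(m)$. Substituting the inductive bound for $\Tcost_{T''}$ and using $\frac{n_1 n_2}{n'_1 n'_2} m' = n_1 n_2 b \leq m$, together with $\max(n'_1, n'_2) \leq s$, the whole estimate collapses to
\[
   \Tcost_T(m; n_1, n_2, b)
      = O\!\left(\Mcostinc_M(m) + m \lg^{\circ (\ell-1)} s\right).
\]
Finally, I would check that $\lg s = \lg \lg r$ (for $r$ large enough that $\lg \lg r \geq 1$; the edge cases fold into the implicit constant), so that $\lg^{\circ (\ell-1)} s = \lg^{\circ (\ell-2)}(\lg \lg r) = \lg^{\circ \ell} r$, completing the induction.

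There is no serious obstacle here: the argument is essentially a bookkeeping exercise once Propositions \ref{prop:main-dyadic} and \ref{prop:one-step} are in hand. The one place that requires care is the passage from $\Mcost_M$ to $\Mcostinc_M$, which is precisely the reason $\Mcostinc_M$ was defined in this section; the monotonicity of $\Mcostinc_M(m)/m$ is what lets us absorb both the term $\frac{m}{s}\Mcost_M(s)$ and the scaled inductive contribution $\frac{n_1 n_2}{n'_1 n'_2}\Mcostinc_M(m')$ into a single $O(\Mcostinc_M(m))$.
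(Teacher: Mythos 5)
Your proposal is correct and follows essentially the same approach as the paper: the paper constructs the same chain of machines by iterating Proposition \ref{prop:one-step} $\ell-1$ times starting from $T^*$ and then unrolls the recursion explicitly, whereas you package the unrolling as an induction on $\ell$, with Proposition \ref{prop:main-dyadic} as the base case. The key ingredients — the identity $\lg s = \lg\lg\max(n_1,n_2)$ that peels one layer off $\lg^{\circ\ell}$, the bound $s, m' \leq m$, and the monotonicity of $\Mcostinc_M(m)/m$ to absorb both the $\frac{m}{s}\Mcost_M(s)$ overhead and the scaled inductive contribution — are exactly those used in the paper's proof (see \eqref{eq:logmax} and \eqref{eq:second-term-bound}).
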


\newcommand{\fixalign}[1]{#1_{\phantom{0}}}

\begin{proof}
   The idea is to chain together $\ell-1$ invocations of
   Proposition \ref{prop:one-step}.
   We start by defining $T_\ell \coloneqq T^*$.
   Then for $j = \ell-1, \ldots, 1$ in turn,
   we apply Proposition \ref{prop:one-step} with $T' \coloneqq T_{j+1}$
   and $M' \coloneqq M$ to obtain a new machine $T_j$.
   We claim that the final top-level machine $T_1$
   satisfies the desired bound
   \[
      \Tcost_{T_1}(m; n_1, n_2, b)
         = O(\Mcostinc_M(m) + m \lg^{\circ \ell} \max(n_1, n_2)).
   \]
   
   To prove this,
   let $n_1, n_2, b \in 2^\NN$ and $m \geq 1$ be the input parameters
   for $T_1$, with $n_1 n_2 b \leq m$.
   For $j = 1, \ldots, \ell$, let $n_1^{(j)}$, $n_2^{(j)}$,
   $\fixalign{m}^{(j)}$ be the parameters subsequently passed
   to $T_j$ as we proceed down the call chain;
   in particular, $n_i^{(1)} = n_i$.
   For $j = 1, \ldots, \ell-1$ let $s^{(j)}$ be the
   corresponding value of $s$ as in the statement of
   Proposition \ref{prop:one-step}.

   Let us show that
   \begin{equation}
      \label{eq:logmax}
      \lg \max(n_1^{(j)}, n_2^{(j)}) = \lg^{\circ j} \max(n_1, n_2),
         \qquad j = 1, \ldots, \ell.
   \end{equation}
   Suppose first that $n_1 \geq n_2$.
   It follows that $n_1^{(j)} \geq n_2^{(j)}$ for all $j = 1, \ldots, \ell$,
   since by induction we have
   $n_1^{(j+1)} = \min(n_1^{(j)}, \fixalign{s}^{(j)})
   \geq \min(n_2^{(j)}, \fixalign{s}^{(j)}) = n_2^{(j+1)}$.
   Hence
   \[
      s^{(j)} = 2^{\lg \lg \max(n_1^{(j)}, n_2^{(j)})}
         = 2^{\lg \lg n_1^{(j)}}
   \]
   and
   \[
      \lg n_1^{(j+1)} = \lg \min(n_1^{(j)}, \fixalign{s}^{(j)})
         = \min(\lg n_1^{(j)}, \lg \lg n_1^{(j)}) = \lg \lg n_1^{(j)}
   \]
   for $j = 1, \ldots, \ell-1$.
   By induction $\lg n_1^{(j)} = \lg^{\circ j} n_1$
   for $j = 1, \ldots, \ell$,
   so \eqref{eq:logmax} holds.
   If $n_2 \geq n_1$, a symmetrical argument applies.
      
   Now we estimate the complexity.
   For each $j = 1, \ldots, \ell-1$ we have
   \begin{multline*}
      \Tcost_{T_j}\left(\fixalign{m}^{(j)}; n_1^{(j)}, n_2^{(j)}, b\right) <
         \frac{n_1^{(j)} n_2^{(j)}}{n_1^{(j+1)} n_2^{(j+1)}}
         \Tcost_{T_{j+1}}\left(\fixalign{m}^{(j+1)};
            n_1^{(j+1)}, n_2^{(j+1)}, b\right) \\
         + O\left(\Mcost_M(m^{(j)}) +
            \frac{m^{(j)}}{s^{(j)}} \Mcost_M(s^{(j)})\right).
   \end{multline*}
   Combining these inequalities, we obtain
   \begin{multline*}
      \Tcost_{T_1}(m; n_1, n_2, b) <
         \frac{n_1 n_2}{n_1^{(\ell)} n_2^{(\ell)}}
         \Tcost_{T_\ell}\left(\fixalign{m}^{(\ell)};
                              n_1^{(\ell)}, n_2^{(\ell)}, b\right) \\
         + \sum_{j=1}^{\ell-1} O\left(\frac{n_1 n_2}{n_1^{(j)} n_2^{(j)}}
            \left( \Mcost_M(m^{(j)}) +
               \frac{m^{(j)}}{s^{(j)}} \Mcost_M(s^{(j)})\right)\right).
   \end{multline*}
   Since $T_\ell = T^*$, by \eqref{eq:logmax} the first term becomes
   \begin{align*}
      \frac{n_1 n_2}{n_1^{(\ell)} n_2^{(\ell)}} \cdot
               O\left( n_1^{(\ell)} n_2^{(\ell)} b
                  \lg \min(n_1^{(\ell)}, n_2^{(\ell)}) \right)
         & = O(m \lg \max(n_1^{(\ell)}, n_2^{(\ell)})) \\
         & = O(m \lg^{\circ \ell} \max(n_1, n_2)).
   \end{align*}

   For the second term, first observe that
   \[
      \frac{n_1 n_2}{n_1^{(j)} n_2^{(j)}} \leq \frac{m}{m^{(j)}},
      \qquad j = 1, \ldots, \ell-1.
   \]
   Indeed, for $j = 1$ the inequality holds as both sides are equal to $1$,
   and for $j > 1$ it holds as $\fixalign{m}^{(j)} = n_1^{(j)} n_2^{(j)} b$
   and $n_1 n_2 b \leq m$.
   Therefore the second term becomes
   \begin{equation}
      \label{eq:second-term-bound}
      \sum_{j=1}^{\ell-1} O\left( m \frac{\Mcost_M(m^{(j)})}{m^{(j)}}
         + m \frac{\Mcost_M(s^{(j)})}{s^{(j)}} \right).
   \end{equation}

   We now claim that $m^{(j)} \leq m$ and $s^{(j)} \leq m$ for all $j < \ell$.
   The first inequality is trivial for $j = 1$,
   and for $j > 1$ we have
   $\fixalign{m}^{(j)} = n_1^{(j)} n_2^{(j)} b \leq n_1 n_2 b \leq m$
   since clearly $n_i^{(j)} \leq n_i$.
   For the second inequality, since $2^{\lg \lg n} \leq n$ for all $n \geq 2$,
   \[
      s^{(j)} = 2^{\lg \lg \max(n_1^{(j)}, n_2^{(j)})} \leq 2^{\lg \lg \max(n_1, n_2)} \leq 2^{\lg \lg(n_1 n_2)} \leq n_1 n_2 \leq m
   \]
   for $j < \ell$.
   (Strictly speaking this argument only works for $n_1 n_2 \geq 2$;
   we ignore the trivial case $n_1 = n_2 = 1$.)

   Therefore \eqref{eq:second-term-bound} becomes
   $\sum_{j=1}^{\ell-1} O(\Mcostinc_M(m))$,
   and since $\ell$ is fixed,
   this simplifies further to $O(\Mcostinc_M(m))$.
   We conclude that
   \[
      \Tcost_{T_1}(m; n_1, n_2, b)
         = O(\Mcostinc_M(m) + m \lg^{\circ \ell} \max(n_1, n_2)). \qedhere
   \]
\end{proof}

Again it is easy to drop the dyadic restriction:
\begin{cor}
   \label{cor:main-ell-drop-dyadic}
   Fix an integer $\ell \geq 2$,
   and let $M$ be a multiplication machine.
   Then there exists an (ordinary, non-dyadic)
   transposition machine $T$ such that
   \[
      \Tcost_T(m; n_1, n_2, b)
         = O(\Mcostinc_M(m) + m \lg^{\circ \ell} \max(n_1, n_2)).
   \]
\end{cor}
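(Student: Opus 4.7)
The plan is to follow the template of Corollary \ref{cor:main-drop-dyadic}, which removed the dyadic restriction from Proposition \ref{prop:main-dyadic}. Given input parameters $n_1, n_2, b, m \geq 1$ with $n_1 n_2 b \leq m$, I would first round each of $n_1, n_2, b$ up to the next power of two, obtaining $\tilde n_1, \tilde n_2, \tilde b$ with $\tilde n_i \leq 2 n_i$ and $\tilde b \leq 2b$. Setting $\tilde m \coloneqq 8m$ guarantees $\tilde n_1 \tilde n_2 \tilde b \leq \tilde m$. I would zero-pad the input into an $\tilde n_1 \times \tilde n_2$ matrix with $\tilde b$-bit entries in time $O(m)$, invoke the dyadic machine of Proposition \ref{prop:main-ell-dyadic} on the padded array, and then copy the result into the desired $n_2 \times n_1$ output layout in time $O(m)$.

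Proposition \ref{prop:main-ell-dyadic} gives a cost of $O(\Mcostinc_M(\tilde m) + \tilde m \lg^{\circ \ell} \max(\tilde n_1, \tilde n_2))$ for the padded problem. For the iterated-logarithm term, since $\tilde n_i \leq 2 n_i$, a short induction on $\ell$ yields $\lg^{\circ \ell} \max(\tilde n_1, \tilde n_2) \leq \lg^{\circ \ell} \max(n_1, n_2) + O(1)$, so this contribution simplifies to $O(m \lg^{\circ \ell} \max(n_1, n_2)) + O(m)$, and the stray $O(m)$ is absorbed into $\Mcostinc_M(m)$ because $\Mcostinc_M(m) \geq \Mcost_M(m) \geq m$.

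The step that requires the most care is converting $\Mcostinc_M(\tilde m) = \Mcostinc_M(8m)$ into $O(\Mcostinc_M(m))$. As in Corollary \ref{cor:main-drop-dyadic}, I would apply Lemma \ref{lem:chunks} to route every call to $M$ through a wrapper that splits any input of size $k$ with $m < k \leq 8m$ into a bounded number of chunks of size $m$; each such call then costs $O(\Mcost_M(m))$ instead of $\Mcost_M(k)$. After this substitution, the effective ratio $\mathsf{M}/k$ for every relevant input size $k \leq 8m$ is bounded by $O(\Mcost_M(m)/m) \leq O(\Mcostinc_M(m)/m)$, and multiplying through by $\tilde m = 8m$ gives $O(\Mcostinc_M(m))$ as required. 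This is marginally more delicate than in Corollary \ref{cor:main-drop-dyadic} because the calls to $M$ inside the recursive construction of Proposition \ref{prop:main-ell-dyadic} occur at several different sizes $m^{(j)}, s^{(j)}$ rather than at a single top-level size, but the chunking argument applies uniformly to each of them and so the principle is the same. Combining the two bounds yields $\Tcost_T(m; n_1, n_2, b) = O(\Mcostinc_M(m) + m \lg^{\circ \ell} \max(n_1, n_2))$, as desired.
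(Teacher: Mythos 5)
Your proof is correct and follows essentially the same route as the paper: round up to powers of two, zero-pad, invoke Proposition \ref{prop:main-ell-dyadic}, and use Lemma \ref{lem:chunks} to reduce all internal calls to $M$ of size greater than $m$ to $O(1)$ calls of size $m$. The paper's proof is terser (it mostly references Corollary \ref{cor:main-drop-dyadic} and notes, as you do, that the top-level size $m$ must be threaded through the recursive call chain so the chunking can be applied at every level), but the substance is identical.
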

\begin{proof}
   As in the proof of Corollary \ref{cor:main-drop-dyadic},
   we use a zero-padding strategy to reduce to the dyadic case
   (Proposition \ref{prop:main-ell-dyadic}),
   and then we use Lemma \ref{lem:chunks} to replace every call to $M$
   with size parameter $\geq m$ by $O(1)$ calls with size $m$.
   (To be completely pedantic,
   this requires modifying the transposition machine interface slightly
   to enable the top-level parameter $m$ to be passed down the call chain,
   at least for the top few levels.
   This does not affect the complexity.)
\end{proof}
\begin{proof}[Proof of Theorem \ref{thm:main-ell-implication}]
   Suppose that we are given a multiplication machine $M$
   such that $\Mcost_M(m) = O(m \lg^{\circ \ell} m)$.
   Then
   \begin{equation}
      \label{eq:Mcostinc-bound}
      \Mcostinc_M(m) = m \cdot \max_{k \leq m} \frac{\Mcost_M(k)}{k}
         = m \cdot \max_{k \leq m} O(\lg^{\circ \ell} k)
         = O(m \lg^{\circ \ell} m).
   \end{equation}
   Taking $n_1, n_2 \coloneqq \lfloor m^{1/2} \rfloor$
   and $b \coloneqq 1$ in Corollary \ref{cor:main-ell-drop-dyadic},
   we get a binary transposition machine $T$ satisfying
   $\Tcost_T(m) = O(m \lg^{\circ \ell} m)$.
   But this contradicts the hypothesis that no such machine exists.
\end{proof}

\subsection{Proof of Theorem \ref{thm:main-recursive-implication}}

The following reduction may be regarded as
a more aggressive version of Proposition \ref{prop:main-ell-dyadic}.
\begin{prop}
   \label{prop:main-recursive-dyadic}
   Let $M$ be a multiplication machine.
   Then there exists a dyadic transposition machine $T$ such that
   \[
      \Tcost_T(m; n_1, n_2, b) = O( \ell_0 \Mcostinc(m)), \qquad
      \ell_0 \coloneqq \max(1, \lg^* \max(n_1, n_2) - \lg^* b).
   \]
\end{prop}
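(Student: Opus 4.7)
The plan is to iterate the construction of Proposition \ref{prop:main-ell-dyadic} to an input-dependent depth $\ell_0$, replacing the base case $T^*$ by a direct reduction to multiplication via Theorem \ref{thm:main-medium}. We keep applying Proposition \ref{prop:one-step} until the inner dimensions become small enough relative to $b$ for Theorem \ref{thm:main-medium} to apply directly, and then handle that innermost transposition by a single further reduction to multiplication.

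To implement this, I would define a chain of dyadic transposition machines as follows: let $T_0$ be a machine that invokes Theorem \ref{thm:main-medium} (with $M' \coloneqq M$) whenever the hypotheses \eqref{eq:main-medium-hypothesis} are met, and falls back on a direct method for the remaining small-parameter corner cases; for $j \geq 1$, let $T_j$ be obtained from $T_{j-1}$ via Proposition \ref{prop:one-step} with $T' \coloneqq T_{j-1}$ and $M' \coloneqq M$. On input $(n_1, n_2, b, m)$ the top-level machine invokes $T_{\ell_0}$. The same dimension-tracking as in the proof of Proposition \ref{prop:main-ell-dyadic} shows that each application of Proposition \ref{prop:one-step} applies one more $\lg$ to the log of the dimensions; by the tower-function characterisation of $\lg^*$, the choice $\ell_0 \coloneqq \max(1, \lg^* \max(n_1, n_2) - \lg^* b)$ then ensures that at the base case $\lg(n_1^{(\ell_0+1)} n_2^{(\ell_0+1)}) < Cb$, so that the first hypothesis of Theorem \ref{thm:main-medium} holds.

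For the complexity, each of the $\ell_0$ outer applications of Proposition \ref{prop:one-step} contributes $O(\Mcostinc_M(m))$ by exactly the analysis of the sum \eqref{eq:second-term-bound} in the proof of Proposition \ref{prop:main-ell-dyadic} (using that $m^{(j)}, s^{(j)} \leq m$ throughout), giving $O(\ell_0 \Mcostinc_M(m))$ in total. The innermost call to Theorem \ref{thm:main-medium} contributes $O(\Mcost_M(m^{(\ell_0+1)}) + n_1^{(\ell_0+1)} n_2^{(\ell_0+1)} \Mcost_M(b))$ per invocation; scaling by the $n_1 n_2 / (n_1^{(\ell_0+1)} n_2^{(\ell_0+1)}) \leq m/m^{(\ell_0+1)}$ parallel copies implicit in the outer chain, and invoking monotonicity of $\Mcostinc_M(m)/m$ (for the first summand directly, and for the second via $\Mcost_M(b)/b \leq \Mcostinc_M(m)/m$ since $b \leq m$), this collapses to a further $O(\Mcostinc_M(m))$ that is absorbed into the main bound.

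The main obstacle is the bookkeeping around $\ell_0$ and the verification of all three hypotheses in \eqref{eq:main-medium-hypothesis} at the base case. The first, $\lg(n_1 n_2) < Cb$, follows from the tower analysis above. The second, $\lg^2 b < C' l_1 l_2 n_1 n_2$, is delicate because by the base level the parallelism parameters passed to $T_0$ are $l_1 = l_2 = 1$, so one needs $\lg^2 b \leq C' n_1^{(\ell_0+1)} n_2^{(\ell_0+1)}$; this can fail for extremely small inner dimensions, which must then be handled by a trivial direct method rather than Theorem \ref{thm:main-medium}. The $\max(1, \cdot)$ guard in the definition of $\ell_0$ similarly handles the boundary case where the original input is already in the range of Theorem \ref{thm:main-medium} but one outer iteration is nonetheless used for uniformity.
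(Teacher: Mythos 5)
Your overall structure --- iterate Proposition \ref{prop:one-step} roughly $\ell_0$ times and finish by reducing to Theorem \ref{thm:main-medium} --- is the same as the paper's, and the accounting of the outer layers via \eqref{eq:second-term-bound} is correct. However, there is a genuine gap at the base case. You have correctly identified that with parameters $(1, n_1, n_2, 1; b)$ the second hypothesis $\lg^2 b < C' n_1 n_2$ of \eqref{eq:main-medium-hypothesis} can fail (the base-case condition $\lg\max(n_1, n_2) \leq b$ still allows, say, $n_1 = n_2 = 2$ with $b$ arbitrarily large), but your proposed fallback to a ``trivial direct method'' does not give the required bound. When that hypothesis fails, all you know is $n_1 n_2 = O(\lg^2 b)$, so the base-case dimensions can still be as large as $\Theta(\lg b)$; any direct transposition such as $T^*$ then costs $O(n_1 n_2 b \lg\min(n_1, n_2)) = O(n_1 n_2 b \lg\lg b)$, which after summing over the parallel copies becomes $O(m \lg\lg b) = O(m \lg\lg m)$ and is \emph{not} $O(\Mcostinc_M(m))$ in general (for instance if $\Mcost_M(m) = O(m)$, so that $\Mcostinc_M(m) = O(m)$).

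The paper resolves this with a change of representation that you have missed. At the base case (where $\lg\max(n_1,n_2) \leq b$) it sets $s \coloneqq 2^{\lg\lg\max(n_1,n_2)}$, observes that $s \divides b$ because both are powers of two and $\lg s \leq \lg b$, and reinterprets each $b$-bit entry as a length-$(b/s)$ array of $s$-bit entries, thereby passing a problem of type $(1, n_1, n_2, b/s; s)$ to Theorem \ref{thm:main-medium}. With these parameters the problematic hypothesis becomes $\lg^2 s < C' n_1 n_2 b / s$, and this holds for \emph{every} input reaching the base case, by the same estimate $s \lg^2 s \leq (2\lg r)(\lg\lg r)^2 < C' r$ already used in the proof of Proposition \ref{prop:one-step}; no fallback is needed, and the resulting cost $O(\Mcost_M(m) + (m/s)\Mcost_M(s)) = O(\Mcostinc_M(m))$ fits the budget. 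This chunking step is the key new idea in the base case, it is the reason the proposition insists that $b$ be a power of two, and it is exactly the step you would need to fill in to make your argument go through.
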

Comparing with Proposition \ref{prop:main-ell-dyadic},
we see that the error term has been completely eliminated,
but we pay with a factor of $\ell_0$ in the main term.

\begin{proof}
   The argument is similar to the proof of Proposition \ref{prop:main-ell-dyadic},
   but instead of constructing a sequence of machines,
   we must construct a single machine that calls itself recursively,
   where the number of recursion levels depends on the relative
   sizes of $\max(n_1, n_2)$ and $b$.
   
   \medskip
   \emph{The base case.}
   We first describe a machine $T'$ that handles
   the base case of the recursion.
   Instead of calling the folklore algorithm $T^*$
   as was done in the proof of Proposition \ref{prop:main-ell-dyadic},
   we will construct $T'$ via Theorem \ref{thm:main-medium}.
   The machine $T'$ has the same interface as a dyadic transposition machine,
   i.e., the input consists of integers $n_1, n_2, b \in 2^\NN$
   and $m \geq 1$ such that $n_1 n_2 b \leq m$,
   but we impose the additional constraint that
   \[
      \lg \max(n_1, n_2) \leq b.
   \]
   
   Let $s \coloneqq 2^{\lg \lg \max(n_1, n_2)}$.
   Then $s$ is a power of two and $\lg s \leq \lg b$.
   Since $b$ is also a power of two, this implies that $s \divides b$.
   We perform the transposition by invoking Theorem \ref{thm:main-medium}
   with $M' \coloneqq M$ and with parameters
   \[
      (l_1, n_1, n_2, l_2, b)
         \coloneqq \left(1, n_1, n_2, \frac{b}{s}, s\right),
   \]
   i.e., we reinterpret the $n_1 \times n_2$ input matrix
   with $b$-bit entries as an array of size
   $1 \times n_1 \times n_2 \times (b/s)$ with $s$-bit entries.
   Let us verify the hypotheses in \eqref{eq:main-medium-hypothesis}.
   The first inequality states that $\lg(n_1 n_2) < Cs$ for suitable $C > 0$;
   this holds as $\lg(n_1 n_2) \leq 2 \lg \max(n_1, n_2) \leq 2s$.
   For the second inequality we must prove that
   $\lg^2 s < C' n_1 n_2 b / s$;
   this follows by a similar argument to step (1) in the proof
   of Proposition \ref{prop:one-step}.
   The cost of this transposition is thus
   \[
      O\left(\Mcost_M(m) + n_1 n_2 \frac{b}{s} \Mcost_M(s)\right)
      = O\left(\Mcost_M(m) + \frac{m}{s} \Mcost_M(s)\right).
   \]
   Since $s \leq b \leq m$, this bound simplifies to $O(\Mcostinc_M(m))$.

   (We remark that it would not have worked to invoke
   Theorem \ref{thm:main-medium} directly for the original
   problem of type $(1, n_1, n_2, 1; b)$,
   because the second inequality in \eqref{eq:main-medium-hypothesis}
   might fail if $b$ is much larger than $n_1$ and $n_2$.
   This is the reason for introducing the auxiliary parameter $s$,
   and indirectly the reason for requiring $b$ to be a power of two.)
   
   \medskip
   \emph{The recursive case.}
   We now describe the main dyadic transposition machine $T$,
   which takes as input any $n_1, n_2, b \in 2^\NN$ and $m \geq 1$
   such that $n_1 n_2 b \leq m$.

   If $\lg \max(n_1, n_2) > b$,
   we use the same strategy as in Proposition \ref{prop:one-step}
   (taking $M' \coloneqq M$),
   i.e., we set
   \[
      s \coloneqq 2^{\lg \lg \max(n_1, n_2)}, \qquad
      n'_i \coloneqq \min(n_i, s), \qquad m' \coloneqq n'_1 n'_2 b,
   \]
   and decompose into $n_1 n_2 / n'_1 n'_2$ transpositions of size
   $n'_1 \times n'_2$, plus additional work of cost
   \[
      O\left(\Mcost_M(m) + \frac{m}{s} \Mcost_M(s)\right).
   \]
   We then call $T$ recursively for each
   small $n'_1 \times n'_2$ subproblem.
   The recursion continues until we encounter
   $\lg \max(n_1, n_2) \leq b$, at which point we finally call
   the base case machine $T'$.
   
   Let us estimate the number of recursive calls,
   and in particular show that this quantity is finite.
   For $j = 1, 2, \ldots$,
   let $n_1^{(j)}$, $n_2^{(j)}$ and $\fixalign{m}^{(j)}$
   denote the values of $n_1$, $n_2$ and $m$ passed to
   the $j$-th recursive call,
   where $j = 1$ corresponds to the initial call.
   As shown in the proof of Proposition \ref{prop:main-ell-dyadic}
   (see \eqref{eq:logmax}), we have
   \[
      \lg \max(n_1^{(j)}, n_2^{(j)}) = \lg^{\circ j} \max(n_1, n_2)
   \]
   for each $j$.
   The sequence $\lg^{\circ j} \max(n_1, n_2)$ is strictly decreasing,
   so we must eventually reach $\lg \max(n_1^{(j)}, n_2^{(j)}) \leq b$.
   Let $\ell$ be the smallest $j$ for which this occurs,
   i.e., the number of recursive calls (including the initial call).

   We claim that
   \begin{equation}
      \label{eq:ell-bound}
      \ell \leq \max(1, \lg^* \max(n_1, n_2) - \lg^* b + 1).
   \end{equation}
   This clearly holds if $\ell = 1$.
   Suppose now that $\ell \geq 2$.
   Then
   \[
      \lg^{\circ(\ell-1)} \max(n_1, n_2) > b.
   \]
   If $b = 1$, this implies that $\lg^* \max(n_1, n_2) > \ell - 1$,
   and $\lg^* b = 0$, so \eqref{eq:ell-bound} holds.
   If $b \geq 2$, then $\lg^* b \geq 1$,
   so applying $\lg^{\circ((\lg^* b) - 1)}$ to both sides,
   \[
      \lg^{\circ(\ell + (\lg^* b) - 2)} \max(n_1, n_2)
         \geq \lg^{\circ((\lg^* b) - 1)} b > 1.
   \]
   This implies that $\lg^* \max(n_1, n_2) > \ell + \lg^* b - 2$,
   i.e., $\ell \leq \lg^* \max(n_1, n_2) - \lg^* b + 1$,
   so again \eqref{eq:ell-bound} holds.

   The same complexity argument as in the proof of
   Proposition \ref{prop:main-ell-dyadic} now shows that
   \[
      \Tcost_T(m; n_1, n_2, b) < \frac{n_1 n_2}{n_1^{(\ell)} n_2^{(\ell)}}
         \Tcost_{T'}(\fixalign{m}^{(\ell)}; n_1^{(\ell)}, n_2^{(\ell)}, b)
         + \sum_{j=1}^{\ell-1} O(\Mcostinc_M(m)).
   \]
   (Note here that the big-$O$ constant is the same at each
   recursion level.)
   We showed above that the cost of each call to the base case is
   $O(\Mcostinc_M(m^{(\ell)}))$,
   so the first term becomes
   \[
      O\left(\frac{n_1 n_2}{n_1^{(\ell)} n_2^{(\ell)}}
         \Mcostinc_M(m^{(\ell)})\right)
      = O\left(\frac{m}{m^{(\ell)}} \Mcostinc_M(m^{(\ell)})\right)
      = O(\Mcostinc_M(m)).
   \]
   Therefore we finally obtain
   \[
      \Tcost_T(m; n_1, n_2, b) =
         O(\Mcostinc_M(m)) + \sum_{j=1}^{\ell-1} O(\Mcostinc_M(m))
         = O(\ell \Mcostinc_M(m)).
      \qedhere
   \]
\end{proof}

As usual, we have the following non-dyadic version.
\begin{cor}
   \label{cor:main-recursive-drop-dyadic}
   Let $M$ be a multiplication machine.
   Then there exists an (ordinary, non-dyadic)
   transposition machine $T$ such that
   \[
      \Tcost_T(m; n_1, n_2, b) = O( \ell_0 \Mcostinc_M(m)), \qquad
      \ell_0 \coloneqq \max(1, \lg^* \max(n_1, n_2) - \lg^* b).
   \]
\end{cor}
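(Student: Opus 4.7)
The plan is to follow exactly the same zero-padding and chunking recipe used in the proofs of Corollary \ref{cor:main-drop-dyadic} and Corollary \ref{cor:main-ell-drop-dyadic}. Given arbitrary input parameters $n_1, n_2, b, m \geq 1$ with $n_1 n_2 b \leq m$, I would set
\[
   \tilde n_1 \coloneqq 2^{\lceil \log_2 n_1 \rceil}, \qquad
   \tilde n_2 \coloneqq 2^{\lceil \log_2 n_2 \rceil}, \qquad
   \tilde b \coloneqq 2^{\lceil \log_2 b \rceil},
\]
and $\tilde m \coloneqq 8 m$, so that $\tilde n_1 \tilde n_2 \tilde b \leq \tilde m$. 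The new transposition machine first copies the input matrix into an $\tilde n_1 \times \tilde n_2$ matrix with $\tilde b$-bit entries (by zero-padding) in time $O(m)$, invokes the dyadic machine of Proposition~\ref{prop:main-recursive-dyadic} on $(\tilde n_1, \tilde n_2, \tilde b, \tilde m)$, and finally copies the result back to an $n_2 \times n_1$ output matrix in time $O(m)$.

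The key bookkeeping step is verifying that the resulting complexity matches the stated $O(\ell_0 \Mcostinc_M(m))$ bound, with the $\ell_0$ computed from the \emph{original} parameters rather than the padded ones. Since $\lg \tilde n_i = \lg n_i$ and $\lg \tilde b = \lg b$ (by the definition of $\lg$ together with the fact that rounding up to a power of two does not change $\lceil \log_2 \cdot \rceil$), iterating $\lg$ shows that $\lg^* \tilde n_i = \lg^* n_i$ and $\lg^* \tilde b = \lg^* b$; hence the padded $\tilde \ell_0$ equals $\ell_0$.

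It remains to deal with the fact that Proposition~\ref{prop:main-recursive-dyadic} gives a bound involving $\Mcostinc_M(\tilde m) = \Mcostinc_M(8m)$, which is not automatically $O(\Mcostinc_M(m))$ for a black-box $M$. The remedy, as in the proof of Corollary~\ref{cor:main-ell-drop-dyadic}, is to apply Lemma~\ref{lem:chunks} to replace every internal call to $M$ with a size parameter in the range $(m, \tilde m]$ by $O(1)$ calls with size exactly $m$. After this modification every multiplication subcall uses a size $\leq m$, so the telescoping argument from Proposition~\ref{prop:main-recursive-dyadic} produces a bound of the form $\sum_j O(\Mcostinc_M(m))$ with $O(\ell_0)$ summands, yielding the claimed $O(\ell_0 \Mcostinc_M(m))$. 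Pedantically, this requires passing the top-level $m$ down through the (first few levels of the) recursion, exactly as remarked in the proof of Corollary~\ref{cor:main-ell-drop-dyadic}; this interface tweak does not affect the complexity.

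The main obstacle I anticipate is simply this last $\Mcostinc$ accounting: one has to check that the chunk-splitting trick is compatible with the recursive structure of Proposition~\ref{prop:main-recursive-dyadic} (in particular that it does not destroy the hypotheses needed by Theorem~\ref{thm:main-medium} in either the base case or the recursive case). Everything else is a direct imitation of the two earlier dyadic-removal corollaries.
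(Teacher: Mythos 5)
Your proof is correct and follows the paper's own argument essentially verbatim: round $n_1, n_2, b$ up to powers of two, zero-pad, invoke Proposition~\ref{prop:main-recursive-dyadic}, observe that $\lg^* \tilde n_i = \lg^* n_i$ and $\lg^* \tilde b = \lg^* b$ so the padded $\tilde\ell_0$ is $O(\ell_0)$, and use Lemma~\ref{lem:chunks} to turn $\Mcostinc_M(\tilde m)$ into $\Mcostinc_M(m)$. The concern you raise about compatibility of chunk-splitting with the recursion is exactly the ``pedantic'' caveat the paper already handles in the proof of Corollary~\ref{cor:main-ell-drop-dyadic}, and your treatment of it matches the paper's.
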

\begin{proof}
   Applying the rounding and zero-padding strategy
   of Corollary \ref{cor:main-drop-dyadic},
   we obtain the complexity bound $O(\tilde\ell \Mcostinc_M(\tilde m))$,
   where by \eqref{eq:ell-bound} we have
   \[
      \tilde\ell \leq \max(1,
         \lg^* \max(\tilde n_1, \tilde n_2) - \lg^* \tilde b + 1).
   \]
   Clearly $\lg^* \tilde n_i = \lg^* n_i$ and $\lg^* \tilde b = \lg^* b$,
   so $\tilde \ell = O(\ell_0)$ where $\ell_0$ is defined as in
   the statement of Proposition \ref{prop:main-recursive-dyadic}.
   Finally, modifying the machine via Lemma \ref{lem:chunks}
   in the usual way,
   we may replace $\Mcostinc_M(\tilde m)$ by $\Mcostinc_M(m)$,
   to obtain the desired bound $O(\ell_0 \Mcostinc_M(m))$.
\end{proof}

\begin{proof}[Proof of Theorem \ref{thm:main-recursive-implication}]
   Let $f(m)$ be a non-decreasing function,
   and suppose that there exists a multiplication machine $M$ satisfying
   $\Mcost_M(m) = O(m f(m))$.
   Then
   \[
      \Mcostinc_M(m) = m \cdot \max_{k \leq m} \frac{\Mcost_M(k)}{k}
         = m \cdot \max_{k \leq m} O(f(k))
         = O(m f(m)).
   \]
   Applying Corollary \ref{cor:main-recursive-drop-dyadic} with
   $b \coloneqq 1$ and $n_1, n_2 \coloneqq \lfloor m^{1/2} \rfloor$,
   we obtain a binary transposition machine $T$ such that
   \[
      \Tcost_T(m)
         = O( \lg^*(\lfloor m^{1/2} \rfloor) \Mcostinc_M(m) )
         = O(m f(m) \lg^* m).
   \]
   But this contradicts the hypothesis that no such machine exists.
\end{proof}

\begin{rem}
   Using the $\Mcostinc_M(m)$ notation,
   it is possible to express Theorem \ref{thm:main-recursive-implication}
   directly in the form of a lower bound.
   Namely, the theorem is equivalent to the following statement:
   for any multiplication machine~$M$,
   there exists a binary transposition machine $T$ such that
   \[
      \Mcostinc_M(m)
         = \Omega\bigg(\frac{\Tcost_T(m)}{\lg^* m}\bigg).
   \]
\end{rem}

\subsection{Transposition with larger coefficients}
\label{sec:larger-coeffs}

We mentioned in Section \ref{sec:summary} that our methods are not quite
strong enough to prove the following statement:
if the binary transposition problem cannot be solved in linear time,
then multiplication cannot be carried out in linear time.
To make progress towards this goal,
with our current methods it appears to be necessary to abandon the binary
case and work instead with larger coefficients.
In this section we briefly outline a result in this direction.

Fix an integer $\ell \geq 1$,
and for any $m \geq 1$ consider the transposition problem with parameters
\begin{equation}
   \label{eq:log-machine-parameters}
   b \coloneqq \lg^{\circ \ell} m,
      \qquad n_1, n_2 = n \coloneqq \lfloor (m / b)^{1/2} \rfloor.
\end{equation}
A machine~$T$ performing this type of transposition will be called an
\emph{$\ell$-logarithmic transposition machine}.
We denote its worst-case running time by $\Tcost^\ell_T(m)$.

\begin{thm}
   \label{thm:log-implication}
   Fix $\ell \geq 1$ and a non-decreasing function $f(m)$.
   If no $\ell$-logarithmic transposition machine achieves
   $\Tcost^\ell(m) = O(m f(m))$,
   then no multiplication machine achieves $\Mcost(m) = O(m f(m))$.
\end{thm}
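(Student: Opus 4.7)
The plan is to prove the contrapositive: starting from a multiplication machine $M$ with $\Mcost_M(m) = O(m f(m))$, we will construct an $\ell$-logarithmic transposition machine $T$ satisfying $\Tcost^\ell_T(m) = O(m f(m))$, contradicting the hypothesis.

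Since $f$ is non-decreasing, the same argument as in \eqref{eq:Mcostinc-bound} shows that $\Mcostinc_M(m) = O(m f(m))$. We then invoke Corollary~\ref{cor:main-recursive-drop-dyadic} with the specific parameters
\[
   b \coloneqq \lg^{\circ \ell} m, \qquad
   n_1 = n_2 \coloneqq \lfloor (m/b)^{1/2} \rfloor,
\]
which satisfy $n_1 n_2 b \leq m$ by construction. The corollary yields an ordinary transposition machine whose running time on these parameters is bounded by $O(\ell_0 \Mcostinc_M(m))$, where $\ell_0 = \max(1, \lg^* \max(n_1, n_2) - \lg^* b)$. Wrapping this into a top-level machine that computes $n_1, n_2, b$ from $m$ and dispatches to the above yields an $\ell$-logarithmic transposition machine $T$ with $\Tcost^\ell_T(m) = O(\ell_0 \Mcostinc_M(m))$.

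The crucial verification is that $\ell_0$ is a fixed constant. On the one hand $\max(n_1, n_2) \leq m$, so $\lg^* \max(n_1, n_2) \leq \lg^* m$. On the other hand, repeated application of the identity $\lg^*(\lg k) = \lg^* k - 1$ (valid for $k \geq 2$) gives $\lg^* b = \lg^*(\lg^{\circ \ell} m) = \lg^* m - \ell$ for all sufficiently large $m$, whence $\ell_0 \leq \ell$. Combining, $\Tcost^\ell_T(m) = O(\ell \cdot \Mcostinc_M(m)) = O(m f(m))$, as required, and the finitely many small $m$ for which $\lg^{\circ \ell} m = 1$ contribute only a constant that is absorbed into the big-$O$.

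No real obstacle arises here: the hard work has already been carried out in Proposition~\ref{prop:main-recursive-dyadic}, and the role of Theorem~\ref{thm:log-implication} is simply to observe that by letting the coefficient size grow like $b = \lg^{\circ \ell} m$, the recursion depth $\ell_0$ in that proposition stays bounded by a constant as $m \to \infty$. This is exactly why one gains the extra factor of $\lg^* m$ in Theorem~\ref{thm:main-recursive-implication} (the binary case $b = 1$) but not here.
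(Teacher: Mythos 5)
Your proof is correct and follows essentially the same route as the paper: invoke Corollary~\ref{cor:main-recursive-drop-dyadic} with the $\ell$-logarithmic parameters, observe that $\ell_0 \leq \ell = O(1)$ since $b = \lg^{\circ\ell} m$, and conclude $\Tcost^\ell_T(m) = O(\Mcostinc_M(m)) = O(m f(m))$. The only cosmetic difference is that you derive $\lg^* b = \lg^* m - \ell$ as an equality for large $m$ while the paper uses the inequality $\lg^* m \leq \lg^* b + \ell$ valid for all $m$, but both reach the same bound $\ell_0 \leq \ell$.
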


For example,
if $\ell$-logarithmic transposition cannot be performed in linear time
--- even for one fixed value of $\ell$ ---
then multiplication cannot be carried out in linear time.
In this sense, Theorem \ref{thm:log-implication} is slightly
stronger than Theorem \ref{thm:main-recursive-implication},
although it does involve an arguably less natural transposition problem.

\begin{proof}
   Suppose that $M$ is a multiplication machine such that
   $\Mcost_M(m) = O(m f(m))$.
   Apply Corollary \ref{cor:main-recursive-drop-dyadic} with $n_1, n_2, b$
   defined as in \eqref{eq:log-machine-parameters}.
   This yields an $\ell$-logarithmic transposition machine $T$ such that
   $\Tcost^\ell_T(m) = O(\ell_0 \Mcostinc_M(m))$ where
   \[
      \ell_0 = \max(1, \lg^* \max(n_1, n_2) - \lg^* b)
         \leq \max(1, \lg^* m - \lg^* b).
   \]
   But since $b = \lg^{\circ \ell} m$ we clearly have
   $\lg^* m \leq \lg^* b + \ell$.
   Thus $\ell_0 \leq \ell = O(1)$, since $\ell$ was assumed fixed.
   The same calculation as in the proof
   of Theorem \ref{thm:main-recursive-implication}
   then shows that $\Tcost^\ell_T(m) = O(\Mcostinc_M(m)) = O(m f(m))$,
   contradicting the hypothesis.
\end{proof}

\section{Discussion and perspectives}

\subsection{Alternative complexity models}

It is interesting to ask what impact our reductions have
in complexity models other than the multitape Turing machine.

One model often discussed in the context of integer multiplication
is the \emph{Boolean circuit model} \cite[\S4.3]{Pap-complexity}.
The $O(n \log n)$ multiplication algorithm of \cite{HvdH-nlogn}
works in this model,
and it is reasonable to guess that this is optimal.
Unfortunately, matrix transposition is trivial (zero cost!) in this model,
so while our reductions still make sense here,
they are completely useless.

If we want to stay closer to the Turing machine paradigm,
one natural variant to consider is the
\emph{multitape Turing machine with $d$-dimensional tapes}
(for fixed~$d$) \cite{Hen-online,Reg-superlinear}.
While transposition is not free in such a model,
it can easily be carried out in linear time (provided that $d \geq 2$),
suggesting that our reductions are not especially interesting here either.

\subsection{Other permutations beyond transposition}

Matrix transposition may be regarded as a specific type of
permutation of the input.
It would be interesting to know which other permutations can be
reduced to integer multiplication.  For instance, what about
the permutation $(x_0, \ldots, x_{2^\ell-1}) \mapsto
(x_{\rho(0)}, \ldots, x_{\rho(2^\ell-1)})$, where $\rho(i)$
stands for the bit-reversal of $i$ as an $\ell$-bit integer?

\subsection{Alternative coefficient rings}

In the main reduction (Theorem \ref{thm:main-medium}),
we chose to work over the complex numbers.
It may be possible to work instead with DFTs over a finite field $\FF_q$.
This approach has a long history in the
context of integer multiplication \cite{Pol-ntt},
and has the advantage of avoiding the analysis of numerical error.
However, it introduces new technical difficulties,
such as the construction of field extensions containing
suitable roots of unity.
We have not checked the details.

\subsection{Polynomial multiplication over finite fields}

Let $\Mcost_q(n)$ denote the cost (in the Turing model)
of multiplying polynomials in $\FF_q[x]$ of degree~$n$.
By analogy with the integer case,
it is widely suspected that $\Mcost_q(n) = \Theta(b \log b)$
where $b \coloneqq n \log q$ is the total bit size.
Currently, the best unconditional upper bound is
$\Mcost_q(n) = O(b \log b \cdot 4^{\lg^* b})$
\cite{HvdH-ffmul-cyclotomic},
and there is even a conditional upper bound $\Mcost_q(n) = O(b \log b)$
assuming an unproved but plausible
number-theoretic hypothesis \cite{HvdH-ffnlogn}.

It is natural to ask whether the methods of this paper
generalise to this situation, i.e.,
can we reduce matrix transposition to polynomial multiplication over $\FF_q$?
If so, then lower bounds on transposition
may imply lower bounds for $\Mcost_q(n)$.

A rough analogy for the Boolean circuit model in this context would be
some variant of algebraic complexity,
such as counting the number of arithmetic operations in $\FF_q$,
without regard for memory access or locality.
Again, any analogue of our reduction would be useless in such a model,
as transposition is presumably free of charge.

\subsection{Performing small products in parallel}

Consider the following problem:
given integers $u_1, \ldots, u_n$ and $v_1, \ldots, v_n$ of bit size $p$,
compute the products $u_1 v_1$, \ldots, $u_n v_n$.
Is there a way to reduce this problem to a single integer multiplication
problem of size $O(np)$?
(Or $O(1)$ such problems?)
We do not know how to do this,
but it is not implausible that such a reduction exists.
If this reduction were possible,
then many of the results of this paper could likely be improved,
and the proofs would simplify considerably.
Namely, in Theorem \ref{thm:main-medium},
the second term $l_1 l_2 n_1 n_2 \Mcost(b)$ could
probably be absorbed into the $\Mcost(m)$ term,
by performing all the multiplications by Bluestein factors
and twiddle factors ``in parallel''.
This would likely imply that many of the $\Mcostinc(m)$ terms
in Section \ref{sec:proofs} could be replaced by simply $\Mcost(m)$,
which would in turn allow us to strengthen the statements of
Theorem \ref{thm:main-ell-implication}
and Theorem \ref{thm:main-recursive-implication}
into more direct lower bounds for $\Mcost(m)$.

\subsection{Small versus large coefficients}

Our methods are weaker than ideal when applied to matrices
with small coefficients, such as the $b = 1$ case
(compare Theorem \ref{thm:main-recursive-implication} with
Theorem \ref{thm:log-implication}).
On the other hand,
one feels intuitively that transposing a matrix with $b$-bit entries
should be $b$ times more expensive than transposing a matrix
of the same dimensions with $1$-bit entries.
In symbols, we might reasonably expect that
\[
   |\Tcost(n_1, n_2, b) - b \Tcost(n_1, n_2, 1)| = O(n_1 n_2 b).
\]
Unfortunately, we have not yet succeeded in designing
reductions between transpositions with $b$-bit and
$1$-bit entries, in either direction.  If we could establish that
$\Tcost(n_1, n_2, b) = \Omega(b \Tcost(n_1, n_2, 1))$,
then we could immediately improve our results for small coefficients
(such as Theorem \ref{thm:main-recursive-implication}) by leveraging the
known results for larger coefficients.

If we restrict attention to \emph{oblivious} Turing machines,
i.e., machines for which the sequence of tape movements
depends only on the size of the input,
then there is a straightforward reduction in one direction:
given a machine handling the $1$-bit case,
we can transpose a $b$-bit matrix by simulating $b$ copies
of the $1$-bit machine in parallel.
Hence, in this model $\Tcost(n_1, n_2, b) = O(b \Tcost(n_1, n_2, 1))$.
Unfortunately, this reduction goes in the wrong direction to be useful
in the way suggested in the previous paragraph.

\bibliographystyle{amsalpha}
\bibliography{transpose}

\newcommand{\etalchar}[1]{$^{#1}$}
\providecommand{\bysame}{\leavevmode\hbox to3em{\hrulefill}\thinspace}
\providecommand{\MR}{\relax\ifhmode\unskip\space\fi MR }
% \MRhref is called by the amsart/book/proc definition of \MR.
\providecommand{\MRhref}[2]{%
  \href{http://www.ams.org/mathscinet-getitem?mr=#1}{#2}
}
\providecommand{\href}[2]{#2}
\begin{thebibliography}{HvdHL16}

\bibitem[AFKL19]{afshani2019}
P.~Afshani, C.~B. Freksen, L.~Kamma, and K.~G. Larsen, \emph{{Lower bounds for
  multiplication via network coding}}, 46th International Colloquium on
  Automata, Languages, and Programming (ICALP 2019) (Dagstuhl, Germany)
  (Christel Baier, Ioannis Chatzigiannakis, Paola Flocchini, and Stefano
  Leonardi, eds.), Leibniz International Proceedings in Informatics (LIPIcs),
  vol. 132, Schloss Dagstuhl -- Leibniz-Zentrum f{\"u}r Informatik, 2019,
  pp.~10:1--10:12.

\bibitem[AHJ{\etalchar{+}}06]{AHJKL06}
M.~Adler, N.~J.~A. Harvey, K.~Jain, R.~Kleinberg, and A.~R. Lehman, \emph{On
  the capacity of information networks}, Proceedings of the Seventeenth Annual
  ACM-SIAM Symposium on Discrete Algorithms (USA), SODA '06, Society for
  Industrial and Applied Mathematics, 2006, pp.~241--250.

\bibitem[AHU75]{AHU-algorithms}
A.~V. Aho, J.~E. Hopcroft, and J.~D. Ullman, \emph{The {D}esign and {A}nalysis
  of {C}omputer {A}lgorithms}, Addison-Wesley Series in Computer Science and
  Information Processing, Addison-Wesley Publishing Co., Reading,
  Mass.-London-Amsterdam, 1975, Second printing. \MR{413592}

\bibitem[BB87]{BB-pi-agm}
J.~M. Borwein and P.~B. Borwein, \emph{Pi and the {AGM}}, Canadian Mathematical
  Society Series of Monographs and Advanced Texts, John Wiley \& Sons, Inc.,
  New York, 1987, A study in analytic number theory and computational
  complexity, A Wiley-Interscience Publication. \MR{877728}

\bibitem[BGS07]{BGS-recurrences}
A.~Bostan, P.~Gaudry, and {\'E}.~Schost, \emph{Linear recurrences with
  polynomial coefficients and application to integer factorization and
  {C}artier-{M}anin operator}, SIAM J. Comput. \textbf{36} (2007), no.~6,
  1777--1806. \MR{2299425 (2008a:11156)}

\bibitem[Blu70]{Blu-dft}
L.~I. Bluestein, \emph{A linear filtering approach to the computation of
  discrete {F}ourier transform}, IEEE Transactions on Audio and
  Electroacoustics \textbf{18} (1970), no.~4, 451--455.

\bibitem[BZ11]{BZ-mca}
R.~P. Brent and P.~Zimmermann, \emph{Modern {C}omputer {A}rithmetic}, Cambridge
  Monographs on Applied and Computational Mathematics, vol.~18, Cambridge
  University Press, Cambridge, 2011. \MR{2760886}

\bibitem[CA69]{CA-minimum}
S.~A. Cook and S.~O. Aanderaa, \emph{On the minimum computation time of
  functions}, Trans. Amer. Math. Soc. \textbf{142} (1969), 291--314.
  \MR{0249212}

\bibitem[CT65]{CT-fft}
J.~W. Cooley and J.~W. Tukey, \emph{An algorithm for the machine calculation of
  complex {F}ourier series}, Math. Comp. \textbf{19} (1965), 297--301.
  \MR{0178586}

\bibitem[DM93]{DM93-transposition}
M.~Dietzfelbinger and W.~Maass, \emph{The complexity of matrix transposition on
  one-tape off-line {Turing} machines with output tape}, Theoretical Computer
  Science \textbf{108} (1993), no.~2, 271--290.

\bibitem[Flo72]{Floyd-permuting}
R.~W. Floyd, \emph{Permuting information in idealized two-level storage},
  Complexity of Computer Computations: Proceedings of a Symposium on the
  Complexity of Computer Computations, Springer, 1972, pp.~105--109.

\bibitem[GG13]{vzGG-compalg3}
J.~von~zur Gathen and J.~Gerhard, \emph{Modern {C}omputer {A}lgebra}, 3rd ed.,
  Cambridge University Press, Cambridge, 2013. \MR{3087522}

\bibitem[Hen66]{Hen-online}
F.~C. Hennie, \emph{On-line {T}uring machine computations}, IEEE Transactions
  on Electronic Computers \textbf{EC-15} (1966), no.~1, 35--44.

\bibitem[Hoe14]{vdH:fastrelax}
J.~van~der Hoeven, \emph{Faster relaxed multiplication}, Proc. ISSAC '14 (Kobe,
  Japan), July 2014, pp.~405--412.

\bibitem[HvdH19]{HvdH-ffmul-cyclotomic}
D.~Harvey and J.~van~der Hoeven, \emph{Faster polynomial multiplication over
  finite fields using cyclotomic coefficient rings}, J. Complexity \textbf{54}
  (2019), 101404, 18. \MR{3983215}

\bibitem[HvdH21]{HvdH-nlogn}
\bysame, \emph{Integer multiplication in time {$O(n \log n)$}}, Ann. of Math.
  (2) \textbf{193} (2021), no.~2, 563--617. \MR{4224716}

\bibitem[HvdH22]{HvdH-ffnlogn}
\bysame, \emph{Polynomial multiplication over finite fields in time {$O(n\log
  n)$}}, J. ACM \textbf{69} (2022), no.~2, Art. 12, 40. \MR{4433320}

\bibitem[HvdHL16]{HvdHL-mul}
D.~Harvey, J.~van~der Hoeven, and G.~Lecerf, \emph{Even faster integer
  multiplication}, J. Complexity \textbf{36} (2016), 1--30. \MR{3530637}

\bibitem[Kir88]{Kirchherr-transposition}
W.~W. Kirchherr, \emph{Transposition of an $l \times l$ matrix requires $\omega
  (\log l)$ reversals on conservative {Turing} machines}, Information
  processing letters \textbf{28} (1988), no.~2, 55--59.

\bibitem[Knu98]{Knu-TAOCP3}
D.~E. Knuth, \emph{The {A}rt of {C}omputer {P}rogramming. {V}ol. 3},
  Addison-Wesley, Reading, MA, 1998, Sorting and searching, Second edition [of
  MR0445948]. \MR{3077154}

\bibitem[Mor73]{Morgenstern73}
J.~Morgenstern, \emph{Note on a lower bound on the linear complexity of the
  fast {Fourier} transform}, JACM \textbf{20} (1973), no.~2, 305--306.

\bibitem[Pap94]{Pap-complexity}
C.~H. Papadimitriou, \emph{Computational {C}omplexity}, Addison-Wesley
  Publishing Company, Reading, MA, 1994. \MR{1251285 (95f:68082)}

\bibitem[Pau73]{Paul-transpose}
W.~J. Paul, \emph{Optimale {A}lgorithmen zum {T}ransponieren quadratischer
  {M}atrizen}, Dritte {J}ahrestagung der {G}esellschaft f\"ur {I}nformatik
  ({H}amburg, 1973), Lecture Notes in Comput. Sci., vol. Vol. 1, Springer,
  Berlin-New York, 1973, pp.~72--80. \MR{375830}

\bibitem[PFM74]{PFM-overlap}
M.~S. Paterson, M.~J. Fischer, and A.~R. Meyer, \emph{An improved overlap
  argument for on-line multiplication}, Complexity of computation ({P}roc.
  {S}ympos., {N}ew {Y}ork, 1973), Amer. Math. Soc., Providence, R. I., 1974,
  pp.~97--111. SIAM--AMS Proc., Vol. VII. \MR{0423875}

\bibitem[Pol71]{Pol-ntt}
J.~M. Pollard, \emph{The fast {F}ourier transform in a finite field}, Math.
  Comp. \textbf{25} (1971), 365--374. \MR{0301966}

\bibitem[Rad68]{Rad-prime}
C.~M. Rader, \emph{Discrete {Fourier} transforms when the number of data
  samples is prime}, Proc. IEEE \textbf{56} (1968), no.~6, 1107--1108.

\bibitem[Reg95]{Reg-superlinear}
K.~W. Regan, \emph{On superlinear lower bounds in complexity theory},
  Proceedings of Structure in Complexity Theory. Tenth Annual IEEE Conference,
  IEEE, 1995, pp.~50--64.

\bibitem[Sch82]{Sch-numerical}
A.~Sch{\"o}nhage, \emph{Asymptotically fast algorithms for the numerical
  multiplication and division of polynomials with complex coefficients},
  Computer algebra ({M}arseille, 1982), Lecture Notes in Comput. Sci., vol.
  144, Springer, Berlin, 1982, pp.~3--15. \MR{680048 (83m:68064)}

\bibitem[SS71]{SS-multiply}
A.~Sch{\"o}nhage and V.~Strassen, \emph{Schnelle {M}ultiplikation grosser
  {Z}ahlen}, Computing (Arch. Elektron. Rechnen) \textbf{7} (1971), 281--292.
  \MR{0292344 (45 \#1431)}

\bibitem[Sto71]{Stone-shuffle}
H.~S. Stone, \emph{Parallel processing with the perfect shuffle}, IEEE
  Transactions on Computers \textbf{C-20} (1971), no.~2, 153--161.

\bibitem[Sto73]{Stoss-rangierkomplexitat}
H.-J. Sto{\ss}, \emph{Rangierkomplexit{\"a}t von {Permutationen}}, Acta
  Informatica \textbf{2} (1973), no.~1, 80--96.

\end{thebibliography}

\end{document}